\newcommand{\norm}[1]{\|#1\|}
\newcommand{\Tr}{\mathsf{Tr}}
\newcommand{\poly}{\mathsf{poly}}
\newcommand{\Var}{\mathsf{Var}}
\DeclareMathOperator*{\E}{\mathbb{E}}
\newcommand{\rank}{rank}
\newtheorem{theorem}{Theorem}[section]
\newtheorem{lemma}[theorem]{Lemma}
\newtheorem{definition}[theorem]{Definition}
\newtheorem{claim}[theorem]{Claim}
\newenvironment{proofof}[1]{\bigskip \noindent {\it Proof of #1.}\quad }
{\qed\par\vskip 4mm\par}
\newenvironment{claimproofof}[1]{\bigskip \noindent {\it Proof of #1.}\quad }
{\qed\par\vskip 4mm\par}
\title{Effective Resistances in Non-Expander Graphs}
\author{
    Dongrun Cai\\
    \texttt{cdr@mail.ustc.edu.cn}
    \and
    Xue Chen\\
    \texttt{xuechen1989@ustc.edu.cn}\\
    \and
    Pan Peng\thanks{Supported in part by NSFC grant 62272431 and ``the Fundamental Research Funds for the Central Universities''}\\
    \texttt{ppeng@ustc.edu.cn}
}
\date{School of Computer Science and Technology,\\
University of Science and Technology of China, China}
\begin{document}

\maketitle

\begin{abstract}
    Effective resistances are ubiquitous in graph algorithms and network analysis. For an undirected graph $G$, its effective resistance $R_G(s,t)$ between two vertices $s$ and $t$ is defined as the equivalent resistance between $s$ and $t$ if $G$ is thought of as an electrical network with unit resistance on each edge. If we use $L_G$ to denote the Laplacian matrix of $G$ and $L_G^{\dagger}$ to denote its pseudo-inverse, we have $R_G(s,t)=(\mathbf{1}_s-\mathbf{1}_t)^{\top} L^{\dagger} (\mathbf{1}_s-\mathbf{1}_t)$ such that classical Laplacian solvers \cite{SpielmanT14} provide almost-linear time algorithms to approximate $R_G(s,t)$.
    
    In this work, we study \emph{sublinear} time algorithms to approximate the effective resistance of an \emph{adjacent pair} $s$ and $t$. We consider the classical adjacency list model \cite{ron2019sublinear} for local algorithms. While recent works \cite{andoni2018solving,peng2021local,li2023new} have provided sublinear time algorithms for \emph{expander graphs}, we prove several lower bounds for \emph{general graphs} of $n$ vertices and $m$ edges:
    \begin{enumerate}
        \item It needs $\Omega(n)$ queries to obtain $1.01$-approximations of the effective resistance of an adjacent pair $s$ and $t$, even for graphs of degree at most 3 except $s$ and $t$.
        
        \item For graphs of degree at most $d$ and any parameter $\ell$, it needs $\Omega(m/\ell)$ queries to obtain $c \cdot \min\{d, \ell\}$-approximations where $c>0$ is a universal constant.
    \end{enumerate}
    Moreover, we supplement the first lower bound by providing a sublinear time $(1+\epsilon)$-approximation algorithm for graphs of \emph{degree 2} except the pair $s$ and $t$.

    One of our technical ingredients is to bound the expansion of a graph in terms of the smallest non-trivial eigenvalue of its Laplacian matrix after removing  edges. We discover  a new lower bound on the eigenvalues of perturbed graphs (resp. perturbed matrices) by incorporating the effective resistance of the removed edge (resp. the leverage scores of the removed rows), which may be of independent interest.
\end{abstract}

\section{Introduction}\label{sec:intro}
Effective resistance of a graph is a fundamental quantity to measure the similarity between two vertices. Given an unweighted graph $G$ and two vertices $s$ and $t$, the $s$-$t$ effective resistance, denoted by $R_G(s,t)$, is defined as the electrical distance between $s$ and $t$ when $G$ represents an electrical circuit with each edge $e$ a resistor with electrical resistance $1$. Together with the related concept electrical flows, effective resistances have played important roles in advancing the development of graph algorithms. They have been utilized for computing and approximating maximum flow~\cite{ChristianoKMST11,Madry16}, generating random spanning tree~\cite{MST15fast,Sch17almost}, designing faster algorithms for multicommodity flow~\cite{KMP12faster}, and graph sparsification~\cite{SpielmanS11,DinitzKW15}. In addition, effective resistances have also found applications in machine learning and social network analysis. For example, it has been used for graph convolutional networks~\cite{AHMAD2021389}, for measuring the similarity of vertices in social networks~\cite{LZ18:kirchhoff,peng2021local} and measuring robustness of networks~\cite{ellens2011effective}. 

Let $L_G$ denote the Laplacian matrix of $G$ and $L_G^{\dagger}$ denote its pseudo-inverse. Then the $s$-$t$ effective resistance admits an elegant expression $R_G(s,t)=(\mathbf{1}_s-\mathbf{1}_t)^{\top} \cdot L_G^{\dagger} \cdot (\mathbf{1}_s-\mathbf{1}_t)$ where $\mathbf{1}_u \in \mathbb{R}^n$ denotes the indicator vector at vertex $u$. Hence classical Laplacian solvers~\cite{SpielmanT14,CohenKMPPRX14} provide \emph{almost-linear} time algorithms to approximate $R_G(s,t)$.

It has recently received increasing interest of designing \emph{sublinear-time} (or local) algorithms for estimating effective resistances. In this setting, we are given query access to a graph and any specified vertex pair $s$ and $t$, our goal is to find a good approximation of the $s$-$t$ effective resistance, by making as few queries as possible. Such algorithms are particularly motivated by the ubiquitousness of modern massive graphs, on which traditional polynomial-time algorithms are no longer feasible. 
Andoni et al. \cite{andoni2018solving} gave an algorithm that $(1+\epsilon)$-approximates $R_G(s,t)$ in $O(\frac{1}{\epsilon^2}\poly\log \frac{1}{\epsilon})$ time for $d$-regular expander graphs. 
Peng et al. \cite{peng2021local} then generalized this algorithm to unbounded-degree expander graphs with an additive error $\epsilon$ and similar query complexity. Li and Sachdeva \cite{li2023new} then gave one algorithm that $(1+\epsilon)$-approximates the $s$-$t$ effective resistance in $O(\frac{\poly(\log n)}{\epsilon})$ time on expander graphs (which is implicit in the proof of Theorem 3.1 in \cite{li2023new}). However, the question of whether it is possible to obtain sublinear-time estimation for effective resistances on non-expander or general graphs remains largely open.

Besides the aforementioned sublinear-time algorithms for effective resistances on expander graphs, one can observe an $\Omega(n)$-query lower bound for approximating the effective resistance of \emph{non-adjacent} pair $s$ and $t$. 
Indeed, consider an $n$-vertex path, on which the $s$-$t$ effective resistance is equivalent to the $s$-$t$ shortest path length. Intuitively, for the latter problem, any algorithm with a constant approximation ratio needs to well estimate the number of edges on the path from $s$ to $t$, which takes $\Omega(n)$ queries in the worst case.

In this paper, we consider the power and limitations of sublinear-time algorithms for $s$-$t$ effective resistance such that $s$ and $t$ are \emph{adjacent}, i.e., $(s,t)\in E(G)$. The adjacency case is already interesting in many applications. For example, in the seminal work on graph sparsification \cite{SpielmanS11}, it suffices to have good estimations of the effective resistances between the endpoints of edges, i.e., the adjacent pairs. It is also known that the effective resistance multiplied by the edge weight is equal to the probability that the edge
belongs to a randomly generated spanning tree (see e.g. \cite{DurfeeKPRS16}), which has found applications in random spanning tree generation.

On the other hand, for an adjacency pair, the lower bound from the previous discussion does not hold any more, as for any pair $s$ and $t$ such that $(s,t)\in E$, their effective resistance is exactly $1$ on a path. A priori, it could be true that a $(1+\epsilon)$-approximation of the $s$-$t$ effective resistance for an adjacent pair can be found in sublinear time. 

Now we state our results on sublinear-time algorithms for estimating the effective resistances. We will focus on unit-weighted graphs and the adjacency list model \cite{ron2019sublinear}, in which the local algorithms can perform degree,  neighbor queries and also sample vertices uniformly at random. 





\vspace{.1in}
\noindent\textbf{Strong Lower Bounds.} First, we provide a strong lower bound for graphs of degree at most $3$ except the given pair $s$ and $t$. For convenience, for a parameter $C>1$, we say that value $a$ is a $C$-approximation of value $b$ if $a \in [b/C, b]$. In the following, we always assume that the given pair $s$ and $t$ are \emph{adjacent}, i.e., $(s,t)\in E$. 

\begin{restatable}[]{theorem}{lowerboundsmalldegree} \label{thm:lower_bound_small_degree}
    There are infinitely many $n$ and graphs of $n$ vertices such that any local algorithm with success probability $0.6$ and approximation ratio $1.01$ on $R_G(s,t)$ needs $\Omega(n)$ queries. This holds even for graphs whose vertices are of degree at most $3$ except the adjacent pair $s$ and $t$.
\end{restatable}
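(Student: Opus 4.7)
The plan is to invoke Yao's minimax principle with a hard distribution built around a single, well-hidden edge, following the standard template for graph lower bounds in the adjacency-list model. For each integer $m \ge 1$, I would take a base graph $G^{0}$ with $n = 4m+2$ vertices consisting of the edge $(s,t)$ together with, for each $i \in [m]$, two length-$2$ antennae $s$--$u_i$--$a_i$ and $t$--$v_i$--$b_i$; every vertex except $s,t$ has degree at most $2$, so the degree-$3$ restriction holds. For each $i^{*} \in [m]$, let $G^{1}_{i^{*}}$ be $G^{0}$ with the additional edge $(a_{i^{*}}, b_{i^{*}})$. A quick parallel-resistor calculation gives $R_{G^{0}}(s,t)=1$ (since $(s,t)$ is a bridge) and $R_{G^{1}_{i^{*}}}(s,t)=5/6$ (the direct edge in parallel with the unique length-$5$ $s$-$t$ path). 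Since $6/5 > 1.01$, the two valid $1.01$-approximation intervals $[5/(6\cdot 1.01),\,5/6]$ and $[1/1.01,\,1]$ are strictly disjoint, so any correct algorithm must distinguish the two cases.

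Next I would set up the indistinguishability argument. Let $\mathcal{D}$ put probability $1/2$ on $G^{0}$ and probability $1/(2m)$ on each $G^{1}_{i^{*}}$; by Yao's principle it suffices to bound the expected success of any deterministic $q$-query algorithm $A$ on $\mathcal{D}$. Let $V_{A}$ be the set of vertices $A$ ever sees when run on $G^{0}$; each adjacency-list query (neighbor, degree, uniform sample) reveals at most one new vertex, so $|V_{A}| \le q + O(1)$. The only edge of $G^{1}_{i^{*}}$ absent from $G^{0}$ is $(a_{i^{*}}, b_{i^{*}})$, and it is incident solely to the two hidden endpoints, so whenever $\{a_{i^{*}}, b_{i^{*}}\}\cap V_{A}=\emptyset$ the two executions are query-by-query identical and $A$ outputs the same value. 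Since there are at most $|V_{A}|$ such ``detectable'' indices, the probability of detection is at most $(q+O(1))/m$. Letting $o$ denote the fixed output of $A$ on $G^{0}$ and recalling that the two intervals are disjoint, a short case analysis (depending on whether $o$ lies in the no-interval, the yes-interval, or neither) shows that the total success on $\mathcal{D}$ is at most $1/2 + O(q/m)$, which falls below $0.6$ once $q \le c m$ for a sufficiently small constant $c$. Since $m = \Theta(n)$, this gives the desired $q = \Omega(n)$ bound.

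The main obstacle I anticipate is defeating adaptive strategies: one might worry that $A$ could use its partial view to steer subsequent queries toward the hidden edge. The construction rules this out because all $2m$ antennae are locally identical in $G^{0}$; the sole local fingerprint of the hidden edge is that $a_{i^{*}}$ has degree $2$ rather than $1$. No matter how $A$ interleaves BFS from $\{s,t\}$ with uniform random sampling, locating one of two distinguished vertices inside a pool of $2m$ statistically identical candidates requires $\Omega(m) = \Omega(n)$ queries, which is exactly the slack captured by the bound $\Pr_{i^{*}}[\text{detected}] \le O(q/m)$.
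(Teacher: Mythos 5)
Your antenna construction is correct as written: $R_{G^0}(s,t)=1$, $R_{G^1_{i^*}}(s,t)=5/6$, the two $1.01$-approximation windows are disjoint, and the coupling argument --- the only changed edge is incident to two hidden leaves, so an execution that never touches $\{a_{i^*},b_{i^*}\}$ is identical on $G^0$ and on $G^1_{i^*}$ --- yields $\Omega(m)=\Omega(n)$ queries via Yao's principle. It is also a genuinely different route from the paper's: the paper glues two $3$-regular Ramanujan graphs at the bridge $(s,t)$, deletes a random edge of effective resistance at most $3/4$ from each copy and inserts two cross edges, and then invokes its eigenvalue-perturbation lemma to show the two pieces stay expanders, so that $R_{G'}(s,t)\le 0.99$; your argument needs no spectral machinery at all. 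The real trade-off is the degree of $s$ and $t$. In the paper's construction $d(s)=d(t)=4$, so the hardness holds on graphs of constant degree throughout. In your construction $d(s)=d(t)=m+1=\Theta(n)$; the theorem's wording technically permits this because $s,t$ are exempted from the degree bound, but it is a substantially weaker conclusion --- and necessarily so, since your graph has degree at most $2$ away from $\{s,t\}$, and the paper's degree-$2$ upper bound (Theorem~\ref{thm:upper_bound_degree_2}) gives a $(1+\epsilon)$-approximation for exactly that class in time $O\big(\min\{d(s),d(t)\}^2\cdot\mathrm{polylog}\,n\big)$, so a degree-$\le 2$-off-$\{s,t\}$ lower bound can only survive by letting $\min\{d(s),d(t)\}$ grow with $n$. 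The paper chooses degree-$3$ expanders precisely to close this escape hatch and establish hardness even when every vertex, including $s$ and $t$, has constant degree, which is the regime the trivial $\tfrac{2}{1/d(s)+1/d(t)}$-approximation and Theorem~\ref{thm:lower_bound_large_degree} are calibrated against.
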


Next, observe that the trivial  algorithm outputting 1 directly gives an approximation factor $\frac{2}{1/d(s)+1/d(t)}$ where $d(s)$ and $d(t)$ are the degrees of the adjacent pair $s$ and $t$. This is because $R_G(s,t) \ge \frac{1/d(s)+1/d(t)}{2}$ from the spectral graph theory \cite{lovasz1993random}. 
This factor equals $d$ if the graph is regular and is between $\min\{d(s),d(t)\}$ and $2\min\{d(s),d(t)\}$ in general. If we consider large approximation factors instead of the $(1+\epsilon)$-approximation shown in Theorem~\ref{thm:lower_bound_small_degree}, a natural question is can we design sublinear time algorithms that improve upon this trivial approximation ratio $\frac{2}{1/d(s)+1/d(t)}$? Our next theorem shows that it would take $\Omega(n)$ queries to improve this ratio significantly. 
\begin{theorem}\label{thm:lower_bound_large_degree}[Informal version of Theorem~\ref{thm:formal_lower_bound}]
    There exist a universal constant $c>0$ and infinitely many $n$ such that given any $d\ge 4$ and any $\ell \ge 4$, for graphs of $n$ vertices and degree at most $d$, any local algorithm to approximate $R_G(s,t)$ with success probability $0.6$ and approximation ratio $1 + c \cdot \min\{ d,\ell \}$ needs $\Omega(nd/\ell)$ queries.
\end{theorem}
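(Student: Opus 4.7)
I will prove Theorem~\ref{thm:lower_bound_large_degree} by a two-distribution Yao-style lower bound. Write $\ell' := \min\{d,\ell\}$ and construct two distributions $\mathcal{D}_H$ (high resistance) and $\mathcal{D}_L$ (low resistance) over $n$-vertex graphs of maximum degree $d$ with a distinguished adjacent pair $(s,t)$. Partition $V=A\sqcup B$ with $s\in A$, $t\in B$, $|A|=|B|=n/2$; place a $d$-regular expander on each side (with $s$ having $d-1$ internal neighbors in $A$, and similarly for $t$) and add the single crossing edge $(s,t)$. This base graph $G_0$ is the support of $\mathcal{D}_H$. To draw from $\mathcal{D}_L$, perform $\Theta(\ell')$ uniformly random edge swaps on $G_0$: each swap picks one internal edge $\{a_1,a_2\}\subseteq A$ and one internal edge $\{b_1,b_2\}\subseteq B$, deletes them, and adds the crossings $\{a_1,b_1\}$ and $\{a_2,b_2\}$. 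The resulting graph remains $d$-regular and now has $\Theta(\ell')$ edges across the cut.

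\textbf{Effective-resistance gap.} In $\mathcal{D}_H$ the edge $(s,t)$ is a bridge, so $R(s,t)=1$. In $\mathcal{D}_L$ I would upper bound $R(s,t)$ by Thomson's principle, exhibiting a unit $s$-$t$ flow that sends $\Theta(1/\ell')$ units through each of the $\Theta(\ell')$ crossing edges via short routes inside the two expander halves. Provided each half retains a constant spectral gap after its random edge removals, the energy spent inside a half is $O(1/(d\ell'))$, giving $R(s,t)=O(1/\ell')$ and therefore $R_{\mathcal{D}_H}/R_{\mathcal{D}_L}=\Omega(\ell')=\Omega(\min\{d,\ell\})$. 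I expect the hardest step to be establishing that the spectral gap of each half survives the random edge removals: a naive matrix-perturbation argument would degrade the gap by $\Omega(\ell'/n)$ per removed edge, which is far too lossy. This is exactly where the refined eigenvalue lower bound promised in the abstract — parameterized by the effective resistances / leverage scores of the removed rows — is needed, since typical expander edges have leverage score $O(1/n)$, keeping the cumulative loss at $o(1)$.

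\textbf{Indistinguishability and conclusion.} Couple $\mathcal{D}_H$ and $\mathcal{D}_L$ by fixing $G_0$ and choosing the $\Theta(\ell')$ swap locations uniformly at random, independent of the algorithm's coins. In the adjacency-list model, every random vertex sample, degree query, or $i$-th neighbor query returns an identical answer in the two graphs unless it touches one of the $\Theta(\ell')$ (vertex, neighbor-index) pairs altered by a swap; these form a random $\Theta(\ell')$-subset of the $2m=nd$ total such pairs. A union bound then shows that after $q$ queries the algorithm distinguishes $\mathcal{D}_H$ from $\mathcal{D}_L$ with probability at most $O(q\ell'/(nd))$, forcing $q=\Omega(nd/\ell')\ge\Omega(nd/\ell)$ to obtain any constant advantage. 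Finally, choosing the universal constant $c$ small enough, the valid output intervals $[1/(1+c\ell'),1]$ for $\mathcal{D}_H$ and $[0,O(1/\ell')]$ for $\mathcal{D}_L$ become disjoint whenever $d,\ell\ge 4$, so any $(1+c\min\{d,\ell\})$-approximation algorithm with success probability $0.6$ must distinguish the two distributions, contradicting the query bound and completing the proof.
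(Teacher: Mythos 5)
Your approach is in the same Yao-minimax, two-distribution, edge-swap spirit as the paper's, but it diverges in one crucial design choice that the paper gets right and you get wrong. In the paper's construction (Theorem~\ref{thm:formal_lower_bound}), each half consists of an expander $H_s$ of degree $d_H=\lfloor 3d/4\rfloor$ \emph{plus} a reservoir $E_s$ of $\Theta(nd)$ \emph{extra} edges layered on top to bring the degree up to $d$; the $\ell$ swaps only ever delete edges from $E_s$ and $E_t$, so $H_s$ and $H_t$ survive \emph{intact} inside $G'$. Consequently the $\Omega(d)$ spectral gap is available for free in Claim~\ref{clm:er_constant_degree} by monotonicity of the Dirichlet form, and the eigenvalue perturbation lemma (Lemma~\ref{lem:removing_edge_eff_res}) is not used at all in this theorem — it is only needed in the degree-$3$ case of Theorem~\ref{thm:lower_bound_small_degree}, where there is no slack to add extra edges. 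You instead remove $\Theta(\ell')$ edges from the $d$-regular expander itself and try to rescue the spectral gap with Lemma~\ref{lem:removing_edge_eff_res}.

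That rescue does not work as you state it, because your leverage-score estimate is wrong by a factor of $n/d$. Since $\sum_{e\in E} R_G(e)=n-1$ and $m=nd/2$, the average edge effective resistance in a $d$-regular graph is $\Theta(1/d)$, and in a Ramanujan-type expander Property~3 of Lemma~\ref{lem:basic_facts} forces \emph{every} edge to have $R_G(e)=\Theta(1/d)$ — not $O(1/n)$. Hence after removing $\Theta(\ell')$ edges the cumulative leverage-score loss $\sum_j\tau_{\ell_j}$ in Lemma~\ref{lemma:laplacian_eigenvalues_perturbation} is $\Theta(\ell'/d)$, which is $\Theta(1)$ (and can exceed $1$, making the bound vacuous) precisely in the regime $\ell'=\Theta(d)$ that you need for the strongest approximation-ratio statement. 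To make this route work you would have to take the swap count to be $c'\min\{d,\ell\}$ for an explicit small constant $c'$ so that $\sum_j\tau_{\ell_j}$ is bounded strictly below $1$, and you would also need to control the constructive interference in your multi-commodity flow bound (summing $\ell'$ flows of energy $O(1/(\ell'^2 d))$ each does \emph{not} automatically give total energy $O(1/(\ell' d))$; a Cauchy--Schwarz worst case gives only $O(1/d)$, so you need the harmonic-minimizer or a case analysis like the paper's Claim~\ref{clm:er_constant_degree}). The paper's ``extra edges'' trick sidesteps both issues cleanly, and I would recommend adopting it rather than patching the perturbation argument.
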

If we set $\ell=d$, this indicates that it takes $\Omega(n)$ queries to obtain an approximation ratio $o(d)$ for $d$-regular graphs since $m=nd/2$. In contrast, the corresponding trivial approximation ratio for adjacent pairs on $d$-regular graphs is $d$. If we fix $\ell$ and consider graphs with sufficiently large degree $d>\ell$, Theorem~\ref{thm:lower_bound_large_degree} also gives a trade-off between the query complexity and approximation ratio. Next we remark that we could incorporate additive errors into the lower bounds of Theorem~\ref{thm:lower_bound_small_degree} and Theorem~\ref{thm:lower_bound_large_degree}. This is because these two bounds consider the task of distinguishing between graphs with $R_G(s,t)=1$ and graphs with $R_G(s,t)<1$. 

On the other hand, for graphs of degree at most $2$ except the given pair $s$ and $t$, we provide a $(1+\epsilon)$-approximation algorithm of sublinear time in Appendix~\ref{sec:approx_alg}.

\vspace{.1in}
\noindent\textbf{Total effective resistance.}
Another important measure of a network is the total effective resistance, which is defined as $R_{tot}(G)=\sum_{u<v} R_G(u,v)$ and is also known as the Kirchhoff index of a graph \big(see e.g. \cite{ghosh2008minimizing,ellens2011effective,LZ18:kirchhoff} for numerous applications of $R_{tot}(G)$\big). Because of the elegant expression $R_{tot}(G)=n \cdot \sum_{i>1} \frac{1}{\lambda_i(L_G)}$ where the sum is over all non-trivial eigenvalues of the Laplacian of $G$, one may wonder whether there is a simpler approximation algorithm for the total effective resistance or not. However, we show that even for simple graphs with degree $2$, any algorithm with approximation ratio $<2$ needs $\Omega(n)$ queries.

\begin{theorem}\label{thm:total_tight_lower_bound}[Informal version of Theorem~\ref{thm:inapprox_total_eff_res}]
    For any $n$ and any $\ell>1$, any local algorithm for computing the total effective resistance with success probability 0.6 and approximation ratio $\ell$ needs $\Omega(n)$ queries. In particular, this holds even for $\ell=2-o(1)$ and graphs of degree at most $2$.
\end{theorem}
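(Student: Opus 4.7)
I would prove the informal theorem via Yao's principle, contrasting the uniform distribution $\mathcal D_1$ over Hamiltonian cycles on vertex set $[n]$ with the uniform distribution $\mathcal D_2$ over Hamiltonian paths on $[n]$. The two can be coupled by sampling a uniform permutation $\pi$ of $[n]$ and laying it out on a cycle with edges $\{\pi(i),\pi((i+1) \bmod n)\}$; in $\mathcal D_2$ one additionally deletes the single edge $\{\pi(n-1),\pi(0)\}$. Because the map (Hamiltonian cycle, chosen edge) $\mapsto$ (Hamiltonian path) is a uniform cover, the induced marginals are correct, and both distributions are supported on graphs of maximum degree $2$, which takes care of the ``degree at most $2$'' part of the claim.

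The key numerical input is that the ratio $R_{tot}(P_n)/R_{tot}(C_n)$ is exactly $2$. Since effective resistance equals graph distance on a tree, $R_{tot}(P_n)=\sum_{k=1}^{n-1}k(n-k)=n(n^2-1)/6$. On the cycle, two vertices at cyclic distance $k$ are joined by parallel resistors of weights $k$ and $n-k$, giving effective resistance $k(n-k)/n$; summing over the $n$ pairs at each cyclic distance yields $R_{tot}(C_n)=n(n^2-1)/12$. Therefore, for any $\ell<2$ the output intervals $[R_{tot}/\ell,R_{tot}]$ demanded on the two distributions are disjoint, so an $\ell$-approximation algorithm is forced to distinguish a random Hamiltonian cycle from a random Hamiltonian path; this covers $\ell=2-o(1)$.

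The heart of the argument is the indistinguishability claim: any algorithm making $q=o(n)$ adjacency-list queries tells $\mathcal D_1$ from $\mathcal D_2$ with at most $o(1)$ advantage. Under the coupling, degree and neighbor responses agree on every vertex other than the two ``cut'' vertices $\pi(0)$ and $\pi(n-1)$, so it suffices to bound the probability that either is ever discovered. I would formalize this with a deferred-decisions argument on $\pi$: a random-sample query returns a uniformly random unused position, and a neighbor query extends an existing explored arc by one new position. At any point the discovered set is a union of arcs of total length at most $q$ on a cycle of length $n$, with uniform marginals conditional on the transcript, so a union bound shows the cut edge $\{\pi(0),\pi(n-1)\}$ lies inside this union with probability only $O(q/n)=o(1)$.

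The main technical care is in making the deferred-decisions invariant precise when the algorithm mixes random-sample and neighbor queries adaptively: one has to verify that, conditional on the full transcript so far, the unrevealed portion of $\pi$ is still a uniform permutation of the unrevealed positions, so that each new discovery contributes only $O(1/n)$ to the bad-event probability. Once this invariant is in hand, the union bound finishes the indistinguishability step, and combining it with the output-interval separation above yields the claimed $\Omega(n)$ lower bound; the general $\ell>1$ regime can then be obtained by replacing $(C_n,P_n)$ with a pair consisting of a single cycle and a disjoint union of shorter cycles whose $R_{tot}$'s are in ratio at least $\ell$, with the same deferred-decisions reasoning bounding the cycle-detection probability.
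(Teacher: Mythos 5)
Your proof of the degree-$2$, ratio-$(2-o(1))$ case is correct and takes essentially the same route as the paper: both constructions contrast the $n$-cycle with the $n$-path obtained by removing one edge, compute the exact ratio $R_{tot}(P_n)/R_{tot}(C_n)=2$ using $R_{tot}(P_n)=(n^3-n)/6$ and $R_{tot}(C_n)=(n^3-n)/12$, and then bound the probability that $o(n)$ queries hit the cut. The only difference is in the formalization of Yao's step: the paper fixes the labelled ring $G_r$ and takes $G_p$ to be $G_r$ with a uniformly random edge deleted, so the distinguishing event is simply ``a query touches one of the two endpoints of the deleted edge,'' giving the $2q/n$ bound directly; you instead randomize the entire labelling via a uniform permutation $\pi$ and run a deferred-decisions argument on arcs. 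Both are valid; the paper's version is slightly shorter while yours more explicitly tracks adaptive neighbor queries. Either way this gives the ``in particular'' clause of the statement.

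The gap is in your sketch for general $\ell>1$. You propose to contrast $C_n$ with a \emph{disjoint union} of shorter cycles, but that does not work: the total effective resistance of a disconnected graph is infinite (the $s$-$t$ resistance is $+\infty$ whenever $s,t$ lie in different components), so $R_{tot}$ of the union is not a finite quantity one can compare to $R_{tot}(C_n)$. Even if one redefined $R_{tot}$ as a sum over within-component pairs, the coupling between $C_n$ and $k$ disjoint cycles requires modifying $\Theta(k)$ edges, and the argument for this regime becomes essentially the one the paper actually uses. Indeed, the paper's Theorem~\ref{thm:inapprox_total_eff_res} handles the general-$\ell$ regime with a different, higher-degree construction: two $d$-regular expander copies $H_s,H_t$ joined by one edge versus the same graph with $2\ell$ ``extra'' edges rewired across the cut, so that $R_{tot}$ drops from $\Omega(n^2)$ to $O(n^2/\ell)$ and distinguishing takes $\Omega(m/\ell)$ queries. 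Your degree-$2$ construction cannot reach ratio larger than $2$, so the general-$\ell$ claim genuinely requires leaving the degree-$2$ world, which your sketch does not do.
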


\noindent\textbf{Eigenvalues of graph perturbation.} One technical ingredient in the proof of Theorem~\ref{thm:lower_bound_small_degree} is to show that an expander graph is still an expander after removing one edge. A natural approach would be to compare the smallest non-trivial eigenvalue of the Laplacian of the perturbed graph to the corresponding one of the original graph. Since deleting one edge in $G$ does not change its edge expansion too much, one may use Cheeger's inequality (with other properties) to lower bound the new eigenvalue in terms of the original one. However, our key technical lemma provides a more direct bound by incorporating the effective resistance of the moved edge.

\begin{restatable}[]{lemma}{LapEigenvaluePerturbation} \label{lem:removing_edge_eff_res}
    Given a graph $G=(V,E)$ with $n$ vertices, let $\lambda_1(G) \le \cdots \le \lambda_n(G)$ be the eigenvalues of its Laplacian $L_G$. Given any edge $(u,v)$ in $G$, let $G'=\big( V,E \setminus \{(u,v)\} \big)$ be the graph obtained from $G$ by removing the edge $(u,v)$ and let $\lambda_1(G') \le \ldots \le \lambda_n(G')$ be the eigenvalues of its Laplacian $L_{G'}$. Then it holds that 
    \[
        \forall i \in [n],  \lambda_i(G') \ge \big( 1-R_G(u,v) \big) \cdot \lambda_i(G).
    \]
\end{restatable}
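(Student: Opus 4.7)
I will lift the per-eigenvalue bound to the stronger matrix (PSD) inequality
\[
    L_{G'} \;\succeq\; \big(1 - R_G(u,v)\big)\cdot L_G,
\]
from which the Courant--Fischer min--max characterization immediately yields $\lambda_i(G') \ge (1 - R_G(u,v))\,\lambda_i(G)$ for every $i$. This step requires the scalar $1 - R_G(u,v)$ to be non-negative, which follows from Rayleigh's monotonicity principle: the single edge $(u,v)$ considered in isolation is a subgraph of $G$ with effective $u$-$v$ resistance equal to $1$, so adding the remaining edges can only decrease the resistance, giving $R_G(u,v) \le 1$. (The degenerate case $R_G(u,v)=1$, i.e.\ $(u,v)$ is a bridge, collapses the conclusion to the trivial bound $\lambda_i(G') \ge 0$.)

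\textbf{The PSD inequality via Cauchy--Schwarz.} Let $b := \mathbf{1}_u - \mathbf{1}_v$, so $L_G = L_{G'} + b b^\top$ and $R_G(u,v) = b^\top L_G^\dagger b$. Expanding $x^\top L_{G'} x = x^\top L_G x - (x^\top b)^2$, the target inequality is equivalent to
\[
    (x^\top b)^2 \;\le\; R_G(u,v)\cdot x^\top L_G x \qquad \forall\, x \in \mathbb{R}^n.
\]
The crucial observation is that $b$ lies in the image of $L_G$: since $(u,v)\in E$, the endpoints $u$ and $v$ are in the same connected component of $G$, and $b$ is orthogonal to the indicator of every component. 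Hence $b = L_G\,w$ with $w := L_G^\dagger b$, and I can rewrite $x^\top b = x^\top L_G w = \langle x, w\rangle_{L_G}$, the semidefinite inner product induced by $L_G$. Cauchy--Schwarz in this inner product then gives
\[
    (x^\top b)^2 \;\le\; (x^\top L_G x)\cdot(w^\top L_G w) \;=\; (x^\top L_G x)\cdot(b^\top L_G^\dagger b),
\]
where the last equality uses the pseudoinverse identity $L_G^\dagger L_G L_G^\dagger = L_G^\dagger$. This is exactly the desired bound.

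\textbf{From matrix bound to eigenvalues.} Given $A \succeq c B$ with $c \ge 0$, the min--max characterization $\lambda_i(A) = \min_{\dim S = i} \max_{x \in S\setminus\{0\}} x^\top A x / x^\top x$ directly implies $\lambda_i(A) \ge c\,\lambda_i(B)$ for every $i$. Applying this with $A = L_{G'}$, $B = L_G$, and $c = 1 - R_G(u,v) \ge 0$ completes the proof.

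\textbf{Main obstacle.} I do not anticipate a serious technical difficulty; the whole argument is essentially one application of Cauchy--Schwarz in the $L_G$-inner product. The only point demanding care is verifying that $L_G^\dagger b$ is well-defined, i.e.\ that $b$ lies in $\mathrm{image}(L_G)$. This is precisely where the hypothesis $(u,v) \in E$ enters, ensuring $u$ and $v$ share a connected component so that the pseudoinverse reconstruction $L_G L_G^\dagger b = b$ holds.
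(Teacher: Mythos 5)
Your proof is correct and is essentially the paper's second proof of Lemma~\ref{lemma:laplacian_eigenvalues_perturbation} specialized to a single removed row: both arguments reduce the eigenvalue bound to the pointwise inequality $(x^\top b)^2 \le R_G(u,v)\cdot x^\top L_G x$ and then conclude via Courant--Fischer. The only differences are organizational: you package the pointwise inequality as the PSD relation $L_{G'}\succeq (1-R_G(u,v))L_G$ rather than comparing Rayleigh quotients inline, and you supply the Cauchy--Schwarz derivation of $\max_{x}\,(x^\top b)^2/(x^\top L_G x)=b^\top L_G^\dagger b$ explicitly, where the paper simply cites this variational characterization as Property~4 of Lemma~\ref{lem:basic_facts}. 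Both are reasonable presentations of the same idea; your write-up is self-contained for the one-edge case (and the check that $b\in\mathrm{image}(L_G)$ is the right hypothesis to flag), though it proves only the graph-theoretic corollary rather than the general leverage-score statement.
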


This lemma shows that after removing edge $(u,v)$ in an expander $G$ with $R_G(u,v)$ strictly less than 1, $G$ is still an expander. We remark that this requirement on the effective resistance $R_G(u,v)$ is necessary. Because there exists an expander graph $G$ with edges of effective resistance 1 and vertices of degree 1 (see Claim~\ref{clm:expander_unit_effect_resis} in Section~\ref{sec:proof_eigenvalues_perturbation}), one can not delete an arbitrary edge in $G$ while maintaining the expansion property.

We could consider the general problem of bounding the eigenvalues of a matrix after removing a few rows. Specifically, given a matrix $A \in \mathbb{R}^{m \times n}$ and $k$ distinct rows $\mathbf{a}_{\ell_1},\ldots,\mathbf{a}_{\ell_k}$, the question is to compare all eigenvalues of $(A')^{\top} A'$, where $A' \in \mathbb{R}^{(m-k) \times n}$ removes row $\mathbf{a}_{\ell_1},\ldots,\mathbf{a}_{\ell_k}$ from $A$, to the eigenvalues of $A^{\top} A$. Since $(A')^{\top} A' \preceq	A^{\top} A$, each eigenvalue $\lambda_i\big( (A')^{\top} A' \big) \le \lambda_i\big( A^{\top} A \big)$. We give a lower bound for every $\lambda_i\big( (A')^{\top} A' \big)$ by incorporating the leverage scores of those rows.

The (statistical) leverage scores of a matrix $A \in \mathbb{R}^{m \times n}$ provide a nonuniform importance sampling distribution over the $m$ rows of $A$, which plays a crucial role in randomized matrix algorithms (see \cite{Woodruff14} for a list of applications). For each row $\mathbf{a}_i$ in $A$, its leverage score $\tau_i$ is defined to be $\mathbf{a}_i^{\top} \cdot (A^\top A)^{\dagger} \cdot \mathbf{a}_i$. In fact, the leverage scores of a matrix are the analogues of the effective resistances of a graph~\cite{DrineasM2010}.
More formally, let $B \in \mathbb{R}^{m \times n}$ denote the edge-incidence matrix of a graph $G$ as follows: Each edge $(u,v)$ of $G$ gives a row $\mathbf{1}_u-\mathbf{1}_v$ in $B$. 
Furthermore, the Laplacian matrix $L$ of $G$ equals $B^{\top} B$. If a row of $B$ corresponds to edge $(u,v)$, its leverage score $(\mathbf{1}_u-\mathbf{1}_v)^{\top} \cdot (B^\top B)^{\dagger} \cdot (\mathbf{1}_u-\mathbf{1}_v)=(\mathbf{1}_u-\mathbf{1}_v)^{\top} \cdot L^{\dagger} \cdot (\mathbf{1}_u-\mathbf{1}_v)$ turns out to be the effective resistance of $(u,v)$ in $G$.

Now we state Lemma~\ref{lemma:laplacian_eigenvalues_perturbation} for the general problem. So Lemma~\ref{lem:removing_edge_eff_res} is a direct corollary of Lemma~\ref{lemma:laplacian_eigenvalues_perturbation} by setting $A$ to the incidence matrix of $G$ as discussed above. 

\begin{restatable}[]{lemma}{EigenvaluePerturbation} \label{lemma:laplacian_eigenvalues_perturbation}
    Given a matrix $A \in \mathbb{R}^{m \times n}$, let $\lambda_1 \le \cdots \le \lambda_n$ be the eigenvalues of $A^{\top} A$. Moreover, for each $\ell \in [m]$, let $\mathbf{a}_{\ell}$ be row $\ell$ of $A$ and  $\tau_{\ell} = \mathbf{a}_\ell^{\top} (A^{\top} A)^{\dagger} \mathbf{a}_\ell$ be its leverage score. 
    
    For any $k$ distinct indices $\ell_1,\ldots,\ell_k \in [m]$, let $A' \in \mathbb{R}^{(m-k) \times n}$ be the matrix obtained from $A$ by removing the corresponding $k$ rows $\mathbf{a}_{\ell_1},\ldots,\mathbf{a}_{\ell_k}$; and let $\lambda'_1 \le \cdots \le \lambda'_n$ be the eigenvalues of  $(A')^\top A'$. It holds that 
    \[
        \forall i \in [n], \lambda'_i \in \big[ (1-\tau_{\ell_1}-\tau_{\ell_2}-\cdots - \tau_{\ell_k}) \cdot \lambda_i, \lambda_i \big].
    \]   
\end{restatable}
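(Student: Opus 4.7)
The plan is to prove the stronger Loewner (PSD) inequality
\[
\Bigl(1 - \sum_{j=1}^k \tau_{\ell_j}\Bigr) \cdot A^\top A \;\preceq\; (A')^\top A' \;\preceq\; A^\top A,
\]
from which both eigenvalue bounds follow immediately by Courant--Fischer's min-max characterization. The upper half is trivial: $A^\top A - (A')^\top A' = \sum_{j=1}^k \mathbf{a}_{\ell_j} \mathbf{a}_{\ell_j}^\top$ is a sum of rank-one PSD matrices, so $(A')^\top A' \preceq A^\top A$ and hence $\lambda'_i \le \lambda_i$.

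For the lower bound, since $x^\top A^\top A x - x^\top (A')^\top A' x = \sum_{j=1}^k (\mathbf{a}_{\ell_j}^\top x)^2$, it suffices to prove the per-row bound
\[
(\mathbf{a}_\ell^\top x)^2 \;\le\; \tau_\ell \cdot x^\top A^\top A x \qquad \text{for every } \ell \in [m] \text{ and every } x \in \mathbb{R}^n,
\]
and sum over $j$. Writing $M := A^\top A$, I would first take $x$ in the row space of $A$, so that $M^\dagger M x = x$ and $\mathbf{a}_\ell^\top x = \mathbf{a}_\ell^\top M^\dagger M x = \langle M^\dagger \mathbf{a}_\ell,\, x\rangle_M$ in the PSD semi-inner product induced by $M$. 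Cauchy--Schwarz then gives
\[
(\mathbf{a}_\ell^\top x)^2 \;\le\; (M^\dagger \mathbf{a}_\ell)^\top M (M^\dagger \mathbf{a}_\ell) \cdot x^\top M x \;=\; \mathbf{a}_\ell^\top M^\dagger M M^\dagger \mathbf{a}_\ell \cdot x^\top M x \;=\; \tau_\ell \cdot x^\top M x,
\]
using the standard identity $M^\dagger M M^\dagger = M^\dagger$.

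To extend this to arbitrary $x$, decompose $x = x_R + x_N$ with $x_R$ in the row space of $A$ and $x_N \in \ker A$. Since $A x_N = 0$, every coordinate of $A x_N$ vanishes, so $\mathbf{a}_\ell^\top x_N = 0$ for all $\ell$ and in particular $A' x_N = 0$. Both quadratic forms and all the linear forms $\mathbf{a}_\ell^\top x$ therefore depend only on $x_R$, and the row-space inequality lifts unchanged to all of $\mathbb{R}^n$. Summing the per-row bounds then yields the desired PSD inequality $(A')^\top A' \succeq (1 - \sum_j \tau_{\ell_j}) A^\top A$, and Courant--Fischer delivers $\lambda'_i \ge (1 - \sum_j \tau_{\ell_j}) \lambda_i$. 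Observe that no assumption $\sum_j \tau_{\ell_j} \le 1$ is needed: if the sum exceeds $1$ the claimed lower bound is vacuous, while the PSD inequality itself remains valid.

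I do not anticipate a serious obstacle. The one step requiring care is the row-space/null-space decomposition, because the pseudoinverse identity $M^\dagger M x = x$ only holds on the row space of $A$; the containment $\ker A \subseteq \ker A'$ makes this extension painless. An alternative route via induction on $k$ looks tempting but is actually worse, since leverage scores change when a row is removed, whereas the Cauchy--Schwarz argument above works uniformly in $k$.
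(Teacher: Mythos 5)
Your proposal is correct and matches the paper's second proof in essence: both reduce the eigenvalue bounds to the pointwise Rayleigh-quotient inequality $x^\top (A')^\top A' x \ge \bigl(1 - \sum_j \tau_{\ell_j}\bigr)\, x^\top A^\top A x$, obtained from the per-row bound $(\mathbf{a}_\ell^\top x)^2 \le \tau_\ell \cdot x^\top A^\top A x$, and then invoke Courant--Fischer. The only cosmetic differences are that you package the intermediate step as a Loewner ordering and re-derive the leverage-score variational bound from scratch via Cauchy--Schwarz in the $M$-semi-inner-product (handling the kernel of $A$ separately), whereas the paper simply cites it as Property 4 of Lemma~\ref{lem:basic_facts}; the paper's alternative first proof via characteristic polynomials and the matrix determinant lemma is genuinely different, but your route coincides with the second.
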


\subsection{Related Work}

Previous research has studied the problem of how to quickly compute and approximate the effective resistances in the regime of polynomial-time algorithms, as such algorithms can be used as a crucial subroutine for other graph algorithms. For example, for any two vertices $s$ and $t$ in a  $n$-vertex $m$-edge graph, one can $(1+\epsilon)$-approximate the $s$-$t$ effective resistance in  $\tilde{O}(m+n\epsilon^{-2})$~\cite{DurfeeKPRS16} and $\tilde{O}(m \log (1/\epsilon))$~\cite{CohenKMPPRX14} time, respectively. To $(1\pm \epsilon)$-approximate the effective resistances between $s$ given pairs, Chu et al.~\cite{CGPSSW_pairs_ER_short_cycle} provided an algorithm in time $O(m^{1+o(1)} +(n+s)n^{o(1)}\cdot \epsilon^{-1.5})$. There are also algorithms that find $(1+\epsilon)$-approximations to the effective resistance between every pair of vertices in $\tilde{O}(n^2/\epsilon)$ time~\cite{JS18:sketch}. In order to exactly compute the $s$-$t$ or all-pairs effective resistance, the current fastest algorithms run in times $O(n^\omega)$ (by using the fastest matrix inversion algorithm~\cite{bunch1974,ibarra1982}),
where $\omega < 2.373$ is the matrix multiplication exponent~\cite{alman2021refined}. Faster algorithms are known for planar graphs by using the nested dissection method \cite{LiptonRT79}. 

There exists a line of works on how to efficiently maintain the effective resistances \emph{dynamically} \cite{San04:dynamic,AbrahamDKKP16,GHP17:sparsifier,goranci2018dynamic,DGGP18:dynER,ChenGHPS20}, i.e., if the graph undergoes edge insertions and/or deletions, and the goal is to support the update operations and query for the effective resistances as quickly as possible, rather than having to recompute it from scratch each time.


For the total effective resistance, Ghosh et al. \cite{ghosh2008minimizing} studied algorithms for  allocating edge weights on a given graph in order to minimize the total effective resistance. Li and Zhang \cite{LZ18:kirchhoff} used Kirchhoff index (i.e., total effective resistance) as the measure of edge centrality in weighted
networks and gave efficient algorithms for the measure. 

Matrix perturbation considers the eigenvalues (singular values) of $A$ after adding a matrix $E$ of the same order. Various bounds on the error of matrix perturbation, such as absolute errors and relative errors, has been studied. We refer to the survey \cite{ipsen_1998} and the reference therein for a complete overview. Even though Lemma~\ref{lemma:laplacian_eigenvalues_perturbation} is an instantiation of Theorem 2.8 in \cite{ipsen_1998} with (statistical) leverage scores, we provide two simpler proofs in this work which are more intuitive.
Also, different perturbation bounds in terms of the leverage scores were provided in \cite{ipsensensitivity, holodnak2015conditioning}. 

Leverage scores, analogue to effective resistances of graphs, have wide applications in randomized matrix algorithms and large-scale data algorithms. The most notable property is that sampling the rows of a matrix $A$ via its leverage scores gives an efficient construction of the subspace embedding of $A$ \cite{SpielmanS11}. This fact is extremely useful in designing ultra-efficient algorithms for linear regression and low rank approximations \cite{ClarksonW13}. We refer to the survey \cite{Woodruff14} for a list of applications. 
Since bounding eigenvalues are sufficient for $\ell_2$-subspace embeddings, there is a line of research on the connection between eigenvalues and leverage scores, such as spectral sparsifications \cite{SpielmanS11,BatsonSS12}. 

Sublinear-time algorithms for the related graph problems has also been investigated. For example, Lee \cite{lee2013probabilistic} gave an algorithm for producing a probabilistic $(\epsilon,\delta)$-spectral sparsifier with $O(n\log n/\epsilon^2)$ edges in $\tilde{O}(n/\epsilon^2\delta)$ time for unweighted undirected graph. Note that its running time is sublinear if the number of edges in the graph is large enough. For spectral approximations in sublinear time, various approximation guarantees have been studied in \citep{CKSV18,MNSUW18_spectral_approx,BKM22}.



\vspace{.1in}
\noindent\textbf{Organization.} In Section~\ref{sec:preli}, we provide the basic definitions and notations of this work. Next we discuss about eigenvalues of perturbed matrices and graphs in Section~\ref{sec:proof_eigenvalues_perturbation}. Then we prove the lower bounds of Theorem~\ref{thm:lower_bound_small_degree} in Section~\ref{sec:lower_bound_deg_3}. Due to the space constraint, we defer the proof of Theorem~\ref{thm:lower_bound_large_degree} to Appendix~\ref{sec:large_constant_degree}. Then the approximation algorithm for degree-$2$ graphs is shown in Appendix~\ref{sec:approx_alg}. Finally, we discuss Theorem~\ref{thm:total_tight_lower_bound} about total effective resistances in Appendix~\ref{sec:total_eff_res}.

\section{Preliminaries}\label{sec:preli}
For any integer $k\geq 1$, let $[k]:=\{1,\cdots,k\}$. We use $a=b \pm c$ to denote $a \in [b-c, b+c]$ and $\mathbf{1}$ to denote the all 1 vector.

\vspace{.1in}
\noindent\textbf{Basic definitions from graph theory.} In this work, we only consider undirected graphs with unit weights on each edge. Given an undirected graph $G = (V, E)$ with $n:=|V|$ vertices and $m:=|E|$ edges, let $A_G \in \mathbb{R}^{n \times n}$ denote the adjacency matrix of $G$ and $D_G \in \mathbb{R}^{n \times n}$ denote its degree diagonal matrix. We use $L_G \in \mathbb{R}^{n \times n}$ to denote its Laplacian, i.e., $L_G = D_G - A_G$. Also, we use $V(G)$ and $E(G)$ to denote the vertex set and edge set of a graph $G$.

In this work, we use $\tilde{L}_G$ to denote the normalized Laplacian of $G$, i.e., $\tilde{L}_G:=D_G^{-1/2} \cdot L_G \cdot D_G^{-1/2}$. When the graph is clear, we hide the notation $G$. Also, we use $V(G)$ and $E(G)$ to denote the vertex and edge set of $G$. Moreover, we use $d$ to denote the maximum degree of $G$. For a vertex $u$, let $d(u)$ denote its degree in $G$ and $\mathbf{1}_u \in \mathbb{R}^n$ denote the indicator vector of $u$, i.e., $\mathbf{1}_u(v)=1$ if $v=u$ and 0 otherwise. 

Now we define the adjacency list model for sublinear time graph algorithms \cite{ron2019sublinear}. There are three types of operations in constant time:
\begin{enumerate}
    \item degree query: the algorithm queries the degree of a fixed vertex $v \in V$;
    \item neighbor query: the algorithm queries the $i$-th neighbor of vertex $v$ given $v$ and $i$;
    \item uniform sampling: the algorithm receives a random vertex in $V$.
\end{enumerate}

\noindent\textbf{Basic definitions about matrices and expander graphs.} We use $\psi$, $\phi$, and bold letter $\mathbf{a}$ to denote vectors and $\|\cdot\|$ to denote their $L_2$ (Euclidean) norms. For a vector $\mathbf{a} \in \mathbb{R}^V$ and a subset $U \subset V$, let $\mathbf{a}(U)$ denote the vector in $\mathbb{R}^{U}$ which contains the corresponding entries in $U$. So $\mathbf{a}(i)$ denotes the $i$-th entry of $\mathbf{a}$.

Given a symmetric matrix $A$, we always use $\lambda_1(A) \le \lambda_2(A) \le \cdots \le \lambda_n(A)$ to denote its eigenvalues in the non-decreasing order. Furthermore, let its eigendecomposition be $A=\sum_i \lambda_i(A) \cdot \psi_i \psi_i^T $ where $\psi_i$ is the eigenvector corresponding to the eigenvalue $\lambda_i(A)$.

We say $G$ is an expander if the second smallest eigenvalue $\lambda_2(\tilde{L})$ of $\tilde{L}$ is at least $c_1$, for some universal $c_1>0$. This is equivalent to $\lambda_2(L_G) \ge c_2$ for some $c_2>0$ when the degree of $G$ is bounded. 
We will use Ramanujan graphs \cite{margulis1988explicit,MORGENSTERN199444} of degree 3 and near-Ramanujan graphs for every degree $d \ge 4$ \cite{MODP_22_every_degree}. The guarantee of a Ramanujan graph $G$ of regular degree $d$ is that $\lambda_2(L_G) \ge d - 2 \sqrt{d-1}$. For near-Ramanujan graphs, we only need $\lambda_2(L_G) \ge d - 2.01 \sqrt{d-1}$.

\vspace{.1in}
\noindent\textbf{Basic definitions about effective resistances.}
Then we define the Moore-Penrose pseudo-inverse and effective resistances. For a symmetric matrix $M \in \mathbb{R}^{n \times n}$ whose eigendecomposition is $M = \sum_{i} \lambda_{i} \psi_i \psi_i^T$, its Moore-Penrose pseudo-inverse $M^{\dagger} = \sum_{i:\lambda_i \ne 0} \frac{1}{\lambda_i} \psi_i \psi_i^T$.

\begin{definition}[Effective Resistances] Given a graph $G=(V,E)$, for any two vertices $s,t \in V$, the $s-t$ effective resistance is defined as $R_G(s,t):=(\mathbf{1}_s-\mathbf{1}_t)^{\top} \cdot L_G^{\dagger} \cdot (\mathbf{1}_s-\mathbf{1}_t)$. Moreover, the total effective resistance of $G$ is defined as $R_{tot}(G) = \sum_{i < j} R_G(i,j)$.
\end{definition}
In this work, we will extensively use the following facts about effective resistances (see \cite{lovasz1993random,Spielman} for their proofs). 
\begin{lemma}\label{lem:basic_facts}
Given a graph $G=(V,E)$, the effective resistances in $G$ satisfy the following properties:
\begin{enumerate}
    \item $\sum_{(u,v) \in E} R_G(u,v)=n-1$ and $R_{tot}(G)=n \cdot \sum \limits_{i=2,\ldots,n} \frac 1 {\lambda_i(L_G)}$. 
    \item $2m \cdot R_G(u,v)=\kappa_G(u,v)$, where $\kappa_G(i,j)$ is the \emph{commute time} of a simple random walk from vertex $i$ to $j$ in $G$, i.e., the expected number of steps in a random walk starting at $i$, after vertex  $j$ is visited and then vertex $i$ is reached again.
    \item $\frac{1}{2} \big(\frac{1}{d(u)}+\frac{1}{d(v)}\big) \le R_G(u,v) \le \big( 1/\lambda_2(\widetilde{L}_G) \big) \cdot \big( \frac{1}{d(u)}+\frac{1}{d(v)} \big)$ where $\lambda_2(\widetilde{L}_G)$ is the 2nd smallest eigenvalue of the normalized Laplacian $\widetilde{L}_G$. 
    \item Given any $(s,t)$, consider all functions 
    $\phi \in \mathbb{R}^{V}$ such that $\phi(s)=1$ and $\phi(t)=0$, then
    \[
    R_{G}(s,t)= \frac{1}{\underset{\phi \in \mathbb{R}^{V}: \phi(s)=1, \phi(t)=0}{\min} \sum_{(u,v) \in E} (\phi(u)-\phi(v))^2}.
    \]
    In fact, the minimum value above is acheived when $\phi$ is harmonic, namely the unique solution satisfying $\phi(s)=1, \phi(t)=0,$ and $\phi(v)=\frac{1}{d(v)} \sum_{(u,v)\in E} \phi(u)$ for $v \in V \setminus \{s,t\}$.
 
 An equivalent definition is  $R_G(s,t)=\underset{\phi \in \mathbb{R}^{n}: \phi \bot \mathbf{1}}{\max} \frac{\langle \mathbf{1}_s-\mathbf{1}_t, \phi \rangle^2}{\phi^{\top} \cdot L_G \cdot \phi}$. In particular, the leverage score $\tau_\ell$ of row $\mathbf{a}_\ell$ in $A$ also equals
    $\tau_\ell=\underset{\phi \in \mathbb{R}^{n}}{\max} \frac{\langle \mathbf{a}_\ell, \phi \rangle^2}{\|A \cdot \phi\|_2^2} $.
    
    \item Let $\mathcal{T}(G)$ denote all spanning trees in $G$. Let us pick $T 
    \in \mathcal{T}$ uniformly at random. Then $R_G(u,v)$ is the probability $(u,v)$ is in $T$, i.e., $R_G(u,v) = \Pr_{T \sim \mathcal{T}(G)}[(u,v) \in T]$. 
\end{enumerate}
\end{lemma}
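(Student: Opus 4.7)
The plan is to derive all five properties from the defining identity $R_G(s,t) = (\mathbf{1}_s - \mathbf{1}_t)^\top L_G^{\dagger}(\mathbf{1}_s - \mathbf{1}_t)$, together with the quadratic-form identity $\phi^\top L_G \phi = \sum_{(u,v) \in E}(\phi(u) - \phi(v))^2$ and the eigendecomposition $L_G^{\dagger} = \sum_{i \ge 2} \lambda_i(L_G)^{-1} \psi_i \psi_i^\top$. The unifying observation is that $\phi^\star := L_G^{\dagger}(\mathbf{1}_s - \mathbf{1}_t)$ is the harmonic potential produced by injecting unit current at $s$ and extracting it at $t$, so $R_G(s,t)$ equals the energy dissipated by this electrical flow.

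I would first handle Properties 1 and 4 as direct linear-algebraic consequences. For Property 4, taking the gradient of $\sum_{(u,v)\in E}(\phi(u)-\phi(v))^2$ subject to $\phi(s)=1, \phi(t)=0$ yields the harmonic KKT equations, and substituting the harmonic minimizer back gives minimum energy $1/R_G(s,t)$; the dual Rayleigh-quotient formula and the analogous leverage-score identity both follow from standard manipulations of $L_G^{\dagger}$ restricted to $\mathbf{1}^{\perp}$ (and of $A^\top A$ restricted to its column span, respectively). For Property 1, the first identity comes from writing each $R_G(e) = \mathrm{tr}(L_G^{\dagger} L_e)$ with $L_e$ the elementary edge-Laplacian, so $\sum_e R_G(e) = \mathrm{tr}(L_G^{\dagger} L_G) = n-1$ since $L_G^{\dagger} L_G$ is the orthogonal projection onto $\mathbf{1}^{\perp}$; the second follows from expanding $\sum_{i<j}(\mathbf{1}_i - \mathbf{1}_j)(\mathbf{1}_i - \mathbf{1}_j)^\top = n I - J$ (where $J$ is the all-ones matrix) and trace-pairing against $L_G^{\dagger}$, using $L_G^{\dagger}\mathbf{1} = 0$ to kill the $J$-term and recover $n \cdot \mathrm{tr}(L_G^{\dagger}) = n \sum_{i \ge 2} 1/\lambda_i(L_G)$.

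Property 3 also leans on Property 4. For the lower bound, the test function $\phi(s)=1$ and $\phi(v)=0$ for all other $v$ has energy exactly $d(s)$, hence $1/R_G(s,t) \le d(s)$, i.e., $R_G(s,t) \ge 1/d(s)$; the symmetric test function anchored at $t$ gives $R_G(s,t) \ge 1/d(t)$, and averaging yields the claim. For the upper bound, substitute $\psi = D_G^{1/2}\phi$ in the dual formula from Property 4, apply Cauchy-Schwarz to bound $\langle D_G^{-1/2}(\mathbf{1}_s - \mathbf{1}_t), \psi\rangle^2 \le (1/d(s) + 1/d(t)) \|\psi\|^2$, and use $\psi^\top \widetilde{L}_G \psi \ge \lambda_2(\widetilde{L}_G)\|\psi\|^2$ on $\psi \perp D_G^{1/2}\mathbf{1}$.

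Properties 2 and 5 are the main technical obstacles. For Property 2, I would set up the hitting-time system $(L_G H(\cdot, v))(x) = d(x)$ for $x \ne v$ from the one-step Markov recursion, use $\mathbf{1}^\top L_G = 0$ to pin down the $v$-th entry of $L_G H(\cdot, v)$ as $d(v) - 2m$, invert via $L_G^{\dagger}$ to recover $H(u,v) - H(v,v)$ as a linear functional of $L_G^{\dagger}$, and symmetrize $H(u,v) + H(v,u)$ so that the asymmetric terms cancel into the clean form $\kappa_G(u,v) = 2m \cdot R_G(u,v)$. For Property 5, I would apply the Matrix-Tree theorem twice: once to express $|\mathcal{T}(G)|$ as any cofactor of $L_G$, and once to write the number of spanning trees containing edge $e=(u,v)$ as the corresponding cofactor of the contracted Laplacian $L_{G/e}$; comparing the two ratios via a rank-one identity for the cofactor then yields $\Pr_{T \sim \mathcal{T}(G)}[(u,v) \in T] = R_G(u,v)$. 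The main hurdle is the bookkeeping in Property 2, since hitting times are asymmetric and only the symmetrization step coaxes the pseudoinverse into the quadratic form $(\mathbf{1}_u - \mathbf{1}_v)^\top L_G^\dagger (\mathbf{1}_u - \mathbf{1}_v)$.
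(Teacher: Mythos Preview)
The paper does not prove this lemma; it states the five properties as standard facts and defers to the references \cite{lovasz1993random,Spielman}. Your proposal is a correct and well-organized outline of the standard arguments found in those sources (the trace computations for Property~1, the energy/KKT derivation for Property~4, the test-function and Cauchy--Schwarz bounds for Property~3, the hitting-time linear system for Property~2, and the Matrix-Tree comparison for Property~5), so there is nothing to compare against in the paper itself.
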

Note that the first equation $\sum_{(u,v) \in E} R_G(u,v)=n-1$ in Lemma \ref{lem:basic_facts} considers the summation over all edges in $E$ where the total effective resistance is the summation over all pairs.

\section{Eigenvalues of Perturbed Graphs and Matrices}\label{sec:proof_eigenvalues_perturbation}
We discuss Lemma~\ref{lem:removing_edge_eff_res} and Lemma~\ref{lemma:laplacian_eigenvalues_perturbation}  in this section. We give two different proofs for Lemma~\ref{lemma:laplacian_eigenvalues_perturbation} such that Lemma~\ref{lem:removing_edge_eff_res} is a direct corollary by setting $A$ to be the incidence matrix. Then we show expander graphs with edges of effective resistance $1$ to illustrate that we have to remove edges with $R_G(u,v)<1$ to keep the perturbed graph as an expander in Lemma~\ref{lem:removing_edge_eff_res}.

We restate Lemma~\ref{lemma:laplacian_eigenvalues_perturbation} again. One remark is that both bounds could be tight in Lemma~\ref{lemma:laplacian_eigenvalues_perturbation}. For example, consider the case $A^\top A=I$ whose $\tau_{\ell}=\|\mathbf{a}_\ell\|_2^2$ for all $\ell$ and $\lambda_i \equiv 1$ for all $i$. Then after removing any $\mathbf{a}_\ell$ (hence $k=1$), $(A')^\top A'=I - \mathbf{a}_\ell \cdot \mathbf{a}_\ell^\top$ has $\lambda'_1=1-\|\mathbf{a}_\ell\|_2^2$ with eigenvector $\frac{\mathbf{a}_\ell}{\|\mathbf{a}_\ell\|_2}$. So $\lambda'_1$ matches the lower bound $(1-\tau_{\ell}) \cdot \lambda_1$; and all the rest eigenvalues of $(A')^\top A'$ are $1$, the same as those of $A^\top A$.

\EigenvaluePerturbation*

The first proof is based on the characteristic polynomial of $A^{\top} A$, which is motivated by the potential function method of classical work \cite{BatsonSS12}. One ingredient in this proof is the matrix determinant lemma \cite{BatsonSS12}.

\begin{lemma}\label{lem:matrix_determinant}
    If $A$ is nonsingular and $v$ is a vector, then $
        det(A+v v^{\top})=\det(A) \cdot (1+v^{\top} A^{-1} v)$.
\end{lemma}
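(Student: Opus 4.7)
The plan is to factor out $A$ and reduce the identity to a rank-one perturbation of the identity matrix. Using that $A$ is nonsingular, one can write
\[
A + vv^\top = A\bigl(I + A^{-1}vv^\top\bigr),
\]
and applying multiplicativity of the determinant gives $\det(A + vv^\top) = \det(A) \cdot \det(I + uw^\top)$ with $u := A^{-1}v$ and $w := v$. The whole statement then reduces to the scalar identity
\[
\det(I + uw^\top) = 1 + w^\top u,
\]
which needs to be proved for arbitrary column vectors $u, w \in \mathbb{R}^n$.

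To establish this reduced identity, I would compute the spectrum of $I + uw^\top$ directly. Because $uw^\top$ annihilates the $(n-1)$-dimensional hyperplane $w^\perp$, every vector in $w^\perp$ is an eigenvector of $I + uw^\top$ with eigenvalue $1$. Moreover, $(I + uw^\top)u = u + u(w^\top u) = (1 + w^\top u)\,u$, so $u$ is an eigenvector with eigenvalue $1 + w^\top u$; when $w^\top u \neq 0$, $u$ is automatically independent of $w^\perp$, the list of eigenvectors is complete, and multiplying them gives $\det(I + uw^\top) = 1 + w^\top u$. Substituting back produces $\det(A + vv^\top) = \det(A)(1 + v^\top A^{-1} v)$.

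The only subtlety is the degenerate case $w^\top u = 0$, where $u$ itself may lie in $w^\perp$ and the preceding eigenvector argument no longer exhausts $\mathbb{R}^n$. In that case, however, $(uw^\top)^2 = u(w^\top u)w^\top = 0$, so $uw^\top$ is nilpotent and has only the eigenvalue $0$; consequently every eigenvalue of $I + uw^\top$ equals $1$, and $\det(I + uw^\top) = 1 = 1 + w^\top u$ still holds. If one wishes to avoid this case distinction altogether, the argument can be packaged via a block-matrix identity: setting
\[
M = \begin{pmatrix} 1 & -w^\top \\ u & I \end{pmatrix},
\]
and computing $\det M$ by both Schur complements (eliminating the $(1,1)$ entry and the $(2,2)$ block in turn) yields $\det(I + uw^\top) = 1 + w^\top u$ in a single line. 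I do not anticipate any substantive obstacle here; the matrix determinant lemma is a classical identity, and either of the two routes above delivers a self-contained proof in just a few lines.
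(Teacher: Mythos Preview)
Your argument is correct: factoring $A+vv^\top = A(I+A^{-1}vv^\top)$ and then computing $\det(I+uw^\top)$ either by reading off the spectrum of the rank-one perturbation or via the $2\times 2$ block Schur complement is the standard proof, and both the generic and degenerate ($w^\top u=0$) cases are handled cleanly. Note, however, that the paper does not actually prove this lemma at all --- it simply states the matrix determinant lemma as a known tool, citing \cite{BatsonSS12}, and proceeds directly to use it in the proof of Lemma~\ref{lemma:laplacian_eigenvalues_perturbation}. So there is no ``paper's own proof'' to compare against; your proposal supplies a self-contained justification where the paper relies on a citation.
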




\begin{proofof}{Lemma~\ref{lemma:laplacian_eigenvalues_perturbation}}

For ease of exposition, we start with $k=1$. Namely, we remove one row $\mathbf{a}_{\ell}$ from $A$ to obtain $A'$. 
Consider the characteristic polynomial of $(A')^\top A'$:
  \begin{align*}
      det(xI - (A')^\top A') 
      & = det(xI - A^{\top} A + \mathbf{a}_{\ell} \mathbf{a}_{\ell}^{\top}) \\
      & = det \left( (xI - A^{\top} A) \cdot \left( I + (xI - A^{\top} A)^{\dagger} \mathbf{a}_{\ell} \mathbf{a}_{\ell}^{\top} \right) \right) \\
      & = det \left( xI - A^{\top} A \right) \cdot det \left(  I + (xI - A^{\top} A)^{\dagger} \mathbf{a}_{\ell} \mathbf{a}_{\ell}^{\top}  \right).
  \end{align*}

  Let $\psi_i$ be the corresponding eigenvector of $\lambda_i$ in $A$ such that
  $
  (xI - A^{\top} A)^{\dagger} \mathbf{a}_{\ell} \mathbf{a}_{\ell}^{\top} = \left( \sum \limits_{i\in [n]: \lambda_i \neq x} \frac{1}{x - \lambda_i} \psi_i \psi_i^{\top} \right) \mathbf{a}_{\ell} \mathbf{a}_{\ell}^{\top}$. Then we apply the matrix determinant lemma (Lemma~\ref{lem:matrix_determinant}) to $det \left(  I + (xI - A^{\top} A)^{\dagger} \mathbf{a}_{\ell} \mathbf{a}_{\ell}^{\top}  \right)$ such that
  \begin{align*}
    det(xI - (A')^\top A') & = det(xI - A^\top A) \cdot \left( 1 + \mathbf{a}_{\ell}^{\top} \cdot \sum\limits_{i=1}^{n} \frac{1}{x - \lambda_i} \psi_i \psi_i^{\top} \cdot \mathbf{a}_{\ell} \right)\\
    & = det(xI-A^\top A) \cdot \left( 1 - \sum\limits_{i=1}^{n} \frac{1}{\lambda_i - x} \langle \psi_i, \mathbf{a}_\ell \rangle^2 \right). 
  \end{align*}

  Let $\lambda'_1, \lambda'_2, \cdots, \lambda'_n$ be the roots of $det(xI - (A')^\top A')$. Similar to the argument in~\cite{BatsonSS12}: If $\langle \psi_i, \mathbf{a}_\ell \rangle = 0$, then $\lambda_i$ is a root of $det(xI-A^\top A)$, i.e., $\lambda'_i = \lambda_i$. Otherwise, $\lambda'_i \in (\lambda_{i-1},\lambda_i)$ satisfies $ \sum\limits_{j=1}^{n} \frac{1}{\lambda_j - \lambda'_i} \langle \psi_, \mathbf{a}_\ell \rangle^2 = 1$. This is because $\underset{x \rightarrow \lambda_{i-1} + }{\lim} \sum\limits_{i=1}^{n} \frac{1}{\lambda_i - x} \langle \psi_i, \mathbf{a}_\ell \rangle^2 = -\infty$ and $\underset{x \rightarrow \lambda_{i} - }{\lim} \sum\limits_{i=1}^{n} \frac{1}{\lambda_i - x} \langle \psi_i, \mathbf{a}_\ell \rangle^2 = +\infty$. For the 2nd case, we show $\lambda'_i \ge (1-\tau_{\ell}) \lambda_i$.

  Let the function $p(x) := \sum\limits_{j=1}^{n} \frac{1}{\lambda_j - x} \langle \psi_j, \mathbf{a}_\ell \rangle^2$. If $\lambda_{i-1} \ge (1-\tau_{\ell})\lambda_i$, then we have proved $\lambda'_i > (1-\tau_{\ell})\lambda_i$. Otherwise we show $\lambda'_i > (1-\tau_{\ell})\lambda_i$ by considering $p(x)$ in the continuous interval $\big[ (1 - \tau_\ell)\lambda_i, \lambda_i \big)$.
  
  \begin{align*}
      p \bigg( (1 - \tau_\ell)\lambda_i \bigg) 
      & = \sum\limits_{j=1}^{n} \frac{1}{\lambda_j - (1 - \tau_\ell)\lambda_i} \langle \psi_, \mathbf{a}_\ell \rangle^2 \\
      & \le \sum\limits_{j=i}^{n} \frac{1}{\lambda_j - (1 - \tau_\ell)\lambda_i} \langle \psi_, \mathbf{a}_\ell \rangle^2 \tag{Since $\lambda_1<\cdots<\lambda_{i-1}<(1-\tau_\ell)\lambda_i$ from the assumption, their corresponding terms are negative.} \\ 
      & \le \sum\limits_{j=i}^{n} \frac{1}{\lambda_j - (1 - \tau_\ell)\lambda_j} \langle \psi_, \mathbf{a}_\ell \rangle^2 = \frac{1}{\tau_\ell} \sum\limits_{j=i}^{n} \frac{1}{\lambda_j} \langle \psi_, \mathbf{a}_\ell \rangle^2
  \end{align*}
  
  From the definition, $\tau_\ell = \mathbf{a}_\ell^{\top} (A^{\top} A)^{\dagger} \mathbf{a}_\ell = \mathbf{a}_\ell^{\top} \left(\sum\limits_{i=1}^{n} \frac{1}{\lambda_i} \psi_i \psi_i^{\top} \right) \mathbf{a}_\ell = \sum\limits_{i=1}^{n} \frac{1}{\lambda_i} \langle \psi_i, \mathbf{a}_\ell \rangle^2$. So $p \big( (1 - \tau_\ell)\lambda_i \big) \le 1$. 
  
  On the other hand, $\underset{x \rightarrow \lambda_{i}-}{\lim} \sum\limits_{i=1}^{n} \frac{1}{\lambda_i - x} \langle \psi_i, \mathbf{a}_\ell \rangle^2 = +\infty$. So
  $p \big( (1 - \tau_\ell)\lambda_i \big) \le 1$ and $p(\lambda_i - \epsilon) > 1$ infer that there exists a $x \in [(1 - \tau_\ell)\lambda_i, \lambda_i)$ such that $p(x) = 1$, which also means that $det(xI - (A')^\top A')$ has a root $\lambda'_i \in [(1 - \tau_\ell)\lambda_i, \lambda_i)$.

  Next we prove $\lambda_i' \ge (1-\tau_{\ell_1}-\cdots-\tau_{\ell_k}) \lambda_i$ by induction on $k$. The above calculation proves the base case of $k=1$.

  For the inductive step, let $\tilde{A}$ denote the matrix after removing $\mathbf{a}_{\ell_1},\ldots,\mathbf{a}_{\ell_q}$ and $A'$ denote the matrix by removing one more edge $a_{\ell_{q+1}}$. By the induction hypothesis, $\lambda_i(\tilde{A}) \in \left[ \big( 1 - \sum\limits_{j=1}^{q} \tau_{\ell_j} \big) \cdot \lambda_i, \lambda_i \right]$ and $( 1 - \sum\limits_{j=1}^{q} \tau_{\ell_j} ) \cdot A^\top A \preceq \tilde{A}^{\top} \tilde{A}$. This implies $(\tilde{A}^{\top} \tilde{A})^{\dagger} \preceq ( 1 - \sum\limits_{j=1}^{q} \tau_{\ell_j} )^{-1} \cdot (A^\top A  )^{\dagger}$,
  
  \[ \tilde{\tau}_{\ell_{q+1}} = \mathbf{a}_{\ell_{q+1}}^{\top} (\tilde{A}^{\top} \tilde{A})^{\dagger} \mathbf{a}_{\ell_{q+1}} \le \left( 1 - \sum\limits_{j=1}^{q} \tau_{\ell_j} \right)^{-1} \mathbf{a}_{\ell_{q+1}}^{\top} ({A}^{\top} {A})^{\dagger} \mathbf{a}_{\ell_{q+1}} = \left( 1 - \sum\limits_{j=1}^{q} \tau_{\ell_j} \right)^{-1} \tau_{\ell_{q+1}}. \] 
  
  Using the perturbation bound for $k=1$ on $\tilde{A}$ and $A'$, 
  \begin{align*}
      \lambda_i(A') & \in \left[ (1-\tilde{\tau}_{\ell_{q+1}}) \lambda_i(\tilde{A}), \lambda_i(\tilde{A}) \right]\\
      & \subseteq \left[ \left( 1 -  \left( 1 - \sum\limits_{j=1}^{q} \tau_{\ell_j} \right)^{-1} \tau_{\ell_{q+1}} \right) \left( 1 - \sum\limits_{j=1}^{q} \tau_{\ell_j} \right) \lambda_i, \lambda_i \right] = \left[ \left( 1 - \sum\limits_{j=1}^{q+1} \tau_{\ell_j} \right) \lambda_i, \lambda_i \right].
  \end{align*}
\end{proofof}

Next we present the 2nd proof, which is based on Property 4 of the leverage score in Lemma~\ref{lem:basic_facts}. One advantage of this proof is that it works directly on multiple edges.

\begin{proofof}{Lemma~\ref{lemma:laplacian_eigenvalues_perturbation}}
  We will apply the Courant-Fischer theorem below, which shows $\lambda_i' = \min\limits_{\begin{subarray}{c} S \subset \mathbb{R}^n, \\ dim(S) = i \end{subarray} } \max\limits_{\phi \in S} \frac{\phi^{\top} (A')^{\top} A' \phi}{\phi^{\top} \phi}$.
  \begin{lemma}[Courant-Fischer-Weyl theorem] \label{lemma:CFW_theorem}
      Let $H$ be an $n \times n$ Hermitian matrix with eigenvalues $\lambda_1(H) \le \lambda_2(H) \le \cdots \le \lambda_n(H)$, then \[ \lambda_k(H) = \min\limits_{\begin{subarray}{c} S \subset \mathbb{R}^n, \\ dim(S) = k \end{subarray} } \max\limits_{x \in S} \frac{x^{\top} H x}{x^{\top} x}   
      \qquad \text{ and } 
      \max\limits_{\begin{subarray}{c} S \subset \mathbb{R}^n, \\ dim(S) = n-k+1 \end{subarray}}  \min\limits_{x \in S} \frac{x^{\top} H x}{x^{\top} x}. \]
  \end{lemma}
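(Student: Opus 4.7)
The plan is to derive the Courant--Fischer--Weyl identity using the spectral theorem for $H$ together with a short dimension-counting argument. By the spectral theorem, $H$ admits an orthonormal eigenbasis $\psi_1,\ldots,\psi_n$ with $H\psi_i=\lambda_i(H)\psi_i$. I will establish the min--max identity $\lambda_k(H)=\min_{\dim S=k}\max_{x\in S}\frac{x^\top H x}{x^\top x}$ first, from which the dual max--min identity follows either by a mirror argument or by applying the first identity to $-H$ and using $\lambda_i(-H)=-\lambda_{n-i+1}(H)$.

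For the upper bound on the outer minimum, I would exhibit the distinguished subspace $S^{\star}=\mathrm{span}(\psi_1,\ldots,\psi_k)$. Any $x\in S^{\star}$ expands as $x=\sum_{i=1}^{k}\alpha_i\psi_i$, so the Rayleigh quotient $\frac{x^\top H x}{x^\top x}=\frac{\sum_{i\le k}\alpha_i^2\lambda_i(H)}{\sum_{i\le k}\alpha_i^2}$ is a convex combination of $\lambda_1(H),\ldots,\lambda_k(H)$, hence at most $\lambda_k(H)$. This certifies that the outer minimum is at most $\lambda_k(H)$.

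For the matching lower bound, I would take an arbitrary $k$-dimensional subspace $S$ and pair it with $T=\mathrm{span}(\psi_k,\ldots,\psi_n)$. Since $\dim S+\dim T=k+(n-k+1)=n+1>n$, the intersection $S\cap T$ contains a nonzero vector, and its Rayleigh quotient is a convex combination of $\lambda_k(H),\ldots,\lambda_n(H)$ and therefore at least $\lambda_k(H)$. This forces $\max_{x\in S}\frac{x^\top H x}{x^\top x}\ge\lambda_k(H)$, and taking the minimum over all such $S$ closes the min--max identity. The dual max--min identity is obtained by swapping the roles of $\mathrm{span}(\psi_1,\ldots,\psi_k)$ and $\mathrm{span}(\psi_k,\ldots,\psi_n)$ in the two bounds.

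The only step requiring any real care is the dimension count $k+(n-k+1)>n$, which is exactly what forces the intersection of the two subspaces to be nontrivial and powers the lower bound; everything else reduces to routine linear algebra. Since this is a classical result with a well-known elementary proof, I do not anticipate any genuine obstacle beyond keeping the eigenvalue bookkeeping straight.
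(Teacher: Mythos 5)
Your proof is correct and is the classical Courant--Fischer argument via the spectral theorem and the dimension count $\dim S + \dim T = k + (n-k+1) = n+1 > n$ forcing a nontrivial intersection. The paper itself states this lemma without proof, treating it as a standard fact, so there is no paper proof to compare against; your write-up supplies exactly the expected textbook argument, and the reduction of the max--min form to the min--max form via $-H$ (using $\lambda_j(-H) = -\lambda_{n-j+1}(H)$) is sound.
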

  
 We compare the Rayleigh quotient $\frac{\phi^{\top} (A')^{\top} A' \phi}{\phi^{\top} \phi}$ with $\frac{\phi^{\top} A^{\top} A \phi}{\phi^{\top} \phi}$:
  \[ \frac{\phi^{\top} (A')^{\top} A' \phi}{\phi^{\top} \phi} 
  = \frac{\left\| A' \phi \right\|^2_2}{\left\| \phi \right\|^2_2} 
  = \frac{\left\| A \phi \right\|^2_2 - \sum\limits_{j=1}^{k} \left( a_{\ell_j}^{\top} \phi \right)^2}{\left\| \phi \right\|^2_2} 
  = \frac{\left\| A \phi \right\|^2_2}{\left\| \phi \right\|^2_2} \cdot \left[ 1 - \sum\limits_{j=1}^{k} \frac{ \left( a_{\ell_j}^{\top} \phi \right)^2}{\left\| A \phi \right\|^2_2} \right]. \] 
Since $\tau_{\ell} = \max\limits_{\phi \in \mathbb{R}^n } \frac{ \left( a_{\ell}^{\top} \phi \right)^2}{\left\| A \phi \right\|^2_2}$ from Property 4 of Lemma~\ref{lem:basic_facts}, 
\[
1 - \sum\limits_{j=1}^{k} \frac{ \left( a_{\ell_j}^{\top} \phi \right)^2}{\left\| A \phi \right\|^2_2} \ge 1 - \sum\limits_{j=1}^{k} \tau_{\ell_j}.\]
Using Lemma~\ref{lemma:CFW_theorem} again, 
  \[ \lambda_i' 
  = \min\limits_{\begin{subarray}{c} S \subset \mathbb{R}^n, \\ dim(S) = \phi \end{subarray} } \max\limits_{\phi \in S} \frac{\left\| A \phi \right\|^2_2}{\left\| \phi \right\|^2_2} \cdot \left[ 1 - \sum\limits_{j=1}^{k} \frac{ \left( a_{\ell_j}^{\top} \phi \right)^2}{\left\| A \phi \right\|^2_2} \right] 
  \ge \left(1 - \sum\limits_{j=1}^{k} \tau_{\ell_j} \right) \min\limits_{\begin{subarray}{c} S \subset \mathbb{R}^n, \\ dim(S) = \phi \end{subarray} } \max\limits_{\phi \in S} \frac{\left\| A \phi \right\|^2_2}{\left\| \phi \right\|^2_2} 
  = (1 - \sum\limits_{j=1}^{k} \tau_{\ell_j})\lambda_i.  \]
  
  As $1 - \sum\limits_{j=1}^{k} \frac{ \left( a_{\ell_j}^{\top} \phi \right)^2}{\left\| A \phi \right\|^2_2} \le 1$,  we can get $\lambda_i' \le \lambda_i$ in a similar way. Combining the two inequalities, $\lambda'_i \in [(1 - \sum\limits_{j=1}^{k} \tau_{\ell_j})\lambda_i, \lambda_i]$.
\end{proofof}

We remark that Lemma~\ref{lemma:laplacian_eigenvalues_perturbation} is also an instantiation of Theorem 2.8 in \cite{ipsen_1998} with leverage scores.
However, we believe the above two proofs shed more insights on the structure of perturbed matrices and are simplier (without Weyl's interlacing inequality). But for completeness, we provide that proof in  Appendix~\ref{append}.

Next we show that there exist edges in expander graphs with unit effective resistance.
\begin{claim}\label{clm:expander_unit_effect_resis}
For any $c_0>0$ and infinitely many $n$, there exists an expander graph $G$ of constant degree such that
\begin{enumerate}
    \item The smallest non-trivial eigenvalue of its Laplacian is at least $c_0$;
    \item There exists an edge $e$ in $G$ with effective resistance 1. 
\end{enumerate}
Hence, after removing $e$, $\lambda_2(L_{G'})=0$ in the perturbed graph $G'$ such that $G'$ is no longer an expander.
\end{claim}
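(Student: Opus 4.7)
The plan is to attach a single pendant vertex to a near-Ramanujan graph of large constant degree. The new edge is automatically a bridge, so by Property~5 of Lemma~\ref{lem:basic_facts} it lies in every spanning tree and has effective resistance exactly $1$, while the expander backbone keeps $\lambda_2(L_G)$ bounded below by a positive constant. Concretely, given $c_0>0$ I would fix a constant $d$ sufficiently large (specified below), let $H$ be a $d$-regular near-Ramanujan graph on $n$ vertices with $\lambda_2(L_H)\ge d-2.01\sqrt{d-1}$, pick any $v\in V(H)$, and set $G=(V(H)\cup\{p\},\,E(H)\cup\{e\})$ with $e=(v,p)$ for a new vertex $p$. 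Then $G$ has constant maximum degree $d+1$; the edge $e$ is a bridge so $R_G(e)=1$; and deleting $e$ isolates $p$, so $G'$ is disconnected and $\lambda_2(L_{G'})=0$, giving the concluding observation of the claim.

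The technical step is lower bounding $\lambda_2(L_G)$. I would proceed via the Rayleigh quotient. Any $\psi\in\mathbb{R}^{V(G)}$ with $\psi\perp\mathbf{1}$ decomposes as $(\phi,b)$ with $\phi=\psi|_{V(H)}$, $b=\psi(p)$, and $\mathbf{1}^\top\phi+b=0$. Writing $\phi=-(b/n)\mathbf{1}+\phi^\perp$ with $\phi^\perp\perp\mathbf{1}_{V(H)}$, one obtains
\[
\psi^\top L_G\psi \;=\; (\phi^\perp)^\top L_H\phi^\perp+\bigl(\phi^\perp(v)-b(n+1)/n\bigr)^2, \qquad \|\psi\|^2 \;=\; \|\phi^\perp\|^2+b^2(n+1)/n.
\]
Using $(\phi^\perp)^\top L_H\phi^\perp\ge\lambda_2(L_H)\|\phi^\perp\|^2$ and $|\phi^\perp(v)|\le\|\phi^\perp\|$, then normalizing the denominator to $1$ and splitting into cases based on whether $|b(n+1)/n|\le\|\phi^\perp\|$, the minimization reduces, after the substitution $\|\phi^\perp\|^2=\sin^2\theta$, to a closed-form expression in $\alpha=\lambda_2(L_H)$, namely $1+\alpha/2-\sqrt{\alpha^2+4}/2$ up to an additive $o_n(1)$. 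This function is increasing in $\alpha$ and approaches $1$ as $\alpha\to\infty$, so for any fixed $c_0$ below this asymptotic, choosing $d$ large enough (hence $\alpha$ large enough) yields $\lambda_2(L_G)\ge c_0$ for all sufficiently large $n$.

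The main obstacle is executing the case split and one-dimensional optimization in the Rayleigh quotient cleanly; the rest is routine structural observations. A cruder but simpler alternative would be to invoke a combinatorial Cheeger inequality of the form $\lambda_2(L_G)\ge h(G)^2/(2(d+1))$ together with the bound $h(G)\ge \min(1,h(H)/2)$, obtained by the elementary estimate $|\partial_G S|\ge|\partial_H(S\cap V(H))|$ for $|S|\ge 2$ and by $h(H)\ge \lambda_2(L_H)/2$; this yields a weaker absolute constant but the same qualitative conclusion that $\lambda_2(L_G)$ stays bounded away from zero independently of $n$.
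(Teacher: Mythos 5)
Your construction is exactly the one the paper uses: attach a single pendant vertex to a constant-degree expander, observe the new edge is a bridge (hence has effective resistance $1$ by Property~5 of Lemma~\ref{lem:basic_facts}), and then bound $\lambda_2(L_G)$ via the Rayleigh quotient over $\psi\perp\mathbf{1}$. The difference is in how the Rayleigh quotient is handled. The paper makes a crude but short case split on the pendant coordinate: if $|\phi(n)|\ge 0.8$ the bridge term alone contributes $\ge 0.04$; otherwise $\|\phi([n-1])\|\ge 0.6$ and, after projecting out $\mathbf{1}$ on $V(H)$, the expander term contributes $\ge 0.3\,\lambda_2(L_H)$, giving $\lambda_2(L_G)\ge\min(0.04,\,0.3\,\lambda_2(L_H))$. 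You instead perform the exact orthogonal decomposition $\phi=-(b/n)\mathbf{1}+\phi^\perp$, reduce to a one-dimensional optimization in $P=\|\phi^\perp\|$, and solve it in closed form to get $\lambda_2(L_G)\gtrsim 1+\alpha/2-\sqrt{\alpha^2+4}/2$ with $\alpha=\lambda_2(L_H)$. Your bound is strictly sharper: it tends to $1$ as $d\to\infty$, which is optimal since any bridge forces $\lambda_2(L_G)\le n/(n-1)$ via the cut indicator vector, whereas the paper only extracts an absolute constant like $0.04$. Both establish the statement in the sense the paper actually uses it (namely, $G$ is an expander with a unit-resistance edge); note that the literal ``for any $c_0>0$'' in the claim is unattainable for $c_0>1$ by the bridge obstruction above, and the paper's own proof does not attain it either — you are explicit about the $c_0<1$ restriction, which is the honest statement. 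Your alternative combinatorial-Cheeger route is also sound in spirit (the boundary inequality $|\partial_G S|\ge|\partial_H(S\cap V(H))|$ for $|S|\ge 2$ gives $h(G)\ge\min(1,h(H)/2)$), though it would need the correct normalization conventions in both directions of Cheeger; as written, the Rayleigh-quotient route is cleaner.
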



\begin{proof}
Given an expander $G'=(V',E')$ with constant degree and $\lambda_2(L_G)=\Omega(1)$, we add an extra vertex of degree $1$ to it. More formally, let $V'=\{1,\ldots,n-1\}$ and $G = (V, E) = (V' \cup \{ n \}, E' \cup \{(n-1,n)\} )$. Then $R_G(n-1,n) = 1$. We will show $\lambda_2\big( L_{G'} \big)=\Omega(1)$ such that $G'$ is also an expander (its degree is still a constant).     So $L_G(i,j)=L_{G'}(i,j)$ except $i \in \{n-1,n\}$ and $j \in \{n-1,n\}$ where those four entries have $L_G(n-1,n-1)=L_{G'}(n-1,n-1)+1, L_G(n-1,n)=L_G(n,n-1) = -1, $ and $L_G(n,n)=1$.
  

    By the Courant-Fischer Theorem, $\lambda_2(L_G) = \min\limits_{\phi \bot \mathbf{1}, \norm{\phi}_2 = 1} \phi^{\top} L_G \phi$. Then we prove $ \phi^{\top} L_G \phi = \Omega(1)$ for all $\phi$ with $\phi \bot \mathbf{1}$ and $\norm{\phi}_2 = 1$ by considering $\phi(n)$ in two cases.

    If $|\phi(n)| \ge 0.8$, then $|\phi(n-1)| \le 0.6$ because $\|\phi\|_2=1$. So $\phi^{\top} L_G \phi = \sum\limits_{(u,v) \in E}(\phi(u)-\phi(v))^2 \ge (\phi(n) - \phi(n-1))^2 \ge 0.04$.
    
    Otherwise $|\phi(n)| < 0.8$. For convenience, we assume $\phi(n) \in [0,0.8)$. Let $\phi([n-1]) \in \mathbb{R}^{[n-1]}$ denotes the sub-vector on $V'$ and $\mathbf{1}([n-1])$ denotes the all $1$-vector on $V'$. Then $\norm{\phi([n-1])} \ge 0.6$ and \[
    \phi^{\top} L_G \phi = \sum\limits_{(u,v) \in E}(\phi(u)-\phi(v))^2 \ge \sum\limits_{(u,v) \in E'} (\phi(u)-\phi(v))^2=\phi([n-1])^\top \cdot L_{G'} \cdot \phi([n-1]).\]

    Since $L_{G'}$ has an eigenvalue $0$ with eigenvector $\mathbf{1}([n-1])$, we consider $\phi([n-1])$ after orthonormalization: $\phi([n-1]) - \frac{\langle \phi([n-1]), \mathbf{1}([n-1]) \rangle}{n-1} \cdot \mathbf{1}([n-1])$. 
    
Note that $\langle \phi, \mathbf{1} \rangle = 0$ implies $\langle \phi([n-1]), \mathbf{1}([n-1]) \rangle = - \phi(n)$. We calculate its $L_2$ norm after orthonormalization as
    \begin{align*}
    & \left\| \phi([n-1]) - \frac{\langle \phi([n-1]), \mathbf{1}([n-1]) \rangle}{n-1} \cdot \mathbf{1}([n-1]) \right\|^2 \\
    = & \left\|\phi([n-1]) + \frac{\phi(n)}{n-1} \cdot \mathbf{1}([n-1]) \right\|^2 \\
    = & \left\|\phi([n-1])\right\|^2 + \left\|\frac{\phi(n)}{n-1} \cdot \mathbf{1}([n-1]) \right\|^2 + \frac{2\phi(n)}{n-1} \cdot \langle \phi([n-1]), \mathbf{1}([n-1]) \rangle  \\
    \ge & \left\|\phi([n-1])\right\|^2 - \frac{2\phi(n)}{n-1} \cdot \phi(n) \ge 0.36 - \frac{2}{n-1} \ge 0.3.     
    \end{align*}
    So $\phi^{\top} L_G \phi \ge \lambda_2(L_{G'}) \cdot \left\|\phi([n-1]) - \frac{\langle \phi([n-1]), \mathbf{1}([n-1]) \rangle}{n-1} \cdot \mathbf{1}([n-1]) \right\|^2 = \Omega(1)$.

   From all discussion above, $\lambda_2(L_G) = \Omega(1)$. Since $G$ has a constant degree, $\lambda_2(\tilde L_G) \ge  \lambda_2(L_G) / d(G) = \Omega(1)$, which means $G$ is also an expander.
\end{proof}

\section{Lower Bound for Degree 3}\label{sec:lower_bound_deg_3}
In this section, we prove the lower bound for graphs with degrees at most $3$ (except the given pair $s$ and $t$). Recall the statement of Theorem~\ref{thm:lower_bound_small_degree} in Section~\ref{sec:intro}.

\lowerboundsmalldegree*

One ingredient in the proof is Lemma~\ref{lem:removing_edge_eff_res}, which bounds the expansion of a perturbed graph $G'$ obtained from removing one edge $e$ of the original graph $G$, in terms of the eigenvalues of the Laplacian of $G$ and the effective resistance $R_G(e)$.

In the rest of this section, we finish the proof of Theorem~\ref{thm:lower_bound_small_degree}. The high level idea is to consider a graph $G$ of degree 3 (see Figure~\ref{ex:dumbshell} for an illustration) constituted by two disjoint expanders $H_s$ and $H_t$ with one extra edge between $s$ and $t$ in these two expanders separately. 

Then we produce a random graph $G'$ as follows (see Figure~\ref{ex:modification} for an illustration): remove one random edge $(u,u')$ in $H_s$ and another one $(v,v')$ in $H_t$ separately; then add two edges $(u,v)$ and $(u',v')$ to $G'$. 
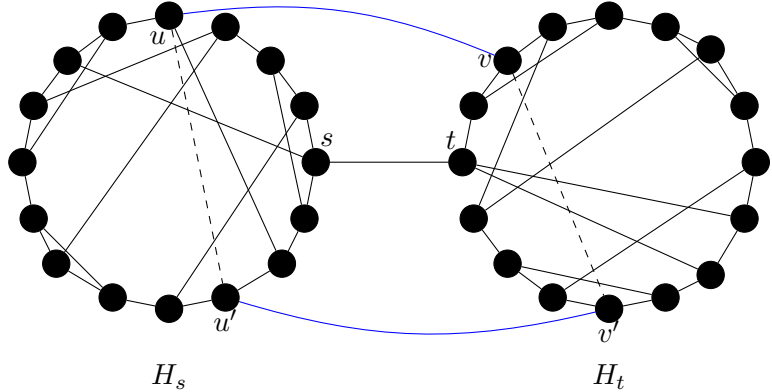
\begin{figure}[h]
    \centering
    \caption{$G'$ after modification}\label{ex:modification}
    \begin{tikzpicture}[scale = 0.6]
        \tikzstyle{vertex}=[fill=black, draw=black, shape=circle]
		\node [style=vertex] (0) at (-0.5, 3.5) {};
		\node [style=vertex] (2) at (1.25, 5.25) {};
		\node [style=vertex] (3) at (4.75, 4.5) {};
		\node [style=vertex] (4) at (0.25, 4.5) {};
		\node [style=vertex] (5) at (3.75, -0.75) {};
		\node [style=vertex] (6) at (-0.75, 2.25) {};
		\node [style=vertex] (7) at (5, 0) {};
		\node [style=vertex] (8) at (2.5, -1) {};
		\node [style=vertex] (9) at (5.5, 3.5) {};
		\node [style=vertex] (10) at (2.5, 5.5) {};
		\node [style=vertex] (11) at (0, 0) {};
		\node [style=vertex] (12) at (5.5, 1) {};
		\node [style=vertex] (13) at (3.75, 5.25) {};
		\node [style=vertex] (14) at (5.75, 2.25) {};
		\node [style=vertex] (15) at (1.25, -0.75) {};
		\node [style=vertex] (16) at (-0.5, 1) {};
		\node [style=vertex] (17) at (11, 5.25) {};
		\node [style=vertex] (18) at (9.25, 3.5) {};
		\node [style=vertex] (19) at (10, 0) {};
		\node [style=vertex] (20) at (10, 4.5) {};
		\node [style=vertex] (21) at (15.25, 1) {};
		\node [style=vertex] (22) at (12.25, 5.5) {};
		\node [style=vertex] (23) at (14.5, -0.25) {};
		\node [style=vertex] (24) at (15.5, 2.25) {};
		\node [style=vertex] (25) at (11, -0.75) {};
		\node [style=vertex] (26) at (9, 2.25) {};
		\node [style=vertex] (27) at (14.5, 4.75) {};
		\node [style=vertex] (28) at (13.5, -0.75) {};
		\node [style=vertex] (29) at (9.25, 1) {};
		\node [style=vertex] (30) at (12.25, -1) {};
		\node [style=vertex] (31) at (15.25, 3.5) {};
		\node [style=vertex] (32) at (13.5, 5.25) {};
		\draw (3) to (9);
		\draw (2) to (4);
		\draw (0) to (4);
		\draw (0) to (6);
		\draw (2) to (6);
		\draw (5) to (7);
		\draw (15) to (8);
		\draw (4) to (14);
		\draw (0) to (13);
		\draw (14) to (12);
		\draw (12) to (7);
		\draw (13) to (11);
		\draw (16) to (11);
		\draw (10) to (7);
		\draw (3) to (13);
		\draw (3) to (12);
		\draw (9) to (8);
		\draw (16) to (15);
		\draw (9) to (14);
		\draw (11) to (15);
		\draw (5) to (8);
		\draw (2) to (10);
		\draw (6) to (16);
		\draw (19) to (25);
		\draw (18) to (20);
		\draw (17) to (20);
		\draw (17) to (22);
		\draw (18) to (22);
		\draw (21) to (23);
		\draw (31) to (24);
		\draw (17) to (29);
		\draw (30) to (28);
		\draw (28) to (23);
		\draw (29) to (27);
		\draw (32) to (27);
		\draw (26) to (21);
		\draw (26) to (23);
		\draw (19) to (29);
		\draw (19) to (28);
		\draw (25) to (24);
		\draw (32) to (31);
		\draw (25) to (30);
		\draw (27) to (31);
		\draw (21) to (24);
		\draw (18) to (26);
		\draw (22) to (32);
		\draw (14) to (26);
		\draw [dashed] (10) to (5);
		\draw [dashed] (20) to (30);
		\draw [draw=blue, bend left=15] (10) to (20);
		\draw [draw=blue, bend right=15] (5) to (30);
        \node at(6, 2.75) {$s$};
        \node at(8.75, 2.75) {$t$};
        \node at(2.25, 5) {$u$};
        \node at(3.75, -1.25) {$u'$};
        \node at(9.5, 4.5)  {$v$};
        \node at(12.25, -1.5) {$v'$};
        \node at(2.5, -2.5)  {$H_s$};
        \node at(12.25, -2.5) {$H_t$};
    \end{tikzpicture}
\end{figure}
So the number of vertices in $G$ and $G'$ are the same; moreover all vertices have the same degrees. Thus the only way to distinguish between $G$ and $G'$ is to figure out whether any of these four edges $(u,u'), (v,v'), (u,v), (u',v')$ is in the graph or not. Since there are $3n/2$ edges, this needs $\Omega(n)$ queries. Finally we bound $R_{G'}(s,t)$ (and the approximation ratio)  via Lemma~\ref{lem:removing_edge_eff_res}.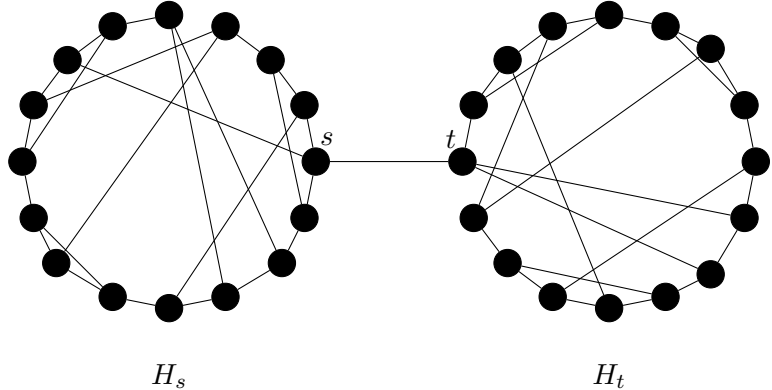
\begin{figure}[h]
    \centering
    \caption{Construction of $G$ with two expanders $H_s$ and $H_t$}\label{ex:dumbshell}
    \begin{tikzpicture}[scale = 0.6]
        \tikzstyle{vertex}=[fill=black, draw=black, shape=circle]
        \node [style=vertex] (0) at (-0.5, 3.5) {};
        \node [style=vertex] (2) at (1.25, 5.25) {};
        \node [style=vertex] (3) at (4.75, 4.5) {};
        \node [style=vertex] (4) at (0.25, 4.5) {};
        \node [style=vertex] (5) at (3.75, -0.75) {};
        \node [style=vertex] (6) at (-0.75, 2.25) {};
        \node [style=vertex] (7) at (5, 0) {};
        \node [style=vertex] (8) at (2.5, -1) {};
        \node [style=vertex] (9) at (5.5, 3.5) {};
        \node [style=vertex] (10) at (2.5, 5.5) {};
        \node [style=vertex] (11) at (0, 0) {};
        \node [style=vertex] (12) at (5.5, 1) {};
        \node [style=vertex] (13) at (3.75, 5.25) {};
        \node [style=vertex] (14) at (5.75, 2.25) {};
        \node [style=vertex] (15) at (1.25, -0.75) {};
        \node [style=vertex] (16) at (-0.5, 1) {};
        \node [style=vertex] (17) at (11, 5.25) {};
        \node [style=vertex] (18) at (9.25, 3.5) {};
        \node [style=vertex] (19) at (10, 0) {};
        \node [style=vertex] (20) at (10, 4.5) {};
        \node [style=vertex] (21) at (15.25, 1) {};
        \node [style=vertex] (22) at (12.25, 5.5) {};
        \node [style=vertex] (23) at (14.5, -0.25) {};
        \node [style=vertex] (24) at (15.5, 2.25) {};
        \node [style=vertex] (25) at (11, -0.75) {};
        \node [style=vertex] (26) at (9, 2.25) {};
        \node [style=vertex] (27) at (14.5, 4.75) {};
        \node [style=vertex] (28) at (13.5, -0.75) {};
        \node [style=vertex] (29) at (9.25, 1) {};
        \node [style=vertex] (30) at (12.25, -1) {};
        \node [style=vertex] (31) at (15.25, 3.5) {};
        \node [style=vertex] (32) at (13.5, 5.25) {};
        \draw (3) to (9);
        \draw (2) to (4);
        \draw (0) to (4);
        \draw (0) to (6);
        \draw (2) to (6);
        \draw (5) to (7);
        \draw (15) to (8);
        \draw (4) to (14);
        \draw (0) to (13);
        \draw (14) to (12);
        \draw (12) to (7);
        \draw (13) to (11);
        \draw (16) to (11);
        \draw (10) to (5);
        \draw (10) to (7);
        \draw (3) to (13);
        \draw (3) to (12);
        \draw (9) to (8);
        \draw (16) to (15);
        \draw (9) to (14);
        \draw (11) to (15);
        \draw (5) to (8);
        \draw (2) to (10);
        \draw (6) to (16);
        \draw (19) to (25);
        \draw (18) to (20);
        \draw (17) to (20);
        \draw (17) to (22);
        \draw (18) to (22);
        \draw (21) to (23);
        \draw (31) to (24);
        \draw (20) to (30);
        \draw (17) to (29);
        \draw (30) to (28);
        \draw (28) to (23);
        \draw (29) to (27);
        \draw (32) to (27);
        \draw (26) to (21);
        \draw (26) to (23);
        \draw (19) to (29);
        \draw (19) to (28);
        \draw (25) to (24);
        \draw (32) to (31);
        \draw (25) to (30);
        \draw (27) to (31);
        \draw (21) to (24);
        \draw (18) to (26);
        \draw (22) to (32);
        \draw (14) to (26);
        \node at(6, 2.75) {$s$};
        \node at(8.75, 2.75) {$t$};
        \node at(2.5, -2.5)  {$H_s$};
        \node at(12.25, -2.5) {$H_t$};
    \end{tikzpicture}
\end{figure}

\begin{proofof}{Theorem~\ref{thm:lower_bound_small_degree}}
  We provide a distribution $\mathcal{G}$ of graphs with $n$ vertices and degree at most $3$ (except $s$ and $t$). Then by Yao's minimax principle (see Lemma~\ref{Yao_minimax_principle}), we only need to consider deterministic algorithms of $q_n$ neighbor queries and $1.01$-approximation ratio whose success probability is at least 0.6 over $\mathcal{G}$. Our goal is to prove $q_n=\Omega(n)$.
  
  Consider any $n$ such that there exists a 3-regular Ramanujan graph $H$ of size $n/2$ \cite{margulis1988explicit,MORGENSTERN199444}. Then we construct $G$ with the given pair $(s,t)$ as follows:
  \begin{enumerate}
      \item Take two vertex-disjoint copies of $H$, denoted by $H_s$ and $H_t$, such that $H_s$ contains vertex $s$ and $H_t$ contains vertex $t$.
      \item Define the vertex set $V(G)$ of $G$ to be the union of the vertex sets of $H_s$ and $H_t$.
      \item Define the edge set $E(G)$ of $G$ to be the union of $(s,t)$, the edge set $E(H_s)$ of $H_s$ and the edge set $E(H_t)$ of $H_t$.
  \end{enumerate}
Note that the effective resistance $R_G(s,t)=1$  in $G$.

  Next, we construct a random graph $G'$ based on $G$. Let $E_{s,3/4}$ be the set of edges in $H_s$ with effective resistance at most $3/4$, i.e., $E_{s,3/4}=\big\{(u,v)\in E(H_s) \big| R_G(u,v) \le 3/4 \big\}$ and we define $E_{t,3/4}$ analogously. We use the following claim to lower bound the sizes of these two sets whose proof is deferred to Section~\ref{sec:omitted_proofs}.
  \begin{claim}\label{claim:number_possible_edges}
It holds that  $|E_{s,3/4}| \ge n/12$ and $|E_{t,3/4}| \ge n/12$.
  \end{claim}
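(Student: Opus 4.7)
The plan is to combine two ingredients: Rayleigh monotonicity (saying that attaching extra vertices and edges to $H_s$ cannot increase any effective resistance inside $H_s$) together with the identity $\sum_{e \in E(H)} R_H(e) = |V(H)| - 1$ from Lemma~\ref{lem:basic_facts}, applied separately to $H_s$ (and by the symmetric argument to $H_t$).

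The first step is to upper bound $\sum_{e \in E(H_s)} R_G(e)$. Using the variational characterization in Lemma~\ref{lem:basic_facts} Property 4, for any $u, v \in V(H_s)$ one can take the harmonic $\phi$ that achieves $R_{H_s}(u,v)$ inside $H_s$ and extend it to the rest of $G$ by declaring $\phi(x) = \phi(s)$ for every $x \in V(H_t)$. The bridge $(s,t)$ and every edge of $H_t$ then contribute zero to the Dirichlet energy, so the total $G$-energy of $\phi$ equals the $H_s$-energy of $\phi$, giving $R_G(u,v) \le R_{H_s}(u,v)$. Summing over $e \in E(H_s)$ and invoking Lemma~\ref{lem:basic_facts} Property 1 applied to $H_s$ (which has $|V(H_s)| = n/2$ vertices) yields
\[
    \sum_{e \in E(H_s)} R_G(e) \;\le\; \sum_{e \in E(H_s)} R_{H_s}(e) \;=\; \tfrac{n}{2} - 1.
\]

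The second step is a short counting argument. Since $H_s$ is $3$-regular on $n/2$ vertices, it has $3n/4$ edges. If $|E_{s,3/4}|$ were strictly less than $n/12$, then strictly more than $\tfrac{3n}{4} - \tfrac{n}{12} = \tfrac{2n}{3}$ edges of $H_s$ would satisfy $R_G(u,v) > 3/4$, and these alone would contribute strictly more than $\tfrac{2n}{3}\cdot \tfrac{3}{4} = \tfrac{n}{2}$ to $\sum_{e \in E(H_s)} R_G(e)$, contradicting the bound $\tfrac{n}{2}-1$ from the first step. Hence $|E_{s,3/4}| \ge n/12$, and the identical argument applied to $H_t$ gives $|E_{t,3/4}| \ge n/12$.

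There is essentially no serious obstacle here; the work is done by the Rayleigh-monotonicity observation, which is a one-line consequence of the variational formula. The only point worth emphasizing is that the global identity $\sum_{e \in E(G)} R_G(e) = n-1$ by itself does not cleanly split mass between $H_s$ and $H_t$, so either the monotonicity argument above, or alternatively using $R_G(s,t) = 1$ (because $(s,t)$ is a bridge) together with the automorphism of $G$ that swaps $H_s$ and $H_t$, is required to bound each of the two sums individually.
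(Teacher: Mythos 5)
Your high-level plan is sound and essentially the same as the paper's: both reduce the claim to the identity $\sum_{e \in E(H_s)} R_{H_s}(e) = n/2 - 1$ from Property 1 of Lemma~\ref{lem:basic_facts} applied to $H_s$, followed by a short counting argument. The paper justifies the bridge $R_G(e) = R_{H_s}(e)$ for $e \in E(H_s)$ via the spanning-tree interpretation (Property 5), whereas you correctly identify that the weaker inequality $R_G(e) \le R_{H_s}(e)$ from Rayleigh monotonicity is all the counting needs --- this is a perfectly reasonable alternative route. Your alternative suggestion via $R_G(s,t)=1$ (a bridge) plus the $H_s \leftrightarrow H_t$ automorphism is also valid and avoids the issue below entirely.

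However, the specific energy argument you supply for the monotonicity step is pointing in the wrong direction. Extending the $H_s$-harmonic $\phi$ to be constant on $V(H_t)$ exhibits \emph{one} competitor for the $G$-minimization whose $G$-energy equals $1/R_{H_s}(u,v)$. This yields
\[
\min_{\phi:\phi(u)=1,\phi(v)=0}\ \sum_{(a,b)\in E(G)}(\phi(a)-\phi(b))^2 \ \le\ \frac{1}{R_{H_s}(u,v)},
\]
and since $R$ is the \emph{reciprocal} of this minimum (Property 4), it gives $R_G(u,v) \ge R_{H_s}(u,v)$ --- the opposite of what you wrote and the wrong direction to upper-bound the sum. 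What you actually need is the complementary observation: every admissible $\phi \in \mathbb{R}^{V(G)}$ satisfies $\sum_{(a,b)\in E(G)}(\phi(a)-\phi(b))^2 \ge \sum_{(a,b)\in E(H_s)}(\phi(a)-\phi(b))^2 \ge 1/R_{H_s}(u,v)$, because the $G$-energy sums nonnegative terms over a superset of $E(H_s)$. Taking the minimum over $\phi$ then gives $1/R_G(u,v) \ge 1/R_{H_s}(u,v)$, i.e., $R_G(u,v) \le R_{H_s}(u,v)$, which is Rayleigh monotonicity in the form you want. With that one fix, your counting step and the final bound $|E_{s,3/4}| \ge n/12$ match the paper's computation exactly.
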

  We give the construction of $G'$ which is almost identical to $G$ except four edges.
  \begin{enumerate}
      \item Choose one edge $(u,u')$ uniformly at random from $E_{s,3/4}$ and remove it from $G$. Similarly, remove another random edge $(v,v') \in E_{t,3/4}$. For convenience, let $H'_s$ be the subgraph of $H_s$ obtained by removing $(u,u')$ from $H_s$ and $H'_t$ be the subgraph obtained by removing $(v,v')$ from $H'_t$.
      \item Add $(u,v)$ and $(u',v')$ to $G'$.
  \end{enumerate}
By our choices of  $(u,u')\in E_{s,3/4}$  and $(v,v') \in E_{t,3/4}$, $H'_s$ and $H'_t$ are still expander graphs from Lemma~\ref{lem:removing_edge_eff_res}. Based on this property, we show the effective resistance between $s$ and $t$ in $G'$ is strictly less than 1 in Claim~\ref{clm:new_effective_resis}, whose proof is deferred to Section~\ref{sec:omitted_proofs}.
  \begin{claim}\label{clm:new_effective_resis}
It holds that      $R_{G'}(s,t) \le 0.99$.
  \end{claim}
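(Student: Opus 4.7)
The plan is to invoke Thomson's principle: with unit resistance on every edge, $R_{G'}(s,t) = \min_f \sum_{e \in E(G')} f(e)^2$ over unit $s$-$t$ flows $f$, so it suffices to exhibit one such flow with energy strictly below $0.99$.

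First, I would spectrally control the two perturbed halves. Because $(s,t)$ is the unique edge between $V(H_s)$ and $V(H_t)$ in $G$, any unit $u$-to-$u'$ flow must carry zero flow on $(s,t)$, so the $u$-to-$u'$ electrical flow in $G$ is supported entirely on $H_s$; this yields $R_{H_s}(u,u') = R_G(u,u') \le 3/4$ by the choice $(u,u') \in E_{s,3/4}$. Since $H_s$ is a $3$-regular Ramanujan graph with $\lambda_2(L_{H_s}) \ge 3 - 2\sqrt{2}$, applying Lemma~\ref{lem:removing_edge_eff_res} inside $H_s$, and symmetrically inside $H_t$, gives
\[
\lambda_2(L_{H'_s}),\; \lambda_2(L_{H'_t}) \;\ge\; \tfrac{1}{4}\bigl(3 - 2\sqrt{2}\bigr).
\]
Because $\mathbf{1}_s - \mathbf{1}_u \perp \mathbf{1}$ has squared norm $2$, the standard bound $x^\top L^\dagger x \le \|x\|^2/\lambda_2(L)$ then yields $R_{H'_s}(s,u),\, R_{H'_t}(v,t) \le 8/(3 - 2\sqrt{2}) = 24 + 16\sqrt{2}$.

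Next, for $\alpha \in [0,1]$ I would consider the unit $s$-$t$ flow $f_\alpha$ in $G'$ that routes $\alpha$ units along the direct edge $(s,t)$ and $1-\alpha$ units along the detour $s \leadsto u \to v \leadsto t$, where the $s$-to-$u$ portion is the electrical $s$-$u$ flow of value $1-\alpha$ inside $H'_s$ and the $v$-to-$t$ portion is the corresponding electrical flow inside $H'_t$. The four flow components are supported on the pairwise disjoint edge sets $\{(s,t)\}$, $E(H'_s)$, $\{(u,v)\}$, and $E(H'_t)$, so their energies add:
\[
\mathcal{E}(f_\alpha) = \alpha^2 + (1-\alpha)^2\bigl(1 + R_{H'_s}(s,u) + R_{H'_t}(v,t)\bigr) = \alpha^2 + (1-\alpha)^2 C,
\]
where $C := 1 + R_{H'_s}(s,u) + R_{H'_t}(v,t) \le 49 + 32\sqrt{2}$. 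Optimizing over $\alpha$ gives $\min_\alpha \mathcal{E}(f_\alpha) = C/(1+C)$; since $49 + 32\sqrt{2} < 99$ (equivalently $\sqrt{2} < 25/16$), we obtain $C/(1+C) < 99/100$, and Thomson's principle yields $R_{G'}(s,t) < 0.99$.

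The main obstacle is that the numerical margin is tight. The argument requires the global spectral bound $R(x,y) \le 2/\lambda_2(L)$ rather than the weaker degree-dependent Property 3 of Lemma~\ref{lem:basic_facts}, and it uses the full Ramanujan estimate $\lambda_2(L_{H_s}) \ge 3 - 2\sqrt{2}$ rather than a generic constant spectral gap; degenerate cases such as $s \in \{u,u'\}$ or $t \in \{v,v'\}$ only shorten the detour and strictly improve the bound.
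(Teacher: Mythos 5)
Your proof is correct and follows essentially the same route as the paper, phrased in the dual (flow/Thomson) language rather than the paper's potential (Dirichlet) language of Property~4 of Lemma~\ref{lem:basic_facts}. Every quantity you compute coincides with the paper's: you establish $\lambda_2(L_{H'_s}),\lambda_2(L_{H'_t}) \ge \tfrac{1}{4}(3-2\sqrt 2)$ via Lemma~\ref{lem:removing_edge_eff_res}, bound $R_{H'_s}(s,u), R_{H'_t}(v,t) \le 24+16\sqrt 2$ via the Rayleigh quotient bound $\|{\bf 1}_s-{\bf 1}_u\|^2/\lambda_2$, and arrive at $R_{G'}(s,t) \le C/(1+C)$ with $C = 1+R_{H'_s}(s,u)+R_{H'_t}(v,t)$, which is exactly the paper's $D/(D+1)$. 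Two minor differences are worth noting: (1) to establish $R_{H_s}(u,u') = R_G(u,u')$, the paper uses the spanning-tree characterization (Property~5 of Lemma~\ref{lem:basic_facts}), while you use the cut/flow-conservation argument that the bridge $(s,t)$ cannot carry $u$-to-$u'$ flow; your version is a clean alternative. (2) The paper explicitly invokes monotonicity (deleting $(u',v')$ only increases effective resistance) before computing exactly in the reduced graph, whereas Thomson's principle gives you the inequality for free because any admissible flow — in particular one avoiding $(u',v')$ — upper-bounds the energy of the optimal flow. Both are standard and the numerical margin ($49+32\sqrt 2 < 99$) is checked correctly.
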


  Now we fix a deterministic algorithm $A$ with approximation ratio at most $1.01$ and consider the underlying distribution $\mathcal{G}$, which is $G$ or $G'$ with probability $1/2$ separately. Observe that whenever $A$ succeeds, $A$ is able to distinguish between $G$ and $G'$, since the ratio between $R_G(s,t)=1$ and $R_{G'}(s,t) \le 0.99$ is more than $1.01$. For convenience, let us modify $A$ so that its output is 
 an assertion about whether the input graph is $G$ or $G'$. By Lemma~\ref{Yao_minimax_principle} of Yao's minimax principle, it holds that
  \begin{equation}\label{eq:dist_prob}
      0.6 \le \Pr_{\mathcal{G}}[A \text{ succeeds}]=\frac{\Pr[A(G)=G]}{2} + \frac{\Pr[A(G')=G']}{2}.
  \end{equation}
  Next we consider all neighbor queries made by $A$ when the underlying graph is $G$. Since $A$ and $G$ are fixed, say $A$ makes $q_n$ fixed neighbor queries on $G$. If $G'$ and $G$ provide the same answers on these neighbor queries, $A$ fails to distinguish them. But $G'$ is obtained from $G$ by removing one random edge in $E_{s,3/4}$ and another one in $E_{t,3/4}$ separately. Hence at most $q_n$ edges in $E_{s,3/4}$ will be queried; and similarly for $E_{t,3/4}$. So we bound
  \begin{align*}
  \Pr[A(G')=G'] & \le \Pr[A(G)=G'] + \Pr[\text{One neighbor query returns different values}] \\
  & \le 1-\Pr[A(G)=G] + q_n/|E_{s,3/4}| + q_n/|E_{t,3/4}|.  
  \end{align*}
  Plugging this into \eqref{eq:dist_prob} with the two bounds in Claim~\ref{claim:number_possible_edges}, we obtain $q_n \ge 0.1 \cdot \frac{n}{12}$. 
  

  
\end{proofof}

We remark that replacing $H$ by an expander of degree $[m/n]$ would give a lower bound $\Omega(m)$ instead of $\Omega(n)$, but this result is covered by Theorem~\ref{thm:lower_bound_large_degree} with $\ell=1$.


\subsection{Proofs of Claim~\ref{claim:number_possible_edges} and Claim~\ref{clm:new_effective_resis}}\label{sec:omitted_proofs}

We now give the proof of Claim~\ref{claim:number_possible_edges}. 

\begin{claimproofof}{Claim~\ref{claim:number_possible_edges}}
Recall that $E_{s,3/4}=\big\{(u,v)\in E(H_s) \big| R_G(u,v) \le 3/4 \big\}$. First, for any edge $(u,v)$ in $H_s$, $R_{H_s}(u,v)= R_{G}(u,v)$. Since $(s,t)$ is the unique edge between $H_s$ and $H_t$, the set of spanning trees $\mathcal{T}(G)$ is generated by picking one spanning tree $T_1 \in \mathcal{T}(H_s)$ and one $T_2 \in \mathcal{T}(H_t)$ then connecting them by $\{(s,t)\}$. Then by Property 5 of Lemma~\ref{lem:basic_facts}, $R_{H_s}(u,v)= R_{G}(u,v)$. 
Thus, $E_{s,3/4}=\big\{(u,v)\in E(H_s) \big| R_{H_s}(u,v) \le 3/4 \big\}$. 
    
From Property 1 of Lemma~\ref{lem:basic_facts}, $\sum_{(u,v) \in E(H_s)} R_{H_s}(u,v)=\frac{n}{2}-1$. By the definition of $E_{s,3/4}$, we have $|E(H_s)\setminus E_{s,3/4}|\cdot \frac{3}{4}\leq \sum_{(u,v) \in E(H_s)\setminus E_{s,3/4}} R_{H_s}(u,v)\leq \sum_{(u,v) \in E(H_s)} R_{H_s}(u,v)= \frac{n}{2}-1$, which implies that $|E(H_s)\setminus E_{s,3/4}|\leq \frac{n/2-1}{3/4}$. 

Since $|E(H_s)|=\frac{n}{2} \cdot \frac{3}{2}$, 
it holds that
$|E_{s,3/4}| \ge \frac{n}{2} \cdot \frac{3}{2} - \frac{n/2-1}{3/4} \ge n/12$.
\end{claimproofof}

While it is possible to use the Cayley graph construction of Ramanujan graphs to obtain a better bound, we did not attempt to optimize those constants in this work. Then we finish the proof of Claim~\ref{clm:new_effective_resis}.

\begin{claimproofof}{Claim~\ref{clm:new_effective_resis}}  
  Recall that $H'_s$ and $H'_t$ are subgraphs in $G'$ obtained by removing $(u,u')$ in $H_s$ and $(v,v')$ in $H_t$ separately. We show that the $2$nd eigenvalue of $L_{H_s^{'}}$ has $\lambda_2(L_{H_s^{'}}) \ge \frac{3-2\sqrt{2}}{4}$. As a Ramanujan graph $H_s$, $\lambda_2(L_{H_s}) \ge 3 - 2 \sqrt{2}$. Then we apply Lemma~\ref{lem:removing_edge_eff_res} to $\lambda_2$: since $R_{H_s}(u,u')=R_{G}(u,u') \le 3/4$ (from the proof of Claim~\ref{claim:number_possible_edges}), $\lambda_2(L_{H_s^{'}}) \ge \lambda_2(L_{H_s}) \cdot (1 - R_{H_s}(u,u')) \ge \frac {3 - 2 \sqrt{2}} {4}$.  
  
  
   
   As removing other edges doesn't decrease the effective resistance of $(s,t)$, we ignore $(u', v')$ to give a upper bound of $R_{G'}(s,t)$.  Consider another path  $s-u-v-t$, we have \begin{align*}
      R_{H_s'}(s,u) & = (\mathbf{1}_s - \mathbf{1}_u)^{\top} L_{H_s'}^{\dagger} (\mathbf{1}_s - \mathbf{1}_u) \\
      & \le \lambda_n(L_{H_s^{'}}^{\dagger}) \cdot \norm{\mathbf{1}_s - \mathbf{1}_u}_2^2 = \frac{1}{\lambda_2(L_{H_s^{'}})} \cdot \norm{\mathbf{1}_s - \mathbf{1}_u}_2^2 = \frac{4}{3-2\sqrt{2}} \cdot 2 = 24 + 16 \sqrt{2};
  \end{align*}
  and the same bound holds for $R_{H_t'}(v,t)$. Since $(s,t)$ and the path passing by $s-u-v-t$ are in parallel, we have $ R_{G'}(s,t) \le \frac {1} {1 + \frac {1} {R_{H'_s}(s,u) + 1 + R_{H'_t}(v,t)}} \le 0.99$.
   
   More formally, $1 / R_{G'}(s,t) = \underset{\phi \in \mathbb{R}^{V}: \phi(s)=1, \phi(t)=0}{\min} \sum_{(a,b) \in E} (\phi(a)-\phi(b))^2$ from Property 4 of Lemma~\ref{lem:basic_facts}. Since $\phi(s)$ and $\phi(t)$ are fixed, 
   \begin{align*}
       & \underset{\phi \in \mathbb{R}^{V}: \phi(s)=1, \phi(t)=0}{\min} \sum_{(a,b) \in E} (\phi(a)-\phi(b))^2 = 1 + \underset{\phi \in \mathbb{R}^{V}: \phi(s)=1, \phi(t)=0}{\min} \sum_{(a,b) \in E(G') \setminus (s,t)} (\phi(a)-\phi(b))^2.
   \end{align*}

   Because there is only one path $s-u-v-t$ between $s$ and $t$ in $G'$ if we ignore the two edges $(s,t)$ and $(u',v')$, we simplify the 2nd term as follows.
   \begin{align*}
       & \underset{\phi \in \mathbb{R}^{V}: \phi(s)=1, \phi(t)=0}{\min} \sum_{(a,b) \in E(G') \setminus (s,t)} (\phi(a)-\phi(b))^2 \\ 
       = & \underset{\phi(u), \phi(v) \in \mathbb{R}}{\min} \biggl[ (\phi(u) -\phi(v))^2 + \underset{\phi(s)=1}{\min} \sum_{(a,b) \in E(H_s')} (\phi(a)-\phi(b))^2 +  \underset{\phi(t)=0}{\min} \sum_{(a,b) \in E(H_t')} (\phi(a)-\phi(b))^2 \biggr] \\ 
       = & \underset{\phi(u), \phi(v) \in \mathbb{R}}{\min} \biggl[ (\phi(u) - \phi(v))^2 + \frac{(1 - \phi(u))^2} {R_{H'_s}(s,u)}  + \frac{\phi(v)^2}{R_{H'_t}(v,t)} \biggr] = \frac{1} {1 + {R_{H'_s}(s,u) + R_{H'_t}(v,t)}},
   \end{align*}
   where the second equation follows from Property 4 of Lemma~\ref{lem:basic_facts} and the last equation holds when $ \phi^*(u) = \frac{1 + R_{H'_t}(v,t)}{1 + R_{H'_s}(s,u) + R_{H'_t}(v,t)}$ and $\phi^*(v) = \frac{R_{H'_t}(v,t)}{1 + R_{H'_s}(s,u) + R_{H'_t}(v,t)}$.
   
\end{claimproofof}

\section*{Acknowledgement}
We thank Jingcheng Liu (Nanjing University) for suggesting the proof of Lemma 5 using the characteristic polynomial method. Also, we thank all anonymous reviewers for the helpful comments.

\bibliography{cite}

\newpage

\appendix

\section{Lower Bounds for Large Approximation Ratio} \label{sec:large_constant_degree}
In this section, we prove lower bounds for large approximation ratios. First, we restate Theorem~\ref{thm:lower_bound_large_degree}.

\begin{theorem}\label{thm:formal_lower_bound}
    There are $c_0>0$ and infinitely many $n$ such that given any $d \in [4,n]$ and any $\ell \in [4,n]$, for graphs of $n$ vertices and degree $d$, any local algorithm to approximate $R_G(s,t)$ with success probability $0.6$ and approximation ratio $1+c_0 \cdot \min\{ d,\ell \} $ needs $\Omega(dn/\ell)$ queries.
\end{theorem}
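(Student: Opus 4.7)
The proof follows the blueprint of Theorem~\ref{thm:lower_bound_small_degree}, except that the two expanders will be connected by $\Theta(\ell')$ parallel paths instead of a single one, where $\ell' := \min\{\ell, c_1 d\}$ for a small universal constant $c_1>0$ to be chosen. Take two vertex-disjoint copies $H_s, H_t$ of a $d$-regular near-Ramanujan graph on $n/2$ vertices, connect them by the edge $(s,t)$, and call the result $G$; then $R_G(s,t) = 1$. Define $E_s := \{e \in E(H_s) : R_{H_s}(e) \le c_2/d\}$ and $E_t$ analogously; by Property~1 of Lemma~\ref{lem:basic_facts} together with Markov's inequality (as in Claim~\ref{claim:number_possible_edges}), $|E_s|, |E_t| = \Omega(dn)$ for a suitable constant $c_2$. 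Now form a random $G'$ by picking a uniformly random matching $(u_1,u_1'),\ldots,(u_{\ell'},u_{\ell'}')$ of $\ell'$ edges from $E_s$, an independent random matching $(v_1,v_1'),\ldots,(v_{\ell'},v_{\ell'}')$ from $E_t$, deleting these $2\ell'$ edges and adding the $2\ell'$ cross edges $\{(u_i,v_i), (u_i',v_i')\}_{i=1}^{\ell'}$. The degree sequences of $G$ and $G'$ agree, and the $4\ell'$ modifications lie at random low-resistance positions.

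Iterated application of Lemma~\ref{lem:removing_edge_eff_res} (or Lemma~\ref{lemma:laplacian_eigenvalues_perturbation} applied once to the incidence matrix of $H_s$) yields $\lambda_2(L_{H'_s}) \ge (1 - \ell' c_2 / d)\, \lambda_2(L_{H_s}) = \Omega(d)$ whenever $c_1 \le 1/(2c_2)$, and symmetrically for $L_{H'_t}$. To upper bound $R_{G'}(s,t)$ via Property~4 of Lemma~\ref{lem:basic_facts}, fix any $\phi \in \mathbb{R}^{V(G')}$ with $\phi(s)=1, \phi(t)=0$, set $S_s := \sum_{(a,b) \in E(H'_s)} (\phi(a)-\phi(b))^2$, define $S_t$ analogously, and let $\bar\phi_s, \bar\phi_t$ be the averages of $\phi$ on $V(H_s), V(H_t)$. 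The Poincaré inequality on the expander gives $\sum_{u \in V(H_s)} (\phi(u) - \bar\phi_s)^2 \le S_s/\lambda_2(L_{H'_s}) = O(S_s/d)$, so in particular $|\phi(w) - \bar\phi_s| \le O(\sqrt{S_s/d})$ for every $w \in V(H_s)$, and symmetrically on $H_t$. If $S_s + S_t \ge c_3 d$ for an appropriate constant $c_3$, then $\sum_{E(G')}(\phi(a)-\phi(b))^2 \ge S_s + S_t \ge c_3 d = \Omega(\ell')$ (using $\ell' \le d$); otherwise $\bar\phi_s \ge 3/4$ and $\bar\phi_t \le 1/4$, so each of the $2\ell'$ cross-edge contributions $(\phi(u_i)-\phi(v_i))^2,\, (\phi(u_i')-\phi(v_i'))^2$ is at least a positive constant, again yielding $\sum \ge \Omega(\ell')$. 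Therefore $R_{G'}(s,t) \le O(1/\ell')$, and $R_G(s,t)/R_{G'}(s,t) = \Omega(\ell') = \Omega(\min\{d,\ell\})$, exceeding $1 + c_0 \min\{d,\ell\}$ for sufficiently small $c_0$.

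For the query lower bound, Yao's minimax principle reduces to analysing a fixed deterministic algorithm $A$ using $q_n$ neighbor queries on the uniform mixture of $G$ and random $G'$. When $A$ is run on $G$ it makes a fixed sequence of queries, and cannot distinguish $G$ from $G'$ unless at least one of these queries touches one of the $4\ell'$ modified edges. Because each modified edge is uniformly random within a set of size $\Omega(dn)$, the union bound upper bounds the detection probability by $O(q_n \ell' / (dn))$; to exceed a positive constant one needs $q_n = \Omega(dn/\ell')$. This is $\Omega(dn/\ell)$ when $\ell \le c_1 d$ (where $\ell' = \ell$), and $\Omega(n) \ge \Omega(dn/\ell)$ when $\ell > c_1 d$ (where $\ell' = c_1 d$), completing the proof.

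The main technical obstacle is the upper bound on $R_{G'}(s,t)$: intuitively one wants to invoke ``$\ell'$ parallel paths between $s$ and $t$, each of resistance $\approx 1$,'' but these paths share the sub-expanders $H'_s, H'_t$ and are not edge-disjoint, so series-parallel formulas do not apply. The variational characterization together with the retained expansion from Lemma~\ref{lem:removing_edge_eff_res} circumvents this by forcing any low-energy $\phi$ to be nearly constant on each of $V(H_s), V(H_t)$, whereupon $\bar\phi_s \approx 1$ and $\bar\phi_t \approx 0$ produce a unit gap squared at each of the $2\ell'$ cross edges.
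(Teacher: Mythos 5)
Your proof is correct in outline but takes a genuinely different route from the paper's. The paper avoids Lemma~\ref{lem:removing_edge_eff_res} entirely for this theorem: it builds $H_s,H_t$ as \emph{lower-degree} expanders (degree $d_H=[3d/4]$) and then adds $\Omega(dn)$ \emph{extra} edges $E_s,E_t$ to reach regular degree $d$; the random construction of $G'$ deletes only extra edges, so the expander subgraphs $H_s,H_t$ stay completely intact and $\lambda_2(L_{H_s})=\Omega(d)$ is used directly in the energy bound (the Claim~\ref{clm:er_constant_degree} two-case split is on whether most modified endpoints have polarized potentials, falling back on the untouched expander's Poincar\'e bound otherwise). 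You instead start from a single $d$-regular near-Ramanujan graph, surgically remove only low-effective-resistance edges capped at $\ell'=\min\{\ell, c_1 d\}$, and invoke Lemma~\ref{lemma:laplacian_eigenvalues_perturbation} to show the perturbed $H'_s,H'_t$ are still expanders. Your route is more uniform with the degree-$3$ proof (which also relies on the perturbation lemma and cannot afford extra edges) and is structurally cleaner since it needs no auxiliary edge set; the paper's route trades this uniformity for not needing the perturbation machinery at all, and, since $E_s$ already has $\Omega(dn)$ members, it can set $\ell'=\ell$ directly. One small point you should nail down: for small $d$ (e.g.\ $d=4$) your cap $c_1 d$ may be below $1$, so $\ell'$ must be defined as a positive integer (say $\max\{1,\lfloor c_1 d\rfloor\}$), and the constants $c_1,c_2,c_0$ must be chosen jointly so that $\ell' c_2/d \le 1/2$ while $1+c_4\ell'\ge 1+c_0\min\{d,\ell\}$ still holds; this works but deserves an explicit sentence. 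Also note that using a uniformly random matching rather than edges sampled without replacement is an unnecessary complication: degree preservation follows in either case, and the union-bound detection probability $O(q_n\ell'/(dn))$ is simpler to justify with plain sampling.
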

We remark that $d$ and $\ell$ could depend on $n$ in the above theorem as long as there are expander graphs whose size is in $[0.1n,0.5n]$ and regular degree is in $[0.1d,0.9d]$. There are two ways to interpret this theorem. First, let us consider regular graphs of degree $d$ and $\ell=d$. Observe that Property 3 of Lemma~\ref{lem:basic_facts} implies a trivial approximation algorithm: outputting 1 always has an approximation ratio at most $\frac{2}{1/d(s)+1/d(t)}=d$. Our lower bound implies that to improve the trivial factor to $1 + c_0 \cdot d$, the algorithm needs $\Omega(n)$ queries. Secondly, it provides a trade-off between the approximation ratio and query complexity. If we consider graphs with $d>\ell$, the lower bound implies that a $(1+ c_0 \cdot \ell)$-approximation algorithm needs $\Omega(m/\ell)$ queries.


The proof of Theorem~\ref{thm:formal_lower_bound} follows the same line as Theorem~\ref{thm:lower_bound_small_degree}. We pick a regular expander $H$ of size $\le n/2$ and degree $\Theta(d)$. Then we make two copies of it containing $s$ and $t$ separately, say $H_s$ and $H_t$ like Figure~\ref{ex:dumbshell}. Next we connect $s$ and $t$ and add $\Omega(dn)$ extra edges in $H_s$ and $H_t$ as the base graph $G$. Then we construct $G'$ by removing $2\ell$ extra edges in $H_s$ and $H_t$ and adding another $2\ell$ edges between $H_s$ and $H_t$ such that it is difficult to distinguish between $G$ and $G'$. The final calculation shows $R_G(s,t)=1$ and $R_{G'}(s,t)=1/(1+\Omega(\min\{d,\ell\})$.

\begin{proof}
We provide a distribution $\mathcal{G}$ of graphs. By Lemma~\ref{Yao_minimax_principle} of Yao's minimax principle, we only need to consider deterministic algorithms with $q_n$ neighbor queries whose its approximation ratio is $1+ c_0 \cdot \min\{ \ell, d \}$ over $\mathcal{G}$ for some constant $c_0$. Our goal is to prove $q_n=\Omega_{c_0}(m/\ell)$.

    Given $n$ and $d$, there exists an expander $H$ of size $n_H \in [0.25n, 0.5n]$ and degree $d_H=[3d/4]$ with the 2nd eigenvalue of its Laplacian at least $\lambda_H=\min\big\{d_H- 2.01 \sqrt{d_H-1},d_H/2\big\}$. For $d<\log^2 n$, we apply \cite{MODP_22_every_degree} to obtain such an expander; otherwise a random $d$-regular graph satisfying this requirement with high probability \cite{Friedman08}. For ease of exposition, we assume $n_H=n/2$; otherwise we add some dummy vertices to make up $n_H$.

    \begin{enumerate}
        \item We take two copies of $H$ as $H_s$ and $H_t$ that contain $s$ and $t$ separately. 
        \item The vertex set of $G$ is defined to be the union of the vertex sets of $H_s$ and $H_t$.
        \item The edge set of $G$ is defined to be  the union of $(s,t)$, the edge sets of $H_s$, $H_t$, and extra edges in $H_s$ and $H_t$ separately such that $G$ is a $d$-regular simple graph. For convenience, we use $E_s$ and $E_t$ to denote the extra edges, whose sizes are $\Omega(dn)$ since $d_H=[3d/4]$.
    \end{enumerate}

    So the effective resistance between $s$ and $t$ in $G$ is $R_G(s,t)=1$.  Next we construct a random graph $G'$ based on $G$.

    \begin{enumerate}
        \item We delete $\ell$ random edges $(u_1,u'_1),\ldots,$ $(u_\ell,u'_\ell)$ from $E_s$ and another $\ell$ random edges $(v_1,v'_1),$ $\ldots,(v_\ell,v'_\ell)$ from $E_t$ separately (without replacement). 
        \item We add $(u_1,v_1), (u'_1,v'_1),$ $\ldots, (u_\ell,v_\ell), (u'_\ell,v'_\ell)$ to $G'$.
    \end{enumerate}

    We show that the effective resistance between $s$ and $t$ in $G'$ is $O(1/\min\{\ell, d\})$. For completeness, we provide a calculation in Appendix~\ref{sec:omitted_proofs2}.

    \begin{claim} \label{clm:er_constant_degree}
        $R_{G'}(s,t)$ is less than $\frac{1}{1+ c_0 \min\{{\ell},{d}\}}$ for some universal constant $c_0$.
    \end{claim}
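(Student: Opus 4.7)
The plan is to upper bound $R_{G'}(s,t)$ by combining Thomson's principle (effective resistance equals the minimum energy of a unit $s$-$t$ flow) with the parallel-resistance decomposition induced by the direct edge $(s,t)$.

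Let $G'_{\times} := G' \setminus \{(s,t)\}$. Since the direct edge and $G'_{\times}$ meet only at the vertices $s$ and $t$, the parallel formula yields
\[
R_{G'}(s,t)^{-1} = 1 + R_{\times}^{-1}, \qquad \text{where } R_{\times} := R_{G'_{\times}}(s,t).
\]
Hence it suffices to prove $R_{\times} \leq C/\min\{\ell, d\}$ for some universal constant $C$, since this immediately gives $R_{G'}(s,t) \leq 1/(1 + c_0 \min\{\ell, d\})$ with $c_0 = 1/C$.

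The key structural observation for bounding $R_{\times}$ is that, although edges of $E_s$ and $E_t$ are deleted when passing from $G$ to $G'$, the embedded expanders $H_s$ and $H_t$ are untouched, since by construction $E_s$ and $E_t$ consist of the extra edges added on top of $E(H_s), E(H_t)$ to achieve $d$-regularity (so $E_s \cap E(H_s) = \emptyset$, and similarly for $E_t$). Consequently, $H_s, H_t \subseteq G'$ as edge-subgraphs, and all routing within the two sides can be performed inside these fixed expanders, bypassing any need for Lemma~\ref{lem:removing_edge_eff_res}.

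To bound $R_{\times}$, I will construct a unit $s$-$t$ flow in $G'_{\times}$ that sends $1/(2\ell)$ through each of the $2\ell$ crossing edges $(u_i, v_i)$ and $(u'_i, v'_i)$. On the $H_s$ side this imposes the divergence profile $+1$ at $s$ and $-1/(2\ell)$ at each of the $2\ell$ endpoints; let $w_1, \ldots, w_{2\ell}$ enumerate these endpoints (with multiplicity, should any coincide). Let $f_j$ be the minimum-energy unit $s$-to-$w_j$ flow in $H_s$, so that $\|f_j\|_2^2 = R_{H_s}(s, w_j)$ by Thomson's principle. Then $f_{H_s} := \frac{1}{2\ell} \sum_{j=1}^{2\ell} f_j$ realizes the required divergence profile, and by $\big(\sum_{j=1}^{k} a_j\big)^2 \leq k \sum_{j=1}^{k} a_j^2$,
\[
\|f_{H_s}\|_2^2 \leq \frac{1}{(2\ell)^2} \cdot 2\ell \sum_{j=1}^{2\ell} \|f_j\|_2^2 = \frac{1}{2\ell} \sum_{j=1}^{2\ell} R_{H_s}(s, w_j).
\]
Since $H_s$ is a $d_H$-regular (near-)Ramanujan graph with $d_H = \Theta(d)$ and $\lambda_2(\widetilde{L}_{H_s}) = \Omega(1)$, Property~3 of Lemma~\ref{lem:basic_facts} gives $R_{H_s}(s, w) = O(1/d)$ for every $w \in V(H_s)$; hence $\|f_{H_s}\|_2^2 = O(1/d)$, and the symmetric bound holds for the $H_t$-side routing. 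Together with the crossing-edge energy $2\ell \cdot (1/(2\ell))^2 = 1/(2\ell)$, the total flow energy is $O(1/d + 1/\ell) = O(1/\min\{\ell, d\})$, so by Thomson's principle $R_{\times} = O(1/\min\{\ell, d\})$, completing the claim.

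The main technical step is the Cauchy-Schwarz aggregation that combines the $2\ell$ individual optimal single-target flows into one valid multi-target flow with the correct divergences; the remaining ingredients are a standard expander estimate on the intact $H_s, H_t$ and the parallel-resistance formula. A pleasant byproduct is that, in contrast to Claim~\ref{clm:new_effective_resis}, this proof does not invoke Lemma~\ref{lem:removing_edge_eff_res} at all, since the embedded expanders $H_s, H_t$ already suffice for every routing bound used.
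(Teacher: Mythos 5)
Your argument is correct and takes a genuinely different route from the paper's. The paper works in the dual (potential) picture: it writes $R_{G'}(s,t)^{-1} = \min_{\phi(s)=1,\phi(t)=0}\sum_{(u,v)\in E(G')}(\phi(u)-\phi(v))^2$ and lower bounds this by $1+\Omega(\min\{\ell,d\})$ via a two-case analysis on the potential values of the $2\ell$ endpoints $u_i,u_i',v_i,v_i'$ --- if most $u$'s are near $1$ and most $v$'s are near $0$, the $2\ell$ crossing edges contribute $\Omega(\ell)$; otherwise a constant fraction of those potentials deviate from $\phi(s)=1$ (or $\phi(t)=0$) and the intact expander $H_s$ (or $H_t$) forces $\Omega(d)$. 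You instead work in the primal (flow) picture: you peel off the direct edge with the exact parallel identity $R_{G'}(s,t)^{-1}=1+R_{G'\setminus\{(s,t)\}}(s,t)^{-1}$ (the same ``$1+$'' the paper extracts by fixing $\phi(s),\phi(t)$), then exhibit a concrete unit flow spread uniformly over the $2\ell$ crossing edges and routed inside the untouched expanders, with the Cauchy--Schwarz aggregation of $2\ell$ single-target flows giving $\|f_{H_s}\|_2^2\le\frac{1}{2\ell}\sum_j R_{H_s}(s,w_j)=O(1/d)$ and the crossing edges contributing $1/(2\ell)$. Your version avoids the case split and makes the source of each contribution ($\ell$ from the bottleneck, $d$ from the expander routing) transparent, at the modest cost of invoking Thomson's principle, which the paper does not need. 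Both proofs rest on the same structural observation you emphasize --- only edges from $E_s,E_t$ are deleted, so $H_s,H_t$ survive intact inside $G'$ --- and neither needs Lemma~\ref{lem:removing_edge_eff_res}.
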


    Now we fix a deterministic algorithm $A$ with approximation ratio $1 + c_0 \min\{ \ell,d \}$ and consider the underlying distribution $\mathcal{G}$ is $G$ or $G'$ with probability $1/2$ separately. For convenience, let us modify $A$ such that its output is a claim of the underlying graph is $G$ or $G'$. By Lemma~\ref{Yao_minimax_principle} of Yao's minimax principle,
    \begin{equation}\label{eq:dist_prob2}
        0.6 \le \Pr_{\mathcal{G}}[A \text{ succeeds}]=\frac{\Pr[A(G)=G]}{2} + \frac{\Pr[A(G')=G']}{2}.
    \end{equation}
    
    Next we consider all $q_n$ neighbor queries made by $A$ when the underlying graph is $G$. If $G'$ and $G$ provide the same answers on these neighbor queries, $A$ fails to distinguish them. But $G'$ are obtained from $G$ by modifying $2 \ell$ edges in $E_s$ and $E_t$. So we bound
    \begin{align*}
    \Pr[A(G')=G'] & \le \Pr[A(G)=G'] + \Pr[\text{One neighbor query returns different values}] \\
    & \le 1-\Pr[A(G)=G] + q_n \cdot 2\ell / \min\{|E_s|,|E_t|\}.  
    \end{align*}
    Plugging this into \eqref{eq:dist_prob2}, we obtain $q_n = \Omega(\frac{n d}{\ell})=\Omega(m/\ell)$.
\end{proof}

\subsection{Proof of Claim~\ref{clm:er_constant_degree}}\label{sec:omitted_proofs2}
We finish the calculation of Claim~\ref{clm:er_constant_degree} in this section.

\begin{claimproofof}{Claim~\ref{clm:er_constant_degree}} We consider $\min \limits_{\phi:\phi(s) = 1, \phi(t) = 0} {\sum \limits_{(u,v) \in E(G')} (\phi(u) - \phi(v))^2}$ to upper bound $R_{G'}(s,t)$ $= 1 / \min \limits_{\phi} {\sum \limits_{(u,v) \in E(G')} (\phi(u) - \phi(v))^2}$ in this section. For convenience, we only consider $\phi$ with $\phi(s)=1$ and $\phi(t)=0$ and call $\phi$ a potential function in this section.

In fact, the harmonic function $\phi$ gives the minimum value \[ \min \limits_{\phi:\phi(s) = 1, \phi(t) = 0} {\sum \limits_{(u,v) \in E(G')} (\phi(u) - \phi(v))^2}, \] i.e., $\phi$ satisfies $\phi(v)=\frac{1}{d(v)} \sum_{(u,v) \in E(G')} \phi(u)$. Since we do not have an explicit description of $G'$, we provide a lower bound for any $\phi$ such that it gives an upper bound on $R_G(s,t) = 1 / \min \limits_{\phi} {\sum \limits_{(u,v) \in E} (\phi(u) - \phi(v))^2}$. Because the harmonic one is in $[0,1]^{V}$, let us fix a potential function $\phi$ whose $\phi(s)=1$, $\phi(t)=0$, and $\phi(v) \in [0,1]$ for all $v \in V$. There are two cases.


\begin{enumerate}
    \item More than three quarters of vertices in $u_1, u'_1, \cdots, u_{\ell}, u'_{\ell}$ have potential values $\ge 0.55$; and more than three quarters of vertices in $v_1, v'_1, \cdots, v_{\ell}, v'_{\ell}$ have potential values $\le 0.45$. So there are at least $\ell$ edges between them with a potential difference at least $0.1$. Then 
    \[
    \sum \limits_{(u,v) \in E(G')} (\phi(u) - \phi(v))^2 \ge 1 + \ell \cdot (0.55-0.45)^2 = 1 + 0.01 \ell.\]

    \item Otherwise, at least one quarter of vertices in $u_1, u'_1, \cdots, u_{\ell}, u'_{\ell}$ or $v_1, v'_1, \cdots, v_{\ell}, v'_{\ell}$ that don't satisfy the above condition. Without loss of generality, we assume one quarter of vertices in $u_1, u'_1, \cdots, u_{\ell}, u'_{\ell}$ have potential values $\phi(v)<0.55$, say $u_1, u_2,\cdots, u_{\ell/2}$. Since $G'$ contains the expander $H_s$, we only consider the energy contributed by $H_s$: 
    \[
    \sum \limits_{(u,v) \in E(G')} (\phi(u) - \phi(v))^2 \ge 1 + \sum \limits_{(u,v) \in E(H_s)} (\phi(u) - \phi(v))^2 = 1 + \phi\big( V(H_s) \big)^\top \cdot L_{H_s} \cdot \phi\big( V(H_s) \big).
    \]
    To use the property $\lambda_2(L_{H_s}) \ge \max\{d_H - 2.01 \sqrt{d_H - 1}, d_H/2\} = \Omega(d)$ given $d_H=[3d/4]$, we consider $\phi' \in \mathbb{R}^{V(H_s)}$ defined as $\phi':=\phi\big( V(H_s) \big) - c \cdot \mathbf{1} \big( V(H_s)  \big)$ for $c=\frac{\big\langle \phi(V(H_s)), \mathbf{1} ( V(H_s)  ) \big\rangle}{n/2}$. So
    \[
        \phi\big( V(H_s) \big)^\top \cdot L_{H_s} \cdot \phi\big( V(H_s) \big) = (\phi')^{\top} \cdot L_{H_s} \cdot \phi' \ge \lambda_2(L_{H_s}) \cdot \|\phi'\|_2^2.
    \]
    Next we lower bound $\| \phi' \|_2^2$.
    \begin{align*}
        \| \phi' \|_2^2 & = \sum_{i \in V(H_s)} (\phi(v)-c)^2 \\
                        & \ge (\phi(s)-c)^2 + (\phi(u_1)-c)^2 + \cdots + (\phi(u_{\ell/2})-c)^2 \\
                        & \ge \min_{\beta} \left\{ \left(\phi(s) - \beta\right)^2 + \left(\phi(u_1) - \beta \right)^2 + \cdots + \left( \phi(u_{\ell/2}) - \beta \right)^2 \right\} \\
                        & = \left(\phi(s) - \beta^* \right)^2 + \left(\phi(u_1) - \beta^* \right)^2 + \cdots + \left( \phi(u_{\ell/2}) - \beta^* \right)^2 \\
                        & \qquad \qquad \text{ where } \beta^*=\frac{\phi(s) + \phi(u_1)+ \cdots + \phi(u_{\ell/2})}{\ell/2+1}\\
                        & \ge (\phi(s)-\beta^*)^2\\
                        & = (1-\beta^*)^2 \qquad \qquad \bigg(\text{ use } \beta^* \le 0.7 \text{ since } \phi(u_i)<0.55 \text{ and } \ell \ge 4\bigg)\\
                        & \ge (1 - 0.7)^2 = \Omega(1).
    \end{align*}
    From all discussion above, $\sum \limits_{(u,v) \in E(G')} (\phi(u) - \phi(v))^2 = 1 + \lambda_2(L_{H_s}) \cdot \Omega(1) = 1 + \Omega(d)$. 
\end{enumerate}

So we have $\min \limits_{\phi:\phi(s) = 1, \phi(t) = 0} {\sum \limits_{(u,v) \in E(G')} (\phi(u) - \phi(v))^2} = 1 +  \Omega(\min\{ \ell, d\})$ and $R_G(s,t) \le \big(1 + \Omega(\min\{ \ell, d\}) \big)^{-1}$. 
\end{claimproofof}

\section{Total Effective Resistance}\label{sec:total_eff_res}
We show strong lower bounds for approximating the total effective resistance in this section. Recall that the total effective resistance of a graph $G$ is $R_{tot}(G)=\sum_{u<v} R_G(u,v)$ over all pairs $u$ and $v$. We restate Theorem~\ref{thm:inapprox_total_eff_res}.

\begin{theorem}\label{thm:inapprox_total_eff_res}
For every $n$, even for degree-2 graphs, it takes $\Omega(n)$ queries for any algorithm to give a less than $2$-approximation of the total effective resistance with probability 0.6.

For every $n$ and any $\ell=o(n)$, there exists $m$ such that it takes $\Omega(m/\ell)$ queries for any algorithm to give an $\ell$-approximation of the total effective resistance with probability 0.6. 
\end{theorem}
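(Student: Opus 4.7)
The plan for Theorem~\ref{thm:inapprox_total_eff_res} is to apply Yao's minimax principle twice with the same local-indistinguishability template already used in Theorems~\ref{thm:lower_bound_small_degree} and \ref{thm:formal_lower_bound}: in each part exhibit two graphs whose total effective resistances differ by the target factor, but whose local views agree outside a small number of ``anomalous'' vertices under a random relabelling.

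For Part 1, the plan is to take the input distribution placing mass $1/2$ on the cycle $C_n$ and mass $1/2$ on the path $P_n$, each under a uniformly random vertex relabelling. Using $R_{P_n}(i,j)=|i-j|$ and $R_{C_n}(i,j)=d(i,j)\bigl(n-d(i,j)\bigr)/n$ (where $d(\cdot,\cdot)$ is the shorter arc distance), direct summation yields $R_{tot}(P_n)=n(n^2-1)/6$ and $R_{tot}(C_n)=n(n^2-1)/12$, so the ratio is exactly $2$. Any $(2-\varepsilon)$-approximation with success probability $0.6$ must therefore distinguish $C_n$ from $P_n$ with probability at least $0.2$. In the adjacency-list model the only locally observable difference is the two degree-$1$ endpoints of $P_n$, so each query (uniform vertex, degree, or neighbour) hits a distinguishing vertex with probability at most $2/n$ under the random relabelling; a union bound over $q$ queries then forces $q=\Omega(n)$.

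For Part 2, the plan is to amplify the Part~1 gap by adding a small number of shortcut edges. Set $G_1:=P_n$ and let $G_2$ be obtained from $P_n$ by adding $k$ chord edges joining a hub vertex $v_0$ to $k$ equally spaced interior vertices of the path, with $k=\Theta(\ell)$. Any two vertices $v_i,v_j$ in $G_2$ can reach $v_0$ through the nearest chord endpoint, so $R_{G_2}(v_i,v_j)\le R_{G_2}(v_i,v_0)+R_{G_2}(v_0,v_j)=O(n/k)$, and summing over the $\binom{n}{2}$ pairs gives $R_{tot}(G_2)=O(n^3/k)=O(n^3/\ell)$, while $R_{tot}(G_1)=\Theta(n^3)$, producing a ratio of at least $\ell$. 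On the distinguishing side, the only vertices whose degrees differ between $G_1$ and $G_2$ are $v_0$ (degree $1+k$ in $G_2$ versus $1$ in $G_1$) and the $k$ chord endpoints (degree $3$ versus $2$), so there are $\Theta(\ell)$ anomalous vertices out of $n$; under the random relabelling each query sees one with probability $O(\ell/n)$, forcing $\Omega(n/\ell)=\Omega(m/\ell)$ queries because $m=n-1+k=\Theta(n)$.

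The main technical obstacle I foresee is making the bound $R_{tot}(G_2)=O(n^3/k)$ rigorous with the correct leading constant: the triangle-inequality sketch above throws away parallel-path contributions. I expect to tighten it either via the eigenvalue formula $R_{tot}(G)=n\sum_{i\ge 2}1/\lambda_i(L_G)$ from Lemma~\ref{lem:basic_facts} combined with a Dirichlet-type eigenvalue analysis of the $k$ arcs between consecutive chord endpoints (each behaves like a short interval pinned at the hub), or via a direct harmonic-function computation in the spirit of the one carried out in Appendix~\ref{sec:omitted_proofs2} for Claim~\ref{clm:new_effective_resis}.
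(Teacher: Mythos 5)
Your Part~1 matches the paper's proof almost verbatim: both compare a cycle and a path, use the same closed-form formulas giving $R_{tot}(P_n)=2\,R_{tot}(C_n)$, and argue indistinguishability from the rarity of the two degree-$1$ endpoints. The paper generates the path by deleting a random edge of the ring rather than relabelling both graphs, but the two presentations are equivalent here.

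Part~2 is where you genuinely diverge. The paper reuses the two-expander construction from Theorem~\ref{thm:lower_bound_large_degree}: two expander halves joined by one edge, $R_{tot}(G)=\Theta(n^2)$ because every cross pair has resistance at least $1$; $G'$ adds $\Theta(\ell)$ random cross edges, collapsing $R_{tot}(G')$ to $O(n^2/\ell)$; with $m=\Theta(nd)$ and $d\gtrsim\ell$, the bound is $\Omega(nd/\ell)\ge\Omega(n)$, which is what the informal Theorem~\ref{thm:total_tight_lower_bound} advertises. Your hub-and-path construction is more elementary (no expanders, no Lemma~\ref{lem:removing_edge_eff_res}) and the resistance gap $R_{tot}(G_1)/R_{tot}(G_2)=\Omega(k)$ is correct, but it only produces graphs with $m=\Theta(n)$ edges, so the resulting query bound is $\Omega(n/\ell)$, not $\Omega(n)$. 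That does satisfy the literal ``there exists $m$'' in Theorem~\ref{thm:inapprox_total_eff_res}, so your proof is not wrong, but it is strictly weaker than the paper's: the paper gets $\Omega(n)$ for every $\ell$ (and even $\Omega(nd/\ell)$ for denser graphs), matching the informal theorem, while yours degrades by a factor of $\ell$. Also, the ``technical obstacle'' you flag is not actually an obstacle: you only need an \emph{upper} bound $R_{tot}(G_2)=O(n^3/k)$, and Rayleigh monotonicity plus the metric property of effective resistance already give $R_{G_2}(u,v)\le R_{G_2}(u,v_0)+R_{G_2}(v_0,v)=O(n/k)$ rigorously; parallel paths only decrease the resistance further, so no eigenvalue or harmonic refinement is needed. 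Any loose constants can be absorbed by choosing $k$ a sufficiently large constant multiple of $\ell$, exactly as the paper absorbs its constants into $c_0$.
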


One remark is that this shows that approximating total effective resistance is very different from approximating local effective resistances. For graphs of degree $2$ (except $s$ and $t$), Appendix~\ref{sec:approx_alg} provides a sublinear-time algorithm to approximate the effective resistance of an adjacent pair $(s,t)$. However, the first part of Theorem~\ref{thm:inapprox_total_eff_res} rules out sublinear-time algorithms with approximation ratio $<2$ for the total effective resistance even for graphs of degree at most $2$. 
Furthermore, the second part of Theorem~\ref{thm:inapprox_total_eff_res} rules out trivial approximations like Property 3 of Lemma~\ref{lem:basic_facts} with ratio $\frac{2}{1/d(s)+1/d(t)}$.

Another remark is about approximating $\sum_{i>1} \frac{1}{\lambda_i(L_G)}$. Since $R_{tot}(G) = n \cdot \sum_{i>1} \frac{1}{\lambda_i(L_G)}$ from Property 1 of Lemma~\ref{lem:basic_facts}. Our result shows that there is no sublinear time algorithm for approximating $\sum_{i>1} \frac{1}{\lambda_i(L_G)}$ even for graphs of degree at most 2. This is in contrast to approximating $(\lambda_1,\ldots,\lambda_n)$ where recent works \citep{CKSV18,BKM22} have provided sublinear time approximation algorithms. For example, the algorithm of \cite{CKSV18} outputs a succinct representation of an approximation  $(\tilde{\lambda}_1,\ldots,\tilde{\lambda}_n)$ such that $\sum_{i=1}^n |\tilde{\lambda}_1 - \lambda_i(\tilde{L}_G)| \le \epsilon \cdot n$ in time $2^{O(1/\epsilon)}$ for any $\epsilon$. Finally, we observe that for expander graphs, algorithms in \cite{peng2021local,li2023new} directly imply that one can $(1+\epsilon)$-approximate $R_{tot}(G)$ in $O(\poly(\frac{\log n}{\epsilon}))$ time. The above Theorem~\ref{thm:total_tight_lower_bound} says for general graphs, such an approximation with similar query complexity is not possible.


We finish the proof of Theorem~\ref{thm:inapprox_total_eff_res} in this section. The first part considers the difference between a ring graph of $n$ vertices and a path graph of $n$ vertices. The second part follows the line of Theorem~\ref{thm:lower_bound_large_degree} --- we add $\ell$ random edges to reduce the total effective resistance by a factor of $\ell$. 

\begin{proof}
    By Yao's minimax principle, we only need to consider deterministic algorithms in this proof. For the 1st part, we fix a ring graph $G_r$ of $n$ vertices. Then we construct a random path graph $G_p$ of $n$ vertices that is obtained from the ring $G_r$ by randomly removing one edge.

We will use the following fact
            \[ \sum\limits_{i=1}^{n} i = \frac {n (n+1)} {2}, \quad \sum\limits_{i=1}^{n} i^2 = \frac {n(n+1)(2n+1)} {6}, \quad \sum\limits_{i=1}^{n} i^3 = \frac {n^2(n+1)^2} {4} \]
    to calculate the total effective resistances of $G_p$ and $G_r$ directly. 
    It holds that 
    \begin{align*}
       R_{tot}(G_p) & = \sum\limits_{1 \le i \le n-1} (n-i) \cdot i = n \sum\limits_{i=1}^{n-1} i -  \sum\limits_{i=1}^{n-1} i^2 =\frac {n^3 - n} {6} \\
       R_{tot}(G_r) & = \sum\limits_{1 \le i \le n-1} (n-i) \cdot \frac {1} {\frac {1} {i} + \frac {1} {n - i}}. \\
       & = \frac {1} {n} \cdot \sum\limits_{1 \le i \le n-1} i \cdot (n - i)^2 = n \sum\limits_{i=1}^{n-1} i - 2 \sum\limits_{i=1}^{n-1} i^2 + \frac {1} {n} \sum\limits_{i=1}^{n-1} i^3 = \frac {n^3 - n} {12}.
    \end{align*}

    So $R_{tot}(G_p) = 2 \cdot R_{tot}(G_r)$. Thus, if an algorithm wants to give a $<2$-approximation for $R_{tot}$, it have to distinguish $G_p$ from $G_r$, which needs $\Omega(n)$ queries. 
    More formally, let the underlying graph $\mathcal{G}$ be $G_r$ or $G_p$ with probability $0.5$ separately. Let $A$ be a deterministic algorithm with approximation ratio $<2$ and success probability $0.6$ over $\mathcal{G}$. Then we modify $A$ to distinguish between $G_r$ and $G_p$ since $R_{tot}(G_r)/R_{tot}(G_p)=2$. So we have 
    \[
    0.6 = \Pr[A \text{ succeeds}] = \Pr[A(G_r)=G_r]/2 + \Pr[A(G_p)=G_p]/2.
    \]
    Let $q$ be the number of queries made by $A$ when the underlying graph is $G_r$. $A$ distinguish them only if one degree query returns 1.
    So
    \begin{align*}
        \Pr[A(G_p)=G_p] & \le \Pr[A(G_r)=G_p] + \Pr[\text{one degree query return 1}] \\
        & \le 1 - \Pr[A(G_r)=G_r] + 2 \cdot q/n.
    \end{align*}
    This shows $q=\Omega(n)$.

    The proof of the 2nd part is similar to the proof of Theorem~\ref{thm:lower_bound_large_degree}. We provide a distribution $\mathcal{G}$ of graphs with degree at least $\ell$. By Lemma~\ref{Yao_minimax_principle} of Yao's minimax principle, we only need to consider deterministic algorithms with $q_n$ neighbor queries whose approximation ratio is $O(\ell)$ over $\mathcal{G}$. Our goal is to prove $q_n=\Omega(m/\ell)$.

    Given $d$, there exists an expander $H$ of size $n_H$ and degree $d_H=[3d/4]$ with the 2nd eigenvalue of its Laplacian at least $\lambda_H=\min\big\{d_H- 2.01 \sqrt{d_H-1},d_H/2\big\}$. For $d<\log^2 n$, we apply \cite{MODP_22_every_degree} to obtain such an expander; otherwise a random $d$-regular graph satisfying this requirement with high probability\cite{Friedman08}.

    \begin{enumerate}
        \item We take two copies of $H$ as $H_s$ and $H_t$ that contain $s$ and $t$ separately. 
        \item The vertex set of $G$ is defined to be the union of the vertex sets of $H_s$ and $H_t$. So $n = 2 n_H$.
        \item The edge set of $G$ is defined to be  the union of $(s,t)$, $E(H_s)$, $E(H_t)$, and extra edges in $H_s$ and $H_t$ separately such that $G$ is a $d$-regular simple graph. For convenience, we use $E_s$ and $E_t$ to denote the extra edges, whose sizes are $\Omega(dn)$ since $d_H=[3d/4]$.
    \end{enumerate}

    So any effective resistance between pairs $(u,v)$ in different expanders, $R_G(u, v) \ge R_G(s, t) \ge 1 $. Then $R_{tot}(G) \ge \sum\limits_{u \in H_s, v \in H_t} R_G(u, v) \ge n^2_H = \Omega(n^2)$.
    
    Next we construct a random graph $G'$ based on $G$.
    
    \begin{enumerate}
        \item We delete $\ell$ random edges $(u_1,u'_1),\ldots,$ $(u_\ell,u'_\ell)$ from $E_s$ and another $\ell$ random edges $(v_1,v'_1),$ $\ldots,(v_\ell,v'_\ell)$ from $E_t$ separately (without replacement). 
        \item We add $(u_1,v_1), (u'_1,v'_1),$ $\ldots, (u_\ell,v_\ell), (u'_\ell,v'_\ell)$ to $G'$.
    \end{enumerate}
    
    Similar to Claim~\ref{clm:er_constant_degree}, we have that $R_{G'}(a,b)=O(\ell^{-1})$ for any vertex $a \in H_s$ and $b \in H_t$.

    Also, the effective resistance between any pair $u, v$ among the same part has $R_{G'}(u, v) \le \frac{d_H}{\lambda_H} \cdot \frac{2}{d_H} = \frac{1}{\lambda_H} = O(\frac{1}{\ell})$ by Property 3 of Lemma~\ref{lem:basic_facts}.  Then \[ R_{tot}(G') = \sum\limits_{u \in H_s, v \in H_t} R_G(u, v) + \sum\limits_{u, v \in H_s} R_G(u, v) + \sum\limits_{u, v \in H_t} R_G(u, v) = O( \frac{n^2}{\ell}).\]
    
    Now we fix a deterministic algorithm $A$ with approximation ratio $\ell$ and consider the underlying distribution $\mathcal{G}$ is $G$ or $G'$ with probability $1/2$ separately. For convenience, let us modify $A$ such that its output is a claim of the underlying graph is $G$ or $G'$. By Lemma~\ref{Yao_minimax_principle} of Yao's minimax principle,
    \begin{equation}\label{eq:dist_prob3}
        0.6 \le \Pr_{\mathcal{G}}[A \text{ succeeds}]=\frac{\Pr[A(G)=G]}{2} + \frac{\Pr[A(G')=G']}{2}.
    \end{equation}
    
    Next we consider all $q_n$ neighbor queries made by $A$ when the underlying graph is $G$. If $G'$ and $G$ provide the same answers on these neighbor queries, $A$ fails to distinguish them. But $G'$ are obtained from $G$ by modifying $2 \ell$ edges in $E_s$ and $E_t$. So we bound
    \begin{align*}
    \Pr[A(G')=G'] & \le \Pr[A(G)=G'] + \Pr[\text{One neighbor query returns different values}] \\
    & \le 1-\Pr[A(G)=G] + q_n \cdot 2\ell / \min\{|E_s|,|E_t|\}.  
    \end{align*}
    Plugging this into \eqref{eq:dist_prob3}, we obtain $q_n = \Omega(\frac{n d}{\ell})=\Omega(m/\ell)$.
\end{proof}

\section{Approximation Algorithm for Degree $2$} \label{sec:approx_alg}

In Section~\ref{sec:lower_bound_deg_3}, we proved a lower bound for degree-$3$ graphs (except $s$ and $t$). Now we give an approximation algorithm to solve the case of degree 2 in sublinear time. Since all vertices are of degree $2$ except $s$ and $t$, essentially, $G$ is constituted by several disjoint paths (some of them are disconnected) between $s$ and $t$.

\begin{restatable}[]{theorem}{upperbounddegreetwo} \label{thm:upper_bound_degree_2}
    Given any $\epsilon < 0.1$, for any graph with degree at most 2 except the adjacent pair $s$ and $t$, there is a local algorithm such that with probability 0.99, it outputs a $(1+\epsilon)$-approximation of $R_G(s,t)$ in time $O\bigg( \cfrac {\min\{d(s)^2,d(t)^2\} \cdot \log^2 n \cdot \log \log n} {\epsilon^2} \bigg)$.
\end{restatable}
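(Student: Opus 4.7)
The plan is to exploit the structural simplicity of the input. Because every vertex except $s$ and $t$ has degree at most $2$, the connected component containing $\{s,t\}$ decomposes into the edge $(s,t)$, a collection of internally vertex-disjoint simple $s$-$t$ paths through degree-$2$ vertices, cycles passing through $s$ or $t$, and dead-end branches rooted at $s$ or $t$; anything disconnected from $\{s,t\}$ is irrelevant to $R_G(s,t)$. Invoking Property~4 of Lemma~\ref{lem:basic_facts}, the harmonic potential with $\phi(s)=1,\phi(t)=0$ is linear along each $s$-$t$ path of length $\ell$ (contributing $\ell\cdot(1/\ell)^2=1/\ell$ to the energy), and is constant on each cycle/branch rooted at $s$ (at value $1$) or at $t$ (at value $0$), contributing $0$ each, by the discrete maximum principle. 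Hence
\[
  \frac{1}{R_G(s,t)} \;=\; \sum_{i=1}^{k}\frac{1}{\ell_i},
\]
where $\ell_1=1,\ell_2,\ldots,\ell_k$ are the lengths of the distinct $s$-$t$ paths. Since the edge $(s,t)$ already contributes $1$, the sum $S:=\sum_i 1/\ell_i$ lies in $[1,\min\{d(s),d(t)\}]$, so a $(1+\epsilon)$-approximation of $R_G(s,t)$ reduces to estimating $S$ within additive error $\epsilon$.

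I would then design the algorithm around tracing paths from the lower-degree endpoint. Assume $d(s)\le d(t)$ (otherwise swap), and pick a truncation length $L=\Theta(d(s)/\epsilon)$. For each neighbor $u$ of $s$, follow the unique outgoing walk by repeatedly picking the neighbor of the current degree-$2$ vertex that differs from the previously visited one; halt when we reach $t$ (record $1/\ell$ for the length $\ell$), reach a vertex of degree $\neq 2$ other than $t$ (dead end), return to $s$ along a different edge (cycle), or exhaust the step budget $L$. Summing contributions yields $\widehat S$, and the algorithm outputs $\widehat R:=1/\widehat S$. Correctness then rests on two easy facts: (i) every $s$-$t$ path of length $\le L$ is enumerated exactly once, via the unique edge at $s$ at which it starts, and (ii) the omitted long paths contribute at most $d(s)/L\le\epsilon\le\epsilon S$ in total, so $\widehat S\in[(1-\epsilon)S,S]$ and hence $\widehat R\in[R_G(s,t),(1+O(\epsilon))\cdot R_G(s,t)]$ after a constant rescaling of $\epsilon$.

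The straightforward analysis gives $O(d(s)\cdot L)=O(d(s)^2/\epsilon)$ queries (each tracing step costs $O(1)$ degree/neighbor queries). To reach the stated bound $O(\min\{d(s)^2,d(t)^2\}\log^2 n\cdot \log\log n/\epsilon^2)$, I would replace full enumeration by random sampling of $N$ neighbors of $s$ and invoke a Bernstein-type concentration inequality (the per-sample estimator lies in $[0,d(s)]$ and has variance $O(d(s)\cdot S)$) to get confidence $0.99$, combined with a geometric ``doubling'' schedule over the unknown scale of $S\in[1,d(s)]$ to avoid wasting work in either the small-$S$ or large-$S$ regime (the $\log\log n$ factor). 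The main obstacle I anticipate is this joint calibration of $L$ and $N$: since $S$ is unknown a priori, one needs to run the estimator at several scales, combine the results carefully, and verify that the bias (from $L$) and the variance (from $N$) remain simultaneously controlled without exceeding the polylogarithmic overhead; a secondary care is needed to distinguish, within the traversal, whether a walk terminated at a genuine dead-end, cycled back to $s$, or simply ran out of budget, since the three cases must be handled identically in the estimator (contribute $0$) but the first two can occur long before the budget $L$ is reached.
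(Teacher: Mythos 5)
Your structural reduction is the same as the paper's: the graph outside $\{s,t\}$ decomposes into internally vertex-disjoint $s$--$t$ paths (plus cycles and dead-end branches that carry no current), and $1/R_G(s,t)=\sum_i 1/\ell_i =: S$, which the algorithm estimates by tracing paths from the lower-degree endpoint. What differs is how the truncation is handled, and here your approach is in fact \emph{simpler and strictly better} than the paper's. The paper uses a randomized multi-level chaining scheme: at level $k$, each of the $d(s)$ paths is traced up to $2^{k+1}a$ steps only with probability $2^{-k}$, and Bernstein's inequality controls the resulting unbiased-but-noisy estimator; with $a=\Theta(\log n \cdot \log\log n/(\epsilon\delta))$ and $\delta=\epsilon/d(s)$, this yields the $O(d(s)^2\log^2 n\cdot\log\log n/\epsilon^2)$ bound. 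Your deterministic truncation at $L=\Theta(d(s)/\epsilon)$ avoids all of this: each of the at most $d(s)$ paths costs at most $L$ steps, giving $O(d(s)^2/\epsilon)$ queries; the omitted paths contribute at most $d(s)/L=O(\epsilon)\le O(\epsilon)\cdot S$ (using $S\ge 1$ from the edge $(s,t)$), so $\widehat S\in[(1-O(\epsilon))S,S]$ deterministically. That already proves the theorem --- in fact with a better running time and with no failure probability.

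Your second half is where you go astray. Having already achieved $O(d(s)^2/\epsilon)$, there is no need to ``reach'' the paper's bound of $O(d(s)^2\log^2 n\cdot\log\log n/\epsilon^2)$ --- that bound is \emph{weaker}, and trying to match it by replacing full enumeration with random sampling of $N$ neighbors plus a doubling schedule only adds a failure probability, extra $\log$ factors, and a new calibration problem (the one you correctly flag at the end). A uniform-neighbor estimator has variance $\Theta(d(s)S)$, so Bernstein would require $N=\Theta(d(s)/\epsilon^2)$ samples, each traced to depth $L=\Theta(d(s)/\epsilon)$, for $O(d(s)^2/\epsilon^3)$ total --- \emph{worse} than the deterministic version, not better. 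Cut the second half; the first half is the proof. The only cosmetic fix needed is to output $\widehat R/(1+O(\epsilon))$ so the estimate lies in $[R_G(s,t)/(1+\epsilon),R_G(s,t)]$, matching the paper's one-sided definition of $C$-approximation.
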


Note that as long as $\min\{d(s),d(t)\}$ is not too large, the running time in the above theorem is sublinear for any constant $\epsilon>0$. In the rest of this section, we finish the proof of Theorem~\ref{thm:upper_bound_degree_2}. In fact, our algorithm could output an additive-error approximation $\tilde R_{G}(s,t)$ in time $O\bigg( \cfrac {\min\{d(s),d(t)\} \cdot \log^2 n \cdot \log \log n} {\delta \epsilon} \bigg)$ such that $\tilde R_G(s,t)=(1\pm \epsilon) \cdot R_G(s,t) + \delta$. Since $R_G(s,t) \ge \frac{2}{1/d(s)+1/d(t)} \ge 1/\min\{d(s),d(t)\}$ from Property 3 of Lemma~\ref{lem:basic_facts}, $\tilde R_{G}(s,t)$ is a $(1+2\epsilon)$-approximation if we choose $\delta=\epsilon/\min\{d(s),d(t)\}$. This choice of $\delta$ gives the running time in Theorem~\ref{thm:upper_bound_degree_2}. Then we show the additive-error approximation.

First of all, by the following claim (whose proof is deferred to Appendix~\ref{sec:proof_additive_error}), we focus on the approximation of $\rho:=1/R_G(s,t)$ instead of the approximation of $R_G(s,t)$.
\begin{claim}\label{clm:inverse_additive_error}
If $\tilde{\rho} = (1 \pm \epsilon) \cdot \rho \pm \delta$ for $\epsilon<0.1$, $\delta<0.1$, and $\rho \ge 1$, we always have 
 that $1/\tilde{\rho}$ is a $(1+2\epsilon, 2\delta)$-approximation of $1/\rho$. 
\end{claim}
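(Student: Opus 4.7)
The plan is a short, direct calculation that simply unpacks the hypothesis $\tilde{\rho}=(1\pm\epsilon)\rho\pm\delta$, inverts the resulting interval, and then linearizes using the bounds $\tfrac{1}{1-x}\le 1+2x$ and $\tfrac{1}{1+x}\ge 1-x$ (valid for $x\in[0,0.5]$, which will apply since $\epsilon,\delta<0.1$). The only nontrivial place where the hypothesis $\rho\ge 1$ enters is in converting a relative additive term $\delta/\rho$ into an absolute one $\delta$, which is what lets the final error stay purely additive at level $2\delta$.

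Concretely, set $R:=1/\rho\le 1$. The hypothesis gives
\[
\tilde{\rho}\in\bigl[(1-\epsilon)\rho-\delta,\;(1+\epsilon)\rho+\delta\bigr],
\]
so (noting that $(1-\epsilon)\rho-\delta\ge (1-\epsilon)-\delta>0$)
\[
\frac{1}{\tilde{\rho}}\in\left[\frac{1}{(1+\epsilon)\rho+\delta},\;\frac{1}{(1-\epsilon)\rho-\delta}\right]
=\left[\frac{R}{1+\epsilon+\delta R},\;\frac{R}{1-\epsilon-\delta R}\right].
\]

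For the upper bound I would use $\delta R\le \delta$ (this is where $\rho\ge 1$ is used) to write
\[
\frac{R}{1-\epsilon-\delta R}\le \frac{R}{1-\epsilon-\delta}\le R(1+2(\epsilon+\delta))
=R+2\epsilon R+2\delta R\le \frac{1+2\epsilon}{\rho}+2\delta,
\]
where the second inequality uses $\tfrac{1}{1-x}\le 1+2x$ for $x=\epsilon+\delta\le 0.2$, and the last one uses $R\le 1$ again. For the lower bound I would instead use $\tfrac{1}{1+x}\ge 1-x$ together with $\delta R\le \delta$:
\[
\frac{R}{1+\epsilon+\delta R}\ge R\bigl(1-\epsilon-\delta R\bigr)\ge R-\epsilon R-\delta\ge \frac{1-2\epsilon}{\rho}-2\delta.
\]
Combining the two bounds yields $1/\tilde{\rho}\in\bigl[(1-2\epsilon)/\rho-2\delta,\;(1+2\epsilon)/\rho+2\delta\bigr]$, which is exactly the claimed $(1+2\epsilon,2\delta)$-approximation.

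There is no real obstacle here: the proof is a one-paragraph estimate. The only subtlety worth flagging is that $\rho\ge 1$ is genuinely needed to keep the additive error at $2\delta$ rather than $2\delta/\rho^{-1}$ blowing up when $\rho$ is small; without it one would only obtain $2\delta R$-type bounds, which are stronger on the upper side but could fail on the lower side when $R>1$. Since the theorem's context guarantees $R_G(s,t)\le 1$ (hence $\rho\ge 1$) from the fact that $s,t$ are adjacent, the hypothesis is naturally satisfied.
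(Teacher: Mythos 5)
Your proof is correct and takes essentially the same route as the paper's: both factor out $1/\rho$, linearize $1/(1\pm x)$, and then invoke $\rho\ge 1$ to turn the $\delta/\rho$-type terms into a clean absolute error of $2\delta$. The only difference is that you carry out the interval arithmetic explicitly rather than using the paper's compressed $\pm$ notation, which makes the argument a bit more rigorous but does not change the idea.
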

We remark that $\rho \ge 1$ for adjacent pairs. But for any non-adjacent pair $s$ and $t$, our algorithm gives an additive-error approximation on their conductance $1/R_G(s,t)$.

Because all vertices except $s$ and $t$ are of degree $2$, $G$ is constituted by several \emph{disjoint} paths (some are disconnected) between $s$ and $t$. For convenience, we assume $d(s) \le d(t)$ and consider each edge from $s$ as a path. In this proof, we define its length $\ell_i=+\infty$ if it is disconnected; otherwise $\ell_i$ is the exact length of that path from $s$ to $t$. Since $R_G(s,t)=\frac{1}{\sum_i 1/\ell_i}$, our goal becomes to approximate $1/R_G(s,t)=\rho=\sum_{i} 1/\ell_i$. In Algorithm~\ref{alg:exam1}, we describe the additive-error approximation algorithm motivated by the chaining argument whose idea is to sample longer paths with smaller probability. Its guarantee is 
\begin{equation}\label{eq:additive_error}
\tilde \rho = (1\pm \epsilon) \cdot \rho \pm \delta \qquad \text{ for the output } \tilde \rho  = 1/\tilde R_G(s,t) \text{ and } \rho=1/R_G(s,t),    
\end{equation}
which gives the additive error approximation $\tilde R_G(s,t) = (1 \pm 2\epsilon) \cdot R_G(s,t) \pm 2\delta$ by Claim~\ref{clm:inverse_additive_error}.

\begin{algorithm}[H]
	\caption{Algorithm for estimating effective resistance with additive error} \label{alg:exam1}
	\begin{algorithmic}[1]
		\Function{AdditiveApproximationEffectiveResistance}{$G, \epsilon, \delta,s,t$}
            \State If $d(s)>d(t)$ then swap $s$ and $t$
            \State $\tilde \rho \leftarrow 0, p_0 \leftarrow 1$    
            \State $a \leftarrow \cfrac {20 \log n \cdot \log \log n} {\epsilon \delta} $ 
            \For {$k$ \textbf{in} $[0,\cdots,\log n]$}
                \For {each path $i$ from $s$}
                    \State With prob.~$p_k$, take at most $2^{k+1} \cdot a$ steps in path $i$ to find out its length $\ell_{i}$ 
                    \State if it reaches $t$ and $\ell_{i} > 2^k \cdot a$, then $\tilde \rho \leftarrow \tilde \rho + 1 / (p_k \cdot \ell_i)$
                \EndFor
                \State $p_k \leftarrow p_{k-1} / 2$
            \EndFor
            \State \Return $1/\tilde \rho$ as the effective resistance (and $\tilde \rho$ as the conductance)
		\EndFunction
	\end{algorithmic}
\end{algorithm}

In the rest of this section, we prove the correctness of Algorithm~\ref{alg:exam1}. We remark that $p_k=2^{-k}$ in this section. Let us consider its expected time. For a path of length $\ell_i$, if $j=\lfloor \log_2 \ell_i/a \rfloor$, the number of expected steps on this path is at most 
\[
p_0 \cdot 2a + p_1 \cdot 4a + \cdots + p_j \cdot 2^j a + p_{j+1} \cdot \ell_i + \cdots + p_{\log n} \cdot \ell_i = 2a \cdot j + \ell_i \cdot (p_{j+1} + \cdots + p_{\log n}) = O(a \cdot \log n).
\]
Since we can start from either $s$ or $t$, the expected time is $O(\min\{d(s),d(t)\} \cdot a \cdot \log n)$.

Next we show the approximation guarantee in \eqref{eq:additive_error}. For all paths whose lengths are at most $2 a$, the algorithm gets the exact conductance $1/\ell_i$ of them. The rest paths are divided into $\log n$ groups $G_1,\ldots,G_{\log n}$. The paths in group $G_k$ have lengths between $(2^k \cdot a,2^{k+1} \cdot a]$. Assuming path $i$ is in group $k$, we define the random variable $X_i$ for estimating $1/\ell_i$:
\begin{equation*}
    X_i = 
    \left\{
        \begin{array}{ll}
            \cfrac {1} {p_k \ell_i} - \cfrac {1} {\ell_i} \qquad & \textrm{with  probability  $p_k$}, \\
            - \cfrac {1} {\ell_i} \qquad & \textrm{with  probability  $1 - p_k$}.
        \end{array}
    \right.
\end{equation*}

So $\E X_i = 0$ and $\Var X_i = \cfrac {1 - p_k} {p_k \cdot \ell_i^2} \le \cfrac {1} {p_k \cdot \ell_i^2}$. And the result of Algorithm \ref{alg:exam1} is $\tilde \rho = \sum \cfrac {1} {\ell_i} + X_i = \rho + \sum X_i$. 

Now, we calculate the approximation of group $G_k$ by Bernstein's inequality. There are two cases of $G_k$, i.e., $\sum\limits_{G_k} \cfrac 1 {\ell_i} > b$ and $\sum\limits_{G_k} \cfrac 1 {\ell_i} \le b$ for a threshold value $b = \cfrac {\delta} {\epsilon \cdot \log n}$.

\begin{lemma}[Bernstein's inequality \cite{bernstein1912ordre}]
    Let $X_1, \cdots, X_n$ be independent zero-mean random variables. Suppose that $\left| X_{i} \right|\le M$ almost surely, for all  $i$. Then, for all positive $t$, \[ \Pr \left[ \left| \sum\limits_{i=1}^{n} X_i \right| \ge t \right] \le 2 \exp \left[ - \frac{1}{4} \cdot \min \left\{  { \cfrac {t^2} {\sum\limits_{i=1}^{n} \E X_i^2}, \frac {t} { M} }  \right\}  \right]. \]
\end{lemma}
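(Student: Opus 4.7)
The plan is to follow the standard Cram\'er--Chernoff argument, controlling the moment generating function (MGF) of each $X_i$ via its power series expansion together with the mean-zero and boundedness assumptions. I would first establish the upper tail bound $\Pr[\sum_i X_i \ge t]$, then handle the lower tail by applying the same argument to $-X_i$; the factor of $2$ then comes from a union bound over the two tails.

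Concretely, the first step is the exponential Markov inequality combined with independence: for any $\lambda > 0$,
\[
\Pr\left[\sum_{i=1}^{n} X_i \ge t\right] \le e^{-\lambda t} \prod_{i=1}^{n} \E[e^{\lambda X_i}].
\]
The second step is to bound each factor $\E[e^{\lambda X_i}]$. Expanding the exponential as a power series and using $\E X_i = 0$ together with $|\E X_i^k| \le M^{k-2}\,\E X_i^2$ for $k \ge 2$ (which follows from $|X_i| \le M$), I would use the inequality $k!/2 \ge 3^{k-2}$ to geometrically sum the tail of the series and obtain
\[
\E[e^{\lambda X_i}] \le 1 + \frac{\lambda^2\,\E X_i^2 / 2}{1 - \lambda M/3} \le \exp\left(\frac{\lambda^2\,\E X_i^2 / 2}{1 - \lambda M/3}\right),
\]
valid whenever $\lambda M < 3$. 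Writing $\sigma^2 := \sum_i \E X_i^2$ and multiplying over $i$, the Chernoff step becomes
\[
\Pr\left[\sum_i X_i \ge t\right] \le \exp\left(-\lambda t + \frac{\lambda^2 \sigma^2 / 2}{1-\lambda M/3}\right).
\]

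The third step is to choose $\lambda = t/(\sigma^2 + Mt/3)$, which automatically satisfies $\lambda M/3 < 1$ and, after simplification, yields the classical Bernstein bound $\exp\!\bigl(-\tfrac{t^2/2}{\sigma^2 + Mt/3}\bigr)$. To match the form stated in the lemma, I would use the elementary inequality $\sigma^2 + Mt/3 \le 2\max\{\sigma^2, Mt/3\}$ to conclude
\[
\frac{t^2/2}{\sigma^2 + Mt/3} \ge \frac{1}{4}\min\left\{\frac{t^2}{\sigma^2},\, \frac{3t}{M}\right\} \ge \frac{1}{4}\min\left\{\frac{t^2}{\sigma^2},\, \frac{t}{M}\right\}.
\]
Combining this with the symmetric argument for $\Pr[\sum_i X_i \le -t]$ and a union bound produces exactly the inequality stated, including the prefactor $2$.

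The only real obstacle is light bookkeeping: verifying that the optimizing $\lambda$ stays within the range $\lambda M < 3$ where the MGF bound is valid, and that the constants line up so that $1/4$ (rather than a smaller constant) appears in the final exponent. The single nontrivial analytic input is the MGF estimate $\E[e^{\lambda X_i}] \le 1 + \frac{\lambda^2 \E X_i^2/2}{1-\lambda M/3}$, which uses only the mean-zero and almost-sure boundedness hypotheses, as advertised.
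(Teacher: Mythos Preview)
Your proof sketch is correct and follows the standard Cram\'er--Chernoff derivation of Bernstein's inequality; the bookkeeping with the MGF bound, the choice $\lambda = t/(\sigma^2 + Mt/3)$, and the reduction to the $\min$-form via $\sigma^2 + Mt/3 \le 2\max\{\sigma^2, Mt/3\}$ all go through as you describe. Note, however, that the paper does not prove this lemma at all: it is quoted as a classical result with a citation and used as a black box in the analysis of Algorithm~\ref{alg:exam1}, so there is no ``paper's own proof'' to compare against.
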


In the previous case, ${(\sum\limits_{G_k} \cfrac 1 {\ell_i})^2} / {\sum\limits_{G_k} \cfrac 1 {\ell_i^2}} \ge \frac{ (\left | G_k \right | \cdot \min_{G_k} {\ell_i} )^2} {\left | G_k \right | \cdot (\max_{G_k} {\ell_i} )^2} \ge {\left | G_k \right |} / 4 \ge 2^{k-2} \cdot a b$. The first inequality comes from $\max\limits_{G_k} {\ell_i} < 2 \min\limits_{G_k} {\ell_i}$. And the last step is implied by $b < \sum\limits_{G_k} \cfrac 1 {\ell_i} < \left | G_k \right | / (2^k \cdot a)$ since each $\ell_i > 2^k \cdot a$ in $G_k$.

Then $\sum\limits_{G_k} \E X_i^2 \le \sum\limits_{G_k} \cfrac {1} {p_k \ell_i^2} \le \frac{4}{a b} (\sum\limits_{G_k} \cfrac 1 {\ell_i})^2$, and $|X_i| \le \frac{2^k-1}{2^k \cdot a} \le \frac{1}{a}$ for all $i \in G_k$. From the above discussion, Bernstein's inequality infers that for $t = \epsilon \sum\limits_{G_k} \cfrac 1 {\ell_i}$,
\begin{align*}
    \Pr \left[ \left | \sum\limits_{G_k} X_i \right | \ge \epsilon \sum\limits_{G_k} \cfrac 1 {\ell_i}  \right] \le 2 \exp \left[ - \frac{1}{4} \min \left\{  { \frac {ab \epsilon^2} {4},  a \epsilon \sum\limits_{G_k} \cfrac 1 {\ell_i} }  \right\}  \right]  \le 2 \exp \left[ - \frac{1}{4} \min \left\{  \frac {ab \epsilon^2} {4},  a b \epsilon  \right\}  \right].
\end{align*}
The second step is by the assumption $\sum\limits_{G_k} \cfrac 1 {\ell_i} > b$.

In the latter case, $\sum\limits_{G_k} \E X_i^2 \le \sum\limits_{G_k} \cfrac 1 {p_k \ell_i^2} \le \sum\limits_{G_k} \cfrac 1 {a \cdot \ell_i} \le \cfrac {b} {a}$ with $\ell_i \ge 2^k \cdot a$ and $\sum\limits_{G_k} \cfrac 1 {\ell_i} \le b$. So
\[ 
    \Pr \left[  \left | \sum\limits_{G_k} X_i \right | \ge \frac{\delta}  {\log n}  \right] 
    \le 2 \exp \left[ - \frac{1}{4} \min \left\{  \frac{a \delta^2} {b \log^2 n}, \frac{a \delta} {\log n}  \right\}  \right].
\]

Take a union bound for all $G_k$, and recall that $b = \frac{\delta}{\epsilon \cdot \log n}$,
\begin{align*}
    & \Pr \left[ \left | \sum\limits_{i \in [d]} X_i \right | \ge \epsilon \cdot \sum\limits_{i \in [d]} \cfrac {1} {\ell_i} + \delta  \right] \\ 
    \le & 2 \log n \cdot \exp \left[ - \frac{1}{4} \min \left\{  { ab \epsilon^2 / 4,  a b \epsilon }  \right\}  \right] + 2 \log n \cdot \exp \left[ - \frac{1}{4} \min \left\{  \frac{a \delta^2}{b \log^2 n} , \frac{a \delta}{\log n}  \right\}  \right] \\
    & \qquad \qquad (\text{plug } b = \frac{\delta}{\epsilon \cdot \log n})
    \\
    \le & 2 \log n \cdot \exp \left[ - \frac{1}{4} \min \left\{  { \frac{a \delta \epsilon}{4 \log n},  \frac{a \delta}{\log n} }  \right\}  \right] + 2 \log n \cdot \exp \left[ - \frac{1}{4} \min \left\{  \frac{a \delta \epsilon}{\log n} , \frac{a \delta}{\log n}  \right\}  \right] \\ 
    \le & 4 \log n \cdot \exp \left[ -  \frac{a \delta \epsilon}{16 \log n}  \right].
\end{align*}
When $a = \cfrac {20 \log n \cdot \log \log n} {\epsilon \delta}$, $\tilde \rho$ is a $(1+\epsilon, \delta)$-approximation of $\rho$ with probability $0.99$.

\subsection{Proof of Claim~\ref{clm:inverse_additive_error}}\label{sec:proof_additive_error}
We finish the proof of Claim~\ref{clm:inverse_additive_error} in this section. Recall that $\tilde \rho = (1\pm \epsilon) \cdot \rho \pm \delta$ for some $\rho \ge 1$.
\begin{align*}
    1/\tilde{\rho} & =\frac{1}{(1\pm\epsilon) \rho \pm \delta}   \\
    & = \frac{1}{\rho} \cdot \frac{1}{1 \pm \epsilon \pm \delta/\rho} \\
    & = \frac{1}{\rho} \cdot (1 \pm 2\epsilon \pm 2 \delta/\rho) \\
    & = (1 \pm 2 \epsilon) \frac{1}{\rho} \pm 2 \delta.
\end{align*}

\section{Supplemental Proofs}\label{append}

\begin{lemma}[Yao's Minimax Principle] \label{Yao_minimax_principle}
  Let $\mathcal{X}$ be a set on inputs to a problem. Let $R_{\epsilon}$ be the minimal complexity among all randomized algorithms that solve the problem with success probability at least $1 - \epsilon$, for all inputs $x \in \mathcal{X}$. Let $D_{\epsilon}^{\mu}$ be the minimal complexity among all deterministic algorithms that solve the problem correctly on a fraction of at least $1 - \epsilon$ of all inputs $x \in \mathcal{X}$, weighed according to a distribution $\mu$ on the inputs. Then \[ R_\epsilon \ge \max\limits_{\mu} D_{\epsilon}^{\mu}.\]
\end{lemma}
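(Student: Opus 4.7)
The plan is to carry out the classical averaging argument: view the optimal randomized algorithm as a distribution over deterministic ones, swap the two expectations against an arbitrary input distribution $\mu$, and then extract a single deterministic algorithm that is correct on a $(1-\epsilon)$-fraction of $\mu$-weighted inputs while inheriting the complexity budget $R_\epsilon$.

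First, I would fix a randomized algorithm $\mathcal{R}$ attaining the optimum $R_\epsilon$ and succeeding with probability at least $1-\epsilon$ on every input $x \in \mathcal{X}$. By the standard correspondence between randomized and distributional computation, $\mathcal{R}$ can be written as a distribution $\nu$ over deterministic algorithms $\{A_r\}_r$ indexed by the random seed $r$, and under any worst-case notion of complexity each $A_r$ has complexity at most $R_\epsilon$. The hypothesis on $\mathcal{R}$ then reads: for every $x \in \mathcal{X}$, $\Pr_{r \sim \nu}[A_r(x) \text{ is correct}] \ge 1 - \epsilon$.

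Second, I would fix an arbitrary distribution $\mu$ on $\mathcal{X}$ and average the pointwise guarantee against $\mu$:
\[
\mathbb{E}_{x \sim \mu}\bigl[\Pr_{r \sim \nu}[A_r(x) \text{ correct}]\bigr] \ge 1 - \epsilon.
\]
By Fubini, this double expectation equals $\mathbb{E}_{r \sim \nu}\bigl[\Pr_{x \sim \mu}[A_r(x) \text{ correct}]\bigr]$, so there exists a seed $r^\star$ in the support of $\nu$ with $\Pr_{x \sim \mu}[A_{r^\star}(x) \text{ correct}] \ge 1 - \epsilon$. The deterministic algorithm $A_{r^\star}$ is then correct on at least a $(1-\epsilon)$-fraction of $\mu$-weighted inputs and has complexity at most $R_\epsilon$, giving $D^\mu_\epsilon \le R_\epsilon$. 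Taking the maximum over $\mu$ yields the desired inequality $\max_\mu D^\mu_\epsilon \le R_\epsilon$.

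The only subtle point, which I would flag as the main bookkeeping obstacle, is the justification that each $A_r$ in the decomposition inherits the complexity bound $R_\epsilon$. For the notions of complexity relevant to this paper (worst-case query complexity in the adjacency-list model), this is immediate: the worst-case cost of $\mathcal{R}$ is by definition the supremum over seeds $r$ and inputs $x$ of the cost of $A_r$ on $x$, so each $A_r$ already costs at most $R_\epsilon$ on every input. Once this identification is in place, the remainder of the argument is the one-line averaging step above, and no deeper property (e.g.\ convexity of the loss or minimax duality) is required.
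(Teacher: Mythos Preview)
Your argument is correct and is the standard averaging proof of the easy direction of Yao's principle. Note, however, that the paper does not actually prove this lemma: it is merely stated in the appendix as a well-known tool and invoked as a black box in the lower-bound arguments, so there is no ``paper's own proof'' to compare against. Your write-up would serve perfectly well as the missing justification.
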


Finally, we present the 3rd proof of Lemma~\ref{lemma:laplacian_eigenvalues_perturbation} based on Weyl's inequality of perturbed eigenvalues.

\begin{proofof}{Lemma~\ref{lemma:laplacian_eigenvalues_perturbation}}
    For ease of exposition, we assume $A$ is of rank $n$ in this proof. Otherwise, let $P$ be the matrix of dimension $\rank(A) \times n$ constituted by an orthonormal basis of the column space of $A$. Then we rewrite $A=A_0 \cdot P$ for a matrix $A_0 \in \mathbb{R}^{m \times \rank(A)}$. Since $\rank(A_0)=\rank(A)$ and $P$ does not change the eigenvalues, we could then work on $A_0$ instead of $A$ in this proof. 

    Recall that for any symmetric matrix $X \in \mathbb{R}^{n \times n}$, we always use $\lambda_1(X) \le \cdots \le \lambda_n(X)$ to denote its eigenvalues in the non-decreasing order.  In the rest of this proof, we fix $k$ and $\ell$. So our goal is to prove that $A'$ --- the submatrix of $A$ removing row $\mathbf{a}_\ell$ --- has $\lambda_k \big( (A')^\top \cdot A' \big) \in \bigg[ (1-\tau_\ell) \cdot \lambda_k(A^\top A), \lambda_k(A^\top A) \bigg]$ where  $\tau_\ell=\mathbf{a}_\ell^\top \cdot (A^\top A)^{-1} \cdot \mathbf{a}_\ell$ is the leverage score of $\mathbf{a}_\ell$.  For convenience, let $L:=A^{\top} \cdot A$ and $\Delta := - \mathbf{a}_{\ell} \cdot \mathbf{a}_{\ell}^{\top}$ such that $(A')^{\top} \cdot A' = L + \Delta$. 
  
    $\lambda_k(L + \Delta)=-\lambda_{n-k+1}(- L - \Delta)$ implies 
    \begin{equation}\label{eq:zero_eigenvalue}
        \lambda_{n-k+1}\left( \lambda_k( L + \Delta) \cdot I - L - \Delta \right)=0.    
    \end{equation}
  
    Since $L=A^\top A \in \mathbb{R}^{n \times n}$ and $\rank(L)=\rank(A)=n$, $L$ is non-singular and its inverse $L^{-1}$ is well-defined. Thus $L^{-1/2}$ is non-singular and symmetric. We will use Sylvester’s law of inertia to keep the zero eigenvalue in \eqref{eq:zero_eigenvalue}.    Let $A,B\in\mathbb{R}^{n\times n}$ be two symmetric matrices. We say that $A$ and $B$ are \emph{congruent} if there exists non-singular matrix $S\in \mathbb{R}^{n\times n}$ satisfying $B=S A S^\top$.

    \begin{lemma}[Sylvester's law of inertia~\cite{franklin2012matrix}] 
    Let $A$ and $B$ be two congruent matrices. Then $A$ and $B$ have the same number of positive, negative and zero eigenvalues.
    \end{lemma}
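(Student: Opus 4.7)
The plan is to establish equality for each of the three signature counts separately, handling the positive and negative counts by a subspace-transport argument combined with the Courant--Fischer variational principle (already stated in the paper as Lemma~\ref{lemma:CFW_theorem}), and the zero count by a rank computation. Write $p(M)$, $n(M)$, and $z(M)$ for the numbers of positive, negative, and zero eigenvalues of a symmetric $M \in \mathbb{R}^{n\times n}$; these sum to $n$. I will show $z(A)=z(B)$, $p(A)=p(B)$, and deduce $n(A)=n(B)$ by subtraction.

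For the zero eigenvalues, the argument is immediate: invertibility of $S$ gives $\mathrm{rank}(B)=\mathrm{rank}(S A S^{\top})=\mathrm{rank}(A)$, hence $z(A)=n-\mathrm{rank}(A)=n-\mathrm{rank}(B)=z(B)$.

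For the positive count, the key idea is to transport a maximal positive-definite subspace of $A$ into a positive-definite subspace of $B$ using the invertible linear map $S^{-\top}$. Concretely, let $V^{+}\subseteq \mathbb{R}^{n}$ be the span of eigenvectors of $A$ corresponding to its positive eigenvalues, so $\dim V^{+}=p(A)$ and $v^{\top}A v>0$ for every non-zero $v\in V^{+}$. Set $W=S^{-\top}V^{+}$; since $S^{-\top}$ is non-singular, $\dim W=p(A)$. For any non-zero $w=S^{-\top}v \in W$, a one-line calculation $w^{\top}Bw=w^{\top}(SAS^{\top})w=(S^{\top}w)^{\top}A(S^{\top}w)=v^{\top}Av>0$ shows $B$ is positive definite on a subspace of dimension $p(A)$, so by Courant--Fischer $p(B)\ge p(A)$. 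Running the identical argument with the roles swapped, using that $A=S^{-1}B S^{-\top}$ is itself a congruence, yields $p(A)\ge p(B)$, hence equality.

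The negative count then follows either by re-running the same subspace-transport argument with $-A$ and $-B$ in place of $A$ and $B$ (they remain congruent via the same $S$), or more cheaply by subtracting the other two counts from $n$. The main (essentially the only) obstacle is choosing the correct change of variable: one must push subspaces forward by $S^{-\top}$ rather than by $S$ or $S^{\top}$, so that the quadratic form $w\mapsto w^{\top}Bw$ pulls back to $v\mapsto v^{\top}Av$ exactly. Once this is fixed, the proof collapses to the observation that the restriction of the quadratic form $B$ to $S^{-\top}V^{+}$ is isometric to the restriction of $A$ to $V^{+}$, and Courant--Fischer does the rest.
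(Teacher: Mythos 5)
Your proof is correct, but note that the paper itself gives no proof of this lemma: Sylvester's law of inertia is cited from \cite{franklin2012matrix} and used as a black box in the third proof of Lemma~\ref{lemma:laplacian_eigenvalues_perturbation}, so there is no in-paper argument to compare against.

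That said, the argument you supply is the standard textbook proof and it is complete. The rank computation correctly gives $z(A)=z(B)$, the change of variables $w=S^{-\top}v$ is the right one (it makes $w^{\top}Bw = (S^{\top}w)^{\top}A(S^{\top}w)=v^{\top}Av$ exactly, with no extra factors), and pushing the positive eigenspace of $A$ forward by $S^{-\top}$ produces a $p(A)$-dimensional subspace on which $B$ is strictly positive, whence $\lambda_{n-p(A)+1}(B)>0$ by the max--min form of Courant--Fischer (Lemma~\ref{lemma:CFW_theorem}) and compactness of the unit sphere of that subspace, so $p(B)\ge p(A)$. The reverse inequality comes from $A=S^{-1}B S^{-\top}$, and $n(A)=n(B)$ follows by subtraction. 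One could alternatively phrase the positive/negative counts as a single dimension-counting argument (a maximal positive-definite subspace for $B$ cannot intersect a maximal nonpositive-semidefinite subspace for $B$ except at $0$), but your two-sided Courant--Fischer route is equally clean and dovetails nicely with machinery the paper already has on hand.
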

  
    By Sylvester's law of inertia, the number of positive, negative, and zero eigenvalues of $\lambda_k( L + \Delta) \cdot I - L - \Delta$ and $L^{-1/2} \cdot \big( \lambda_k( L + \Delta) \cdot I - L - \Delta \big) \cdot L^{-1/2}$ are the same. This implies
        \begin{equation}\label{eq:one_eigenvalue}
                \lambda_{n-k+1}\left( L^{-1/2} \cdot \big( \lambda_k( L + \Delta) \cdot I - L - \Delta \big) \cdot L^{-1/2} \right)=0 \text{ from } \eqref{eq:zero_eigenvalue}.          
        \end{equation}

    Now we simplify the matrix
    \[
    L^{-1/2} \cdot \big( \lambda_k( L + \Delta) \cdot I - L - \Delta \big) \cdot  L^{-1/2} = \lambda_k(L+\Delta) \cdot L^{-1} - I - L^{-1/2} \cdot \Delta \cdot L^{-1/2}.
    \]
    After removing $I$, \eqref{eq:one_eigenvalue} implies
    \begin{equation} \label{eq:eigenvalue_1}
        \lambda_{n-k+1} \big( \lambda_k(L+\Delta) \cdot L^{-1} - L^{-1/2} \cdot \Delta \cdot L^{-1/2} \big) = 1.
    \end{equation}

    


    To finish the proof, we state the following version of the Weyl's inequality about the interlacing eigenvalues.

    \begin{lemma}[Weyl's inequality \cite{franklin2012matrix}]\label{lem:weyl_lemma}
        Let $A, B, A+B$ be $n \times n$ Hermitian matrices, with their respective eigenvalues $\mu _{i},\,\nu _{i},\,\rho _{i}$ ordered as follows: 
        \[
        A+B: \quad \mu _{1} \le \cdots \le \mu _{n}, \qquad A: \quad \nu _{1} \le \cdots \le \nu _{n}, \quad B: \qquad \rho _{1} \le \cdots \le \rho _{n}. 
        \]
        
        Then the following inequalities hold:
        \[
        \mu_{j} \ge \nu_{i}+\rho_{j-i+1} \text{ for } i \le j,  \qquad \text{ and }
        \mu_{j} \le \nu_{i}+\rho_{j-i+n} \text{ for } i \ge j.
        \]
    \end{lemma}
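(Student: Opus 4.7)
The plan is to prove both inequalities via the Courant--Fischer--Weyl min-max characterization of eigenvalues (Lemma~\ref{lemma:CFW_theorem}) together with a dimension-counting argument on the eigenspaces of $A$ and $B$. This is the standard textbook route and requires no new tools beyond what the paper has already introduced.

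For the first inequality, fix $i \le j$ and let $\{u_k\}$, $\{v_k\}$ be orthonormal eigenbases of $A$ and $B$ with $A u_k = \nu_k u_k$ and $B v_k = \rho_k v_k$. I would introduce the subspaces $V_A := \mathrm{span}\{u_i, u_{i+1}, \ldots, u_n\}$ and $V_B := \mathrm{span}\{v_{j-i+1}, v_{j-i+2}, \ldots, v_n\}$, of dimensions $n - i + 1$ and $n - j + i$ respectively. The elementary formula $\dim(V_A \cap V_B) \ge \dim V_A + \dim V_B - n$ then yields $\dim(V_A \cap V_B) \ge n - j + 1$, so I can pick an $(n-j+1)$-dimensional subspace $S \subseteq V_A \cap V_B$. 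For any unit vector $x \in S$ the Rayleigh quotients obey $x^\top A x \ge \nu_i$ (since $x \in V_A$) and $x^\top B x \ge \rho_{j-i+1}$ (since $x \in V_B$), so $x^\top (A+B) x \ge \nu_i + \rho_{j-i+1}$. The max-min form
\[
    \mu_j \;=\; \max_{\dim T = n-j+1}\ \min_{x \in T,\; \|x\|=1}\, x^\top (A+B)\, x
\]
then delivers $\mu_j \ge \nu_i + \rho_{j-i+1}$.

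For the second inequality, with $i \ge j$, I would mirror the argument using the complementary min-max characterization. Set $V_A' := \mathrm{span}\{u_1, \ldots, u_i\}$ of dimension $i$ and $V_B' := \mathrm{span}\{v_1, \ldots, v_{j-i+n}\}$ of dimension $j - i + n$ (both subscripts lie in $[1,n]$ because $i \ge j$). The same dimension formula gives $\dim(V_A' \cap V_B') \ge j$, and on any $j$-dimensional subspace $S' \subseteq V_A' \cap V_B'$ every unit $x$ satisfies $x^\top A x \le \nu_i$ and $x^\top B x \le \rho_{j-i+n}$, hence $x^\top(A+B)x \le \nu_i + \rho_{j-i+n}$. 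The characterization $\mu_j = \min_{\dim T = j} \max_{x \in T,\,\|x\|=1} x^\top (A+B) x$ then produces the claimed upper bound. An equally clean alternative is to apply the first inequality to $-A$ and $-B$ (whose ordered eigenvalues are the reversed negatives) and translate the resulting estimate on $-\mu_{n-j+1}$ back to $\mu_j$.

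The main difficulty is purely bookkeeping: verifying that the shifted indices $j-i+1$ and $j-i+n$ stay in $[1,n]$ under the respective hypotheses $i \le j$ and $i \ge j$, and that the Rayleigh-quotient inequalities point the right way on each of the two chosen eigenspaces. Conceptually nothing beyond Courant--Fischer and the one-line dimension count is needed, so I expect no genuine obstacle.
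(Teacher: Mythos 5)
Your proof is correct and is the standard textbook argument; the paper itself does not prove this lemma but cites it directly to \cite{franklin2012matrix}, so there is no paper proof to compare against. Both the dimension count $\dim(V_A\cap V_B)\ge \dim V_A+\dim V_B-n$ and the two Rayleigh-quotient bounds are applied in the right directions, the shifted indices stay in $[1,n]$ under the stated hypotheses, and the Courant--Fischer characterizations you invoke (in both max-min and min-max form) match Lemma~\ref{lemma:CFW_theorem} of the paper.
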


    To apply Lemma~\ref{lem:weyl_lemma}, we set $A = \lambda_{k}(L + \Delta) \cdot L^{-1}$ and $B = -L^{-1/2} \cdot \Delta \cdot L^{-1/2}$. From \eqref{eq:eigenvalue_1},
    \[
    A+B = \lambda_k(L+\Delta) \cdot L^{-1} - L^{-1/2} \cdot \Delta \cdot L^{-1/2} \text{ has } \mu_{n-k+1} = 1.
    \]
    Then we apply Lemma~\ref{lem:weyl_lemma} to the eigenvalue $\mu_{n-k+1}=1$ with $i=j=n-k+1$ to conclude:
        \begin{equation}\label{eq:eigenvalues_relation}
            1 \ge \nu_i(A) + \rho_1(B) \text{ and } 1 \le \nu_i(A) + \rho_n(B).
        \end{equation}
    Next, we compute those eigenvalues. Observe that $A=\lambda_{k}(L + \Delta) \cdot L^{-1}$ has
    \[
    \nu_i(A)=\lambda_{k}(L+\Delta) \cdot \lambda_{i}(L^{-1})=\lambda_k(L+\Delta) /\lambda_{n-i+1}(L) = \lambda_k(L+\Delta)/\lambda_k(L),
    \]
    where we use $\lambda_{i}(L^{-1})=1/\lambda_{n-i+1}(L)$ in the first step.
    
    We plug the definition $\Delta=-\mathbf{a}_\ell \cdot \mathbf{a}_\ell^{\top}$ into $B= -L^{-1/2} \Delta L^{-1/2}$ to get
    $
    B=-L^{-1/2} \cdot (- \mathbf{a}_\ell \cdot \mathbf{a}_\ell^{\top}) \cdot L^{-1/2} = (L^{-1/2} \cdot \mathbf{a}_\ell) 
    \cdot (L^{-1/2} \cdot \mathbf{a}_\ell)^{\top}
    $.
    So $B$ is of rank $1$ whose eigenvalues $\rho_1=\rho_2=\cdots=\rho_{n-1}=0$ and $\rho_n=\Tr(B)=\Tr\big( (L^{-1/2} \cdot \mathbf{a}_\ell)^{\top} \cdot (L^{-1/2} \cdot \mathbf{a}_\ell) \big)=\mathbf{a}_\ell^{\top} \cdot L^{-1} \cdot \mathbf{a}_\ell=\tau_\ell$.

    Finally, we plug the above calculations into \eqref{eq:eigenvalues_relation}. So
    \begin{align*}
        1 & \ge v_i(A) + \rho_1(B)=\lambda_k(L+\Delta)/\lambda_k(L)+0 \quad \text{ implies } \quad \lambda_k(L+\Delta)\le \lambda_k(L); \\
        \text{ and }     1 & \le \lambda_k(L+\Delta)/\lambda_k(L) + \tau_\ell \quad \text{ implies }  \quad\lambda_k(L+\Delta) \ge (1-\tau_\ell)\lambda_k(L).
    \end{align*}
\end{proofof}

\end{document}